%%
%% Copyright 2007, 2008, 2009 Elsevier Ltd
%%
%% This file is part of the 'Elsarticle Bundle'.
%% ---------------------------------------------
%%
%% It may be distributed under the conditions of the LaTeX Project Public
%% License, either version 1.2 of this license or (at your option) any
%% later version.  The latest version of this license is in
%%    http://www.latex-project.org/lppl.txt
%% and version 1.2 or later is part of all distributions of LaTeX
%% version 1999/12/01 or later.
%%
%% The list of all files belonging to the 'Elsarticle Bundle' is
%% given in the file `manifest.txt'.
%%

%% Template article for Elsevier's document class `elsarticle'
%% with numbered style bibliographic references
%% SP 2008/03/01
%%
%%
%%
%% $Id: elsarticle-template-num.tex 4 2009-10-24 08:22:58Z rishi $
%%
%%
\documentclass[preprint,12pt]{elsarticle}

%% Use the option review to obtain double line spacing
%% \documentclass[preprint,review,12pt]{elsarticle}

%% Use the options 1p,twocolumn; 3p; 3p,twocolumn; 5p; or 5p,twocolumn
%% for a journal layout:
%% \documentclass[final,1p,times]{elsarticle}
%% \documentclass[final,1p,times,twocolumn]{elsarticle}
%% \documentclass[final,3p,times]{elsarticle}
%% \documentclass[final,3p,times,twocolumn]{elsarticle}
%% \documentclass[final,5p,times]{elsarticle}
%% \documentclass[final,5p,times,twocolumn]{elsarticle}

%% if you use PostScript figures in your article
%% use the graphics package for simple commands
%% \usepackage{graphics}
%% or use the graphicx package for more complicated commands
%% \usepackage{graphicx}
%% or use the epsfig package if you prefer to use the old commands
%% \usepackage{epsfig}

\usepackage{stmaryrd}
\usepackage{graphicx}
\usepackage{amsthm}
\usepackage{subfig}
\usepackage{amsmath}
\usepackage{algorithm}
\usepackage{algorithmic}
\usepackage{nicefrac}
\usepackage{multirow}
\usepackage{paralist}
\usepackage{enumerate}
\usepackage{url}
\usepackage[all]{xy}
\usepackage{tikz}
% oder was auch immer
\usepackage[utf8x]{inputenc}

%% The amssymb package provides various useful mathematical symbols
\usepackage{amssymb}
%% The amsthm package provides extended theorem environments
%% \usepackage{amsthm}

%% The lineno packages adds line numbers. Start line numbering with
%% \begin{linenumbers}, end it with \end{linenumbers}. Or switch it on
%% for the whole article with \linenumbers after \end{frontmatter}.
%% \usepackage{lineno}

%% natbib.sty is loaded by default. However, natbib options can be
%% provided with \biboptions{...} command. Following options are
%% valid:

%%   round  -  round parentheses are used (default)
%%   square -  square brackets are used   [option]
%%   curly  -  curly braces are used      {option}
%%   angle  -  angle brackets are used    <option>
%%   semicolon  -  multiple citations separated by semi-colon
%%   colon  - same as semicolon, an earlier confusion
%%   comma  -  separated by comma
%%   numbers-  selects numerical citations
%%   super  -  numerical citations as superscripts
%%   sort   -  sorts multiple citations according to order in ref. list
%%   sort&compress   -  like sort, but also compresses numerical citations
%%   compress - compresses without sorting
%%
%% \biboptions{comma,round}

% \biboptions{}

\journal{Information and Computation}

\def\De{\Delta}

\def\mapping{\pi}

\def\mapping{\mathcal{PA}}
\def\WW{\mathcal{W}}
\def\BR{\mathbb{R}}
\def\somesymbol{\ast}
\def\TANGSTATES{S_{\mathit{nn-tangible}}}
\def\newsymbol{\mathfrak{C}}

\newcommand{\Dist}[1]{\mathrm{Dist}(#1)}
\newcommand{\SubDist}[1]{\mathrm{SubDist}(#1)}
\newcommand{\transitionRelationSymbol}{{T}}
\newcommand{\aut}[1][P]{{#1}}
\newcommand{\transitionRelationI}{\transitionRelation}%{\mapr}
\newcommand{\transitionRelation}{\mathit{\transitionRelationSymbol}}
\newcommand{\stateSet}{S}
\newcommand{\startState}[1][s]{{#1}_0}%\bar{#1}}
\newcommand{\actionSet}{Act}

\newcommand\aPA[1][]{\ensuremath{\aut#1 = (\stateSet#1, \actionSet#1, \transitionRelationI#1, \emptyset, \startState#1)}}

\usetikzlibrary{calc}
\pgfarrowsdeclaretriple{<<<}{>>>}{latex}{latex}%

\newcommand{\mapr}[1]
{\mathrel{\raisebox{0.38ex}{\begin{tikzpicture}[>=stealth]
\node[ inner sep=0pt,minimum width=5pt] (textbox) at (0,0)
{\scriptsize\ensuremath{#1}};
\coordinate (korrektur) at (0,-1pt);
\coordinate (end) at ($(textbox.south east)+(6pt,0) + (korrektur)$);
\draw ($(textbox.south east) + (korrektur)$) edge[-latex] (end);
\draw ($(textbox.south west)+(-4pt,0) + (korrektur)$) -- ($(textbox.south east) + (korrektur) $);
\fill ($(textbox.south east)+(-1pt,0) + (korrektur)$) circle (1pt);
\end{tikzpicture}}}
}

\newcommand{\mati}[1]
{\mathrel{\raisebox{0.55ex}{\begin{tikzpicture}[>=stealth]
\node[ inner sep=0pt,minimum width=5pt] (textbox) at (0,0)
{\scriptsize\ensuremath{#1}};
\coordinate (korrektur) at (0,-1pt);
\coordinate (end) at ($(textbox.south east)+(6pt,0) + (korrektur)$);
\draw ($(textbox.south east) + (korrektur)$) edge[->>>] (end);
\draw ($(textbox.south west)+(-4pt,0) + (korrektur)$) -- ($(textbox.south east) + (korrektur) $);
\end{tikzpicture}}}
} 

\newtheorem{lemma}{Lemma}
\newtheorem{example}{Example}
\newtheorem{remark}{Remark}
\newtheorem{corollary}{Corollary}

\newtheorem{definition}{Definition}

\newtheorem{theorem}{Theorem}

% Replacements
\newcommand{\transitionReplacement}[2]{#1(#2)}
\newcommand{\replace}[2]{#1{[#2]}}
\newcommand{\replacement}[2]{#2/#1}

% Generic probability distributions
\newcommand{\sd}{\nu} % states distribution
 % generic distribution
 % intermediate distribution

% Probability evaluation
\newcommand{\probeval}[2]{#1(#2)}

\newcommand{\transitionsCombinedFromState}[1]{\newsymbol(#1)}

%\newcommand{\transitionRelation}{\mathit{\transitionRelationSymbol}}

% Set with right spacing
\newcommand{\setcond}[2]{\{\, #1 \mid #2 \,\}}
\newcommand{\setnocond}[1]{\{#1\}}

% Internal action
\newcommand{\hidden}{\tau}

\newcommand{\weakCombinedTransition}[3]{#1 \mystackrel{#2}{\Longrightarrow}_{\combined} #3}
\newcommand{\strongTransition}[3]{#1 \mystackrel{#2}{\longrightarrow} #3}

\newcommand{\combined}{\mathrm{C}}

\newcommand{\mystackrel}[2]{\stackrel{#1}{#2}}

% Vanishing automaton
\newcommand{\locallyChangedAut}[3][P]{{#1}_{(#2,#3)}}

\newcommand{\Supp}[1]{\mathrm{Supp}(#1)}

\newcommand{\weakBisimD}{\approx_\Delta}

% Inline comments
\newboolean{useComments}
% To enable/disable inline comments, do NOT change here, change the use-comments file (by editing it or by using the makefile).
%\input{use-comments}
\setboolean{useComments}{true}
\ifthenelse{\boolean{useComments}}{%
        \newcommand{\todo}[1]{\textcolor{olive}{ TODO: {#1}}}
	\newcommand{\ms}[1]{{\color{red}\texttt{ M S: #1 :M S }}}
	\newcommand{\js}[1]{{\color{orange}\texttt{ J S: #1 :J S }}}
}{%
	\newcommand{\todo}[1]{}%{\textcolor{olive}{{#1}}}
	\newcommand{\ms}[1]{}%{{\color{red}\texttt{ H H: #1 :H H }}}
	\newcommand{\js}[1]{}%{{\color{orange}\texttt{ J S: #1 :J KS }}}
}

\graphicspath{{.}}

\begin{document}

\begin{frontmatter}

%% Title, authors and addresses

%% use the tnoteref command within \title for footnotes;
%% use the tnotetext command for the associated footnote;
%% use the fnref command within \author or \address for footnotes;
%% use the fntext command for the associated footnote;
%% use the corref command within \author for corresponding author footnotes;
%% use the cortext command for the associated footnote;
%% use the ead command for the email address,
%% and the form \ead[url] for the home page:
%%
%% \title{Markov Automata: Deciding weak bisimulation by means of non-na\"ively vanishing states\tnoteref{label1}}
%% \tnotetext[label1]{}
%% \author{Johann Schuster\corref{cor1}\fnref{label2}}
%% \ead{johann.schuster@unibw.de}
%% \ead[url]{home page}
%% \fntext[label2]{}
%% \cortext[cor1]{}
%% \address{University of the Federal Armed Forces Munich\fnref{label3}}
%% \fntext[label3]{}

\title{Markov Automata: Deciding weak bisimulation by means of non-na\"ively vanishing states}

%% use optional labels to link authors explicitly to addresses:
\author[unibw]{Johann Schuster}
\author[unibw]{Markus Siegle}
\address[unibw]{University of the Federal Armed Forces Munich}
%% \address[label2]{<address>}

%\author{}

%\address{}

\begin{abstract}
%% Text of abstract
%\input{abstract}
%\begin{abstract}
%\ms{mininal veraendert}\js{OK}
This paper develops a decision algorithm for weak bisimulation
on Markov Automata (MA).
For this purpose, different notions of vanishing states (a concept
originating from the area of Generalised Stochastic Petri Nets) are defined.
In particular, 
%\js{hier non- eingefuegt!}\js{Denke, das passt.}
non-na\"ively vanishing states are shown to be essential for relating the concepts
of (state-based) na\"{i}ve weak bisimulation and (distribution-based) weak bisimulation.
The bisimulation algorithm presented here follows the partition-refinement scheme and has exponential time complexity.
%We give definitions of vanishing and tangible states for Markov Automata (MA), give some pre-calculation
%algorithms for weak MA bisimulation decision algorithms and give an algorithm that decides weak MA
%bisimilarity.
%We give a definition for vanishing states in Markov Automata (MA) and show that trivially vanishing states can be
%eliminated in advance before checking weak bisimulation.
%\end{abstract}
\end{abstract}

\begin{keyword}
%% keywords here, in the form: keyword \sep keyword
Markov automata \sep weak bisimulation \sep vanishing state \sep elimination
%% MSC codes here, in the form: \MSC code \sep code
%% or \MSC[2008] code \sep code (2000 is the default)

\end{keyword}

\end{frontmatter}

%%
%% Start line numbering here if you want
%%
%% \linenumbers

%\todo{Define the lifting of a state-based relation to distributions}
%\todo{Change \emph{every} occurrence of tangible to nn-tangible (ecxept the basic definition as opposite of vanishing)}

%\input{intro}
%\input{intro_new}

\section{Introduction}
%\todo{Note that the algorithm is still exactly the same as in \cite{schuster:13}. But the main difference is
%that in \cite{schuster:13} we relied on Thm.~2 of \cite{lics:10,avacsreport} in order to show that the 
%algorithm is correct. We removed this dependency as Lemma 16 and therefore Thm.~2 of \cite{lics:10,avacsreport} must
%be considered unproven, as we show in \ref{sec:saarbruecken_continuous}}

Markov Automata (MA) are a powerful formalism for modelling systems with nondeterminism, probability and continuous time. 
The weak bisimulation relation for MA \cite{lics:10,avacsreport} is not a relation
on the set of states, but rather a relation on the set of subdistributions
over states.
This is the reason, why it is not obvious how to develop an algorithm
for deciding distribution-based weak bisimulation for MA, and this is exactly the topic
of the present paper.

Our approach carries over some intuition from the area of Generalised Stochastic Petri Net \cite{marsan:95} to the MA setting. There, vanishing markings
are eliminated in order to minimise the number of reachable markings and to enable the subsequent steps of numerical analysis.
A basic example of a GSPN is given in Fig.~\ref{fig:GSPN_example}. It consists of the places $p_1$ to $p_4$,
an exponentially distributed transition $t_1$ and the immediate transitions $t_2$ and $t_3$. We assume that the 
weights of the immediate transitions have already been transformed to
probabilities. The 
resulting Labelled Transition System -- including both exponential and probabilistic transitions --
of the 
reachable markings is shown in Fig.~\ref{fig:markings}: the solid arc defines
%\js{changed ``probabilistic timed transition'' to}
%\ms{geandert in ``exponential trans''}
%\js{OK}
an exponential transition with rate $\lambda$,
the dashed arcs denote the immediate transitions driven by probabilities $(1-a)$ and $(a)$.
After elimination of marking $(0,1,0,0)$ we obtain the transition system in Fig.~\ref{fig:markings_eliminated}.
In GSPN terminology, the
state corresponding to marking $(0,1,0,0)$ is called ``vanishing'',
whereas in this paper, where we develop a more detailed
classification of vanishing states, it will be denoted \emph{trivially vanishing}.
%In order to reach at a decision algorithm, we need a more general definition of vanishing states that is able
%to capture also the nondeterministic context of a Markov Automaton. The more general definition will be called \emph{vanishing}.
%Within the vanishing states, there is a special subset that makes the difference between distribution- and state-based bisimulation.
%This special type of states will be called \emph{nn-vanishing}.

\begin{figure}
  \centering
  \subfloat[GSPN example]{\label{subfig:GSPN_example}\includegraphics[width=5cm]{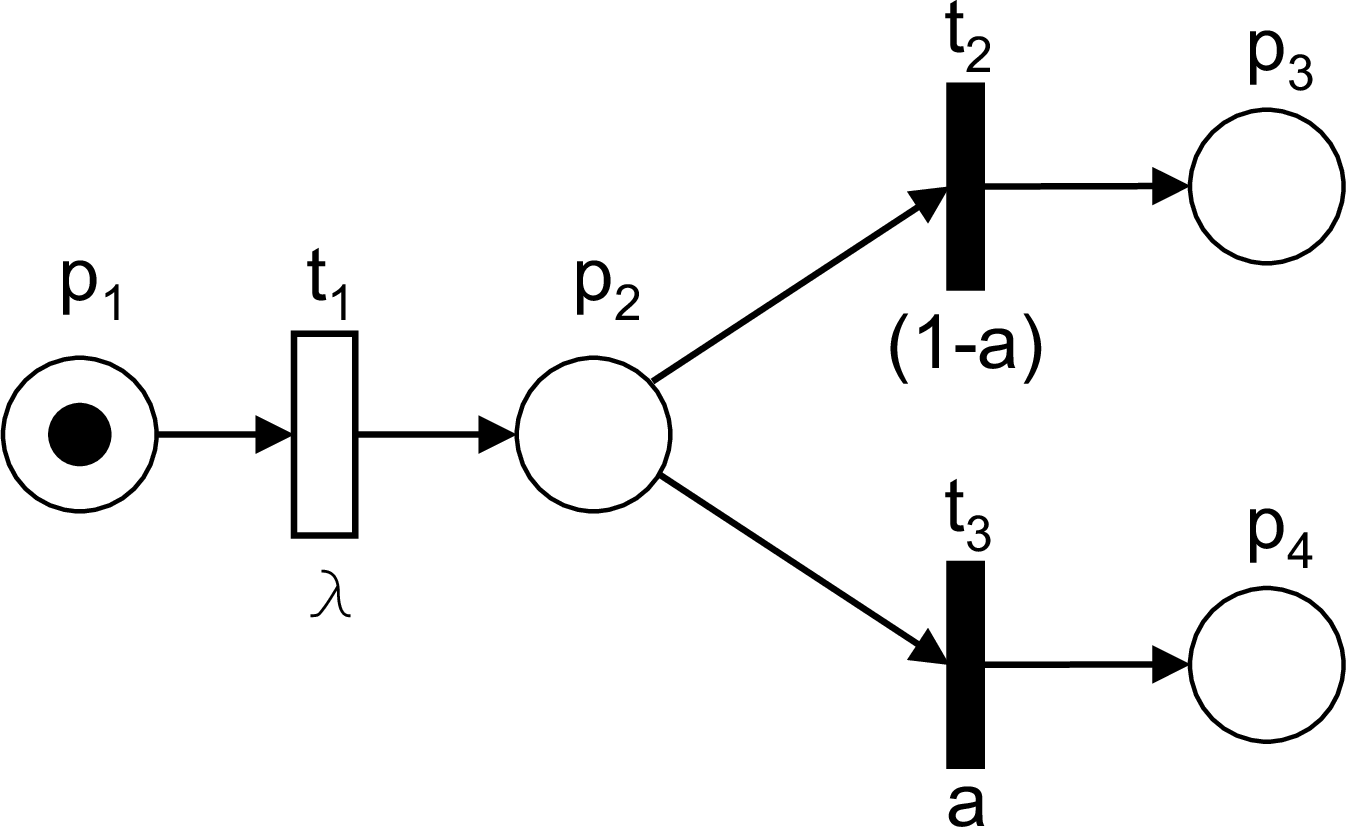}} \\
  \subfloat[Reachable markings]{\label{fig:markings}\includegraphics[height=1.5cm]{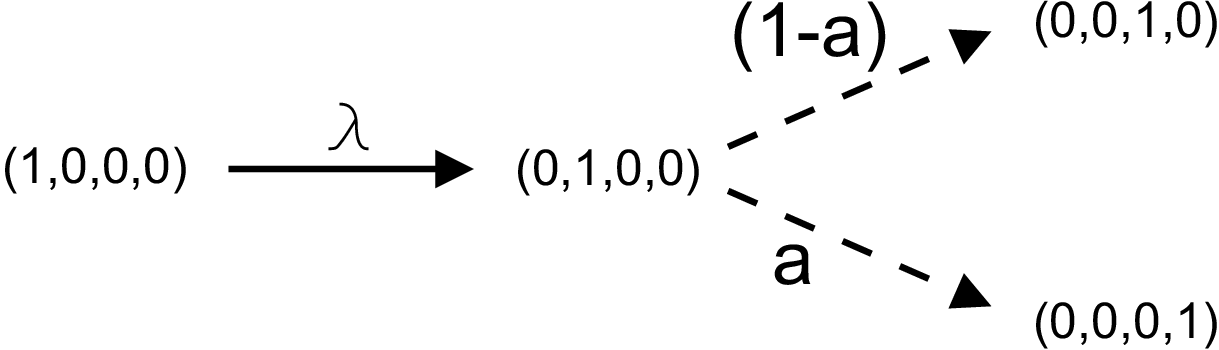}} \qquad
  \subfloat[After elimination]{\label{fig:markings_eliminated}\includegraphics[height=1.5cm]{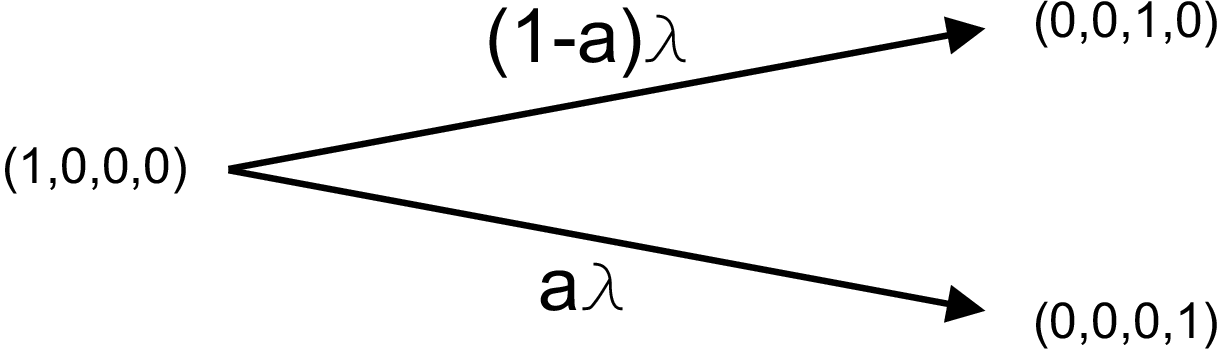}}
  \caption{Example of a GSPN
%\ms{Ueblicherweise werden bei GSPN die exp transitions als unausgefuellte Rechtecke und die imm transitions als schwarze Balken dargestellt}
%\js{OK. Danke. Hatte ich verwechselt.}
}
  \label{fig:GSPN_example}
\end{figure}

With this intuition, we are able to define vanishing states in the
nondeterministic context of MA.
We provide a topological characterisation of a special kind of states that is equivalent to a ``real''
%\ms{wie waer's mit ``non-trivial'' statt ``real''?}\js{wenn Dir das besser gefaellt: Von mir aus. Dann aber im Rest vom Papier auch aendern! Kommt zumindest in Th.1 vor.}
 distribution, i.e.~a distribution consisting 
of at least two different classes with respect to some equivalence relation.
It will turn out that this characterisation, that we call \emph{non-na\"ively vanishing (nn-vanishing)}, is sufficient for calculating state-minimal 
normal forms of Markov Automata. With the aid of this characterisation we are able to give a decision algorithm 
for weak MA bisimulation.

In contrast to distribution-based weak bisimulation,
decision algorithms for  na\"ive weak bisimulation on MA have been known for some time. Since na\"ive 
weak bisimulation on MA \cite{lics:10} corresponds to weak probabilistic bisimulation on Probabilistic Automata (PA), 
na\"ive weak MA bisimulation is known to be decidable since 2002 \cite{segala:02}. There, an exponential time algorithm was presented.
In 2012 a polynomial time algorithm has been presented for deciding na\"ive weak MA bisimulation \cite{turrini:12}. 

Our algorithm is built upon the algorithm in \cite{segala:02}
which is a partition refinement algorithm.
%\ms{letzte 6 Woerter eingefuegt}
%\js{OK}
The main difference is that for every partition
of the state space we first
identify nn-vanishing classes of states.
%distinguish between nn-tangible and nn-vanishing classes.
%\ms{umformuliert, da nn-tang noch nicht eingefuehrt}
%\js{OK}
Before we split the current partition, we ``virtually''
%\ms{warum eigentlich ``virtually''?}
 eliminate all states 
that belong to nn-vanishing classes,
%\js{der Versuch einer Erklaerung:}
i.e.~we only consider restricted %\ms{``restricted'' eingefuegt}\js{OK}
probability distributions where the 
probability of every nn-vanishing state is equal to zero.
On this ``reduced''
%\js{habe hier Anfuehrungszeichen gesetzt, da die Zustaende ja alle drin bleiben} 
transition system, we run the algorithm of \cite{segala:02} 
%(considering also the Segala/Cattani sets of eliminated states).
(on \emph{all} states, considering also nn-vanishing states which are 
sufficiently identified by the not nn-vanishing states they can reach).
%\ms{diese Klammer besser an den ersten Satz dieses Para anhaengen?}
%\js{eigentlich nicht, da die Klammer erst mit der Eliminations-Geschichte Sinn macht. Hab die Klammer etwas abgeaendert - alte Version als Kommentar}
This basic scheme of the algorithm is depicted in Fig.~\ref{fig:algo_schematic}.

Our algorithm has exponential time complexity (the result of \cite{turrini:12}
does not seem to be applicable to the weak case, as the na\"ive weak bisimulation problems after speculative eliminations 
have an exponential number of transitions in contrast to the original weak bisimulation problem).

By its generality, using only minor changes it can also be applied to the case of the MA bisimulation recently defined in \cite{deng-hennessy:2011}.

More or less at the same time to our approach \cite{schuster:13}, a different approach for a decision algorithm has been given in \cite{qest:13}.
A comparison of the two approaches is given in Sec.~\ref{sec:saarbruecken_small}.

\begin{figure}
\begin{center}
\includegraphics[width=10cm]{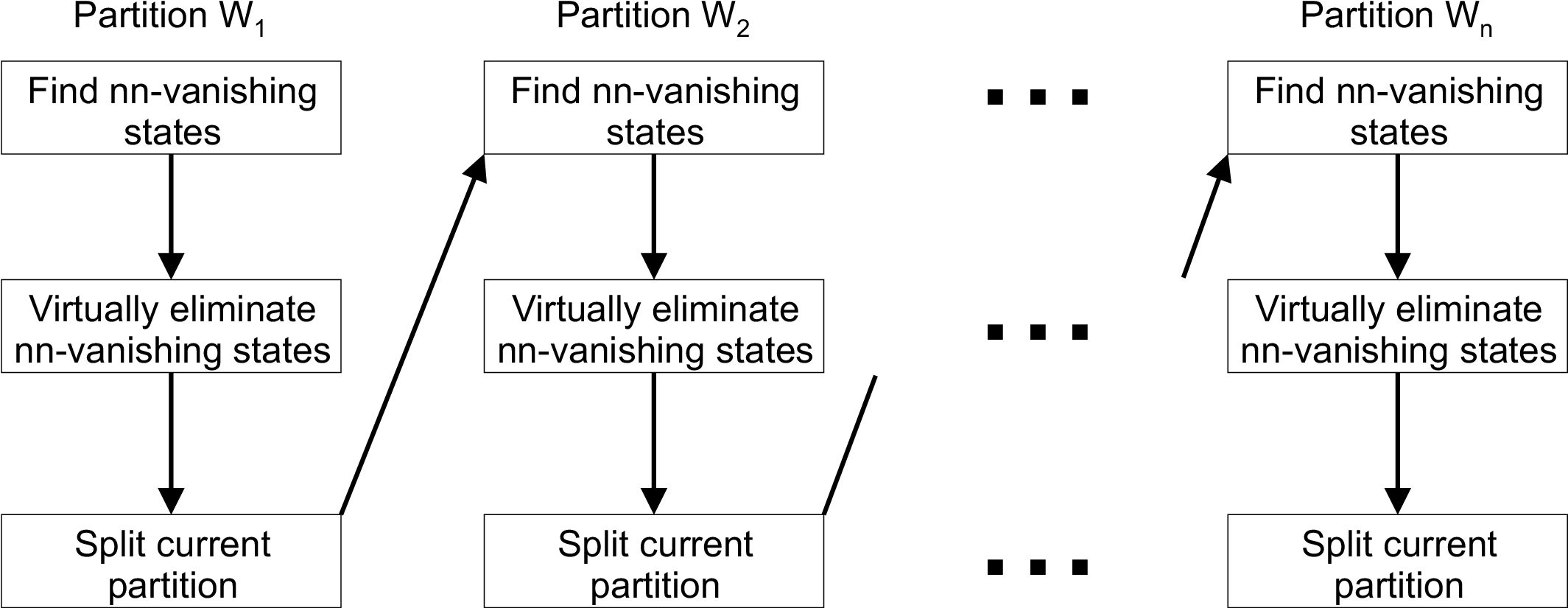}
\caption{Proposed algorithm
%\ms{warum wird zwischen den Schritten 3 und 4 unterschieden? Schritt 3 kommt raus.}
}
\label{fig:algo_schematic}
\end{center}
\end{figure}

This paper is a major rework of our report \cite{schuster:13}.
%\ms{hab das jetzt als ``report'' bezeichnet. OK?}
%\js{OK}
While the algorithm is completely the same, the correctness proofs of the previous paper
relied heavily on results of \cite{lics:10,avacsreport}.
The present paper provides a new line of argumentation and new proofs,
independent of Thm.~2 of \cite{lics:10,avacsreport}
%\footnote{We show in \ref{sec:saarbruecken_continuous} why Lemma 16 of \cite{avacsreport} and therefore also Thm.~2 of \cite{lics:10,avacsreport} must be considered as unproven.}
.

%\ms{Hier gab's noch ein Durcheinander. Jetzt geradegezogen.}\js{OK. Danke.}
The paper is organised as follows: In Sec.~\ref{sec:prelim}
we present the necessary preliminaries
and recall a mapping from MA to PA from \cite{lics:10}.
%To keep the paper self-contained, we recall basic definitions for weak
%transitions in Sec.~\ref{sec:weaktrans}.
Sec.~\ref{sec:bisim} recapitulates some facts on weak and na\"ive weak bisimulation for MA. In Sec.~\ref{sec:elimrelate} we define different notions of vanishing 
states and use them to relate weak bisimulation and na\"ive weak bisimulation.
Sec.~\ref{subsec:canon_van_repres} discusses properties of 
vanishing states and provides the main theorems.
Sec.~\ref{sec:bisimalgo} describes our
decision algorithm for weak MA bisimulation that heavily relies on Sec.~\ref{sec:elimrelate}
and Sec.~\ref{subsec:canon_van_repres}.
%Our decision algorithm for weak MA bisimulation is given in Sec.~\ref{sec:bisimalgo}.
%\js{Dieses Kapitel fehlte noch in der Uebersicht:}
In Sec.~\ref{sec:deng-hennessy} we briefly discuss the applicability of our concepts and 
our decision algorithm to the weak bisimulation published 
in \cite{deng-hennessy:2011}.
In Sec.~\ref{sec:saarbruecken_small} we compare our concepts to a recently published alternative approach \cite{qest:13} for deciding weak MA bisimulation.
Finally, Sec.~\ref{sec:concl} concludes the paper.
%In the appendix we recall the definition of Markov Automata from \cite{lics:10}, modify some definitions and correct
%some non-trivial typos from 
%\cite{stochworld,lics:10,avacsreport}.
%fix some bugs present in \cite{lics:10,avacsreport,stochworld}.

\section{Preliminaries}
\label{sec:prelim}
This section introduces some common notations on distributions, defines Markov Automata following
\cite{lics:10,avacsreport} and recalls the mapping from Markov Automata to probabilistic automata
which is used in Sec.~\ref{sec:bisim} to define weak bisimulation for Markov Automata.

\subsection{Probability (Sub-)Distributions}
First we define the notion of discrete subdistribution and related terms and notations:
%\begin{definition}[(Sub-)distributions]
A mapping $\mu: S \rightarrow [0,1]$ is called (discrete) subdistribution, if
$\sum_{s\in S}\mu(s)\leq 1$. As usual we write $\mu(S')$ for $\sum_{s\in S'}\mu(s)$.
The \emph{support} of $\mu$ is defined as 
$Supp(\mu):=\setcond{s\in S}{\mu(s)>0}$.
%$Supp(\mu):=\{s\in S|\mu(s)>0\}$.
The empty subdistribution $\mu_\emptyset$ is defined by $Supp(\mu_\emptyset)=\emptyset$.
The \emph{size} of $\mu$ is defined as $|\mu|:=\mu(S)$. A subdistribution $\mu$ is
called \emph{distribution} if $|\mu|=1$. The sets $\Dist{S}$ and $\SubDist{S}$ denote
distributions and subdistributions defined over the set $S$.
Let $\De_s\in \Dist{S}$ denote the \emph{Dirac} distribution on $s$, i.e.~$\De_{s}(s)=1$.
For two subdistributions $\mu$, $\mu'$ the sum $\mu'':=\mu\oplus \mu'$ is defined
as $\mu''(s):=\mu(s)+\mu'(s)$ (as long as $|\mu''|\leq 1$).
As long as $c\cdot |\mu|\leq 1$, we denote by $c\mu$ the subdistribution 
defined by $(c\mu)(s):=c\cdot \mu(s)$. For a subdistribution $\mu$ and a state $s\in Supp(\mu)$
we define $\mu-s$ by 
$$(\mu-s)(t)=\begin{cases}
               \mu(t) & \text{ for }t\neq s \\
               0      & \text{ for }t = s
             \end{cases}$$
%\end{definition}
%
%\ms{Neue Def:}
%\js{OK}
Occasionally, we will also need the lifting of relations to distributions:
\begin{definition}[Lifting of equivalence relations to distributions]
\label{def:lifting}
An equivalence relation $R\subseteq S\times S$ is lifted to $Dist(S)\times Dist(S)$ in the following way:
For $\mu, \gamma\in Dist(S)$ we write $\mu \equiv_R \gamma$ (or simply, by abuse of notation, $\mu \mathbin{R} \gamma$) if and only
if for each equivalence class $C\in \nicefrac{S}{R}: \mu(C)=\gamma(C)$.
%Here we use $\mu(C)=\sum_{x\in C}\mu(x)$ (and similarly for $\gamma$).
\end{definition}
%\ms{Im Lattice-Paper hatten wir nicht vorausgesetzt, dass $R$ eine
%Aequivalenzrelation ist. Das braucht man aber, denn für allg. Relationen geht
%das Lifting m.E. nicht.}
%\js{scheint mir logisch. Habe im Lattice-Papier diesbezueglich ein TODO eingefuegt.}
%
\subsection{Markov and Probabilistic Automata}
The definition of Markov Automata we use is the one from \cite{lics:10,avacsreport}. 
\begin{definition}[Markov Automata \cite{lics:10}]
  \label{def:ma_lics}
  A Markov automaton MA is a tuple $(S, Act, \mapr{ }, \mati{ }, s_0)$, where 
%\ms{die Kommandos mapr und mati funktionieren offensichtlich nicht}
%\js{doch! ist geklaert, dass mit xdvi die Makros nicht sichtbar sind, im pdf durch latex->dvips->ps2pdf aber schon.}
  \begin{itemize}
    \item $S$ is a nonempty finite set of states,
    \item $Act$ is a set of actions containing the internal action $\tau$,
    \item $\mapr{}  \subseteq S\times Act \times \Dist{S}$ a set of action-labelled probabilistic transitions,
    \item $\mati{} \subseteq S \times \mathbb{R}_{\geq 0}\times S$ a set of Markovian timed transitions and 
    \item $s_0\in S$ the initial state.
  \end{itemize}
%Hereby, $Dist(S)$ denotes all distributions over $S$.
A state in a MA is called \emph{stable}
if it has no emanating $\tau$ transitions, otherwise it is called unstable. A stable state $s$ will be denoted by
$s\hspace{-0.1cm}\downarrow$.
\end{definition}

%\begin{remark}
In order to make our decision algorithm feasible we assume in the following that, in contrast to the original definition from \cite{lics:10,avacsreport}, \emph{all} 
sets in Definition~\ref{def:ma_lics} are finite. 
%\js{Reviewer 2 wollte das nochmal explizit hoeren:} 
This means that there are finitely many states, finitely many actions and finitely many transitions.
%\end{remark}

%For the sake of completeness we give the definition in Def.~\ref{def:ma_lics} of the appendix.
For simplicity we define probabilistic automata (PA) in terms of MA.
\begin{definition}
A probabilistic automaton (PA) is a MA $P=(S, Act, \mathord{\rightarrow}, \emptyset, s_0)$. We also write $P=(S, Act, \rightarrow, s_0)$
if the context is clear.
%where also nondeterministic choices between different $\chi$ transitions
%may occur.
\end{definition}
This definition corresponds to a \emph{simple} probabilistic automaton in the sense of Segala \cite{segala:95}.
%This is a simple probabilistic automaton, up to the distinction between external and internal activities.
%Note that the PA we define are also PA in the sense of Segala \cite{segala:95}, but not vice versa (\cite{segala:95} also allows for different actions within one 
%distribution). 
%We do not need
%the more general definition of Segala in this paper. 
%For the rest of the paper we assume that the set $Act$ of a PA
%may also include the special $\chi$ actions (it is not necessary to write $Act^\chi$ in the PA context).
%

For the mapping from MA to PA introduced in \cite{lics:10} we need to define the probability distribution on successor states.
%In contrast to \cite{lics:10,avacsreport,hatefi:12} we define also a successor distribution for the case $rate(s) = 0$. 
In contrast to \cite{lics:10,avacsreport,hatefi:12}, our definition of successor distribution also takes care of the case $rate(s) = 0$. 
%We want to go along the Definitions in \cite{lics:10}. 
\begin{definition}[modified\footnote{The original definition from \cite{lics:10,avacsreport,hatefi:12} is problematic, 
as for $rate(s)=0$ the fraction $\frac{0}{0}$ is not defined (this case is treated separately in our definition), 
and for infinite sets $\mati{}$ the exit rate may not converge (this case is not problematic for us, as we deal with \emph{finite} sets).
Both issues have no impact on the decision algorithm presented here, 
but the first issue has an impact on the compositionality of MA bisimulation in general. For a detailed explanation of why compositionality is lost with the original
definitions of \cite{lics:10,avacsreport,hatefi:12}
we refer to Appendix A of \cite{schuster:13}.} 
version of Definition 3 in \cite{lics:10}]
\label{def:chi0}
Let $M=(S, Act, \mathord{\mapr{ }}, \mathord{\mati{ }}, s_0)$ be a MA. Define 
$$rate(s,s'):=\sum_{(s,\lambda, s')\in \mati{}}\lambda$$
and $rate(s):=\sum_{s'\in S}rate(s,s')$ which is called the \emph{exit rate}
of state $s$.
The probability
distributions $P_s$ are defined in the following way:
$$
P_s:=\begin{cases}s'\mapsto \frac{rate(s,s')}{rate(s)} & \text{ for } rate(s) \neq 0 \\
                       \De_s                         & \text{ otherwise}
        \end{cases}
$$
%where $\De_s$ denotes the Dirac distribution on $u$.
\end{definition}
%
%
%Due to a discussion with the authors of \cite{lics:10} it became clear that
%they need zero-rates in order to keep their bisimulations compositional. Therefore we redefine:
%\begin{definition}
%Let $M=(S, Act, \mapr{ }, \mati{ }, s_0)$ be a MA. A \emph{stable} state $s\in S$ is a state with no outgoing $\tau$ transition.
%It will be denoted by $s\hspace{-0.1cm}\downarrow$.
%\end{definition}

\subsection{A mapping from MA to PA}

The remarkable idea of \cite{lics:10} is to define bisimulations on MA using a mapping from MA to PA. The basic ingredient is a set of special actions,
denoted by $\chi(.)$, that cover timed behaviour. 
%TODO: Starten mit endlichem Action set sollte auch wieder endliches Action set ergeben, oder?? Ja: Es kann nicht unendlich viele Raten geben...
In the setting of \cite{lics:10,avacsreport} countable action sets are mapped to uncountable action sets by definition, as for every
real number a new action name is introduced.
In order to keep the action set finite to retain algorithmic tractability, we redefine $Act^{\chi}$ in the context of a fixed MA:
%The next thing that is a little bit odd in \cite{lics:10} is that the $Act^{\chi}$ defined there is uncountably infinite,
%therefore we redefine:
%As the state space of MA is assumed to be finite it is natural to ask for ``not too many'' action labels. In \cite{lics:10,avacsreport}
%an uncountable set of actions is added when transforming a MA to a PA
\begin{definition}
Let $M=(S, Act, \mapr{ }, \mati{ }, s_0)$ be a MA.
Assume $\forall r\in \mathbb{R}_{\geq 0} \text{ }\chi(r)\notin Act$ and define
$\mathcal{RT}:=\{rate(s)|s\in S\}$ (which is finite). Then we define
$Act^{\chi(\mathcal{RT})}=\{\chi(r)|r\in \mathcal{RT}\}$ and
$Act^{\chi}:=Act\cup Act^{\chi(\mathcal{RT})}$.
\end{definition}

There is a mapping from MA to PA %that we cite here 
(adapted from \cite{lics:10}) where we use Definition \ref{def:chi0}:
%In \cite{lics:10} there is a mapping 
\begin{definition}
Let $M=(S, Act, \mapr{ }, \mati{ }, s_0)$ be a MA. Define the transitions $\rightarrow$ as follows:
For $s\in S$ define
\[
s \stackrel{\alpha}{\rightarrow} \mu\text{ if }\begin{cases} \alpha \in Act \text{ and } s\mapr{\alpha} \mu  \\
                                      s\hspace{-0.1cm}\downarrow, \alpha=\chi(rate(s)) \in Act^{\chi(\mathcal{RT})} \text{ and } \mu=P_s
                        \end{cases}   
\]
Then the mapping $\mapping : MA \rightarrow PA$ is defined by $M\mapsto (S, Act^\chi, \rightarrow, \emptyset, s_0)$.
\end{definition}

Note that every timed transition is part of a special $\chi(\cdot)$ action. So the set of actions is increased by the mapping $\mapping (\cdot)$, but no
timed transition remains in the image. For more details on the procedure we refer to \cite{lics:10}.
%\js{war eine Wiederholung, daher entfernt:}
%Note further that this is not always well defined. If you have infinitely many Markovian transitions emanating from one state, the exit rate does not
%need to converge. As a result you cannot define the corresponding probability distribution correctly.
%
%\begin{remark}
%By definition modulo weak bisimulation this mapping is an embedding, as the 
%weak bisimulation relation on MA
%equivalence relation on MA 
%is defined 
%in terms of the corresponding PA.
%by the equivalence relation on PA.
%Note that no nondeterministic choice between $\chi$ actions can occur according to the definition of $P(u,\cdot)$.
%Note that for a MA $M$ the action set of $\mapping (M)$ may also include the special $\chi(\cdot)$ actions. But as we never calculate $\mapping (\mapping (M))$ in this paper,
%we do not get into trouble with this.
%\end{remark}

%\vspace{-0.5cm}
\begin{figure}
  \centering
  \subfloat[\mbox{$M_1=\mapping(M_1)$}]{\hspace{0.7cm}\label{fig:labelled}\includegraphics[height=1.4cm]{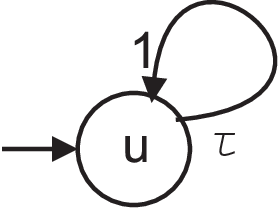}\hspace{0.4cm}} \qquad %$=\mapping$ ($M_1$)
  \subfloat[$M_2$]{\label{fig:faith}\includegraphics[height=1.4cm]{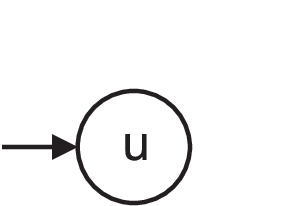}} %\\ %\qquad \\             
  \subfloat[$\mapping$($M_2$)]{\label{fig:nonfaith}\includegraphics[height=1.4cm]{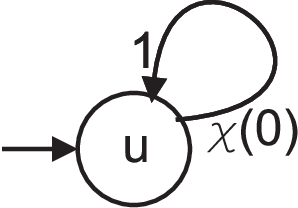}}
  %\subfloat[$M_3$]{\label{fig:trap2}\includegraphics[height=1.4cm]{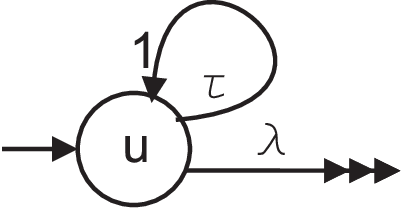}} \\
  %\subfloat[PA which is not in $\mapping (MA)$]{\label{fig:manotinpa}\includegraphics[width=7.5cm]{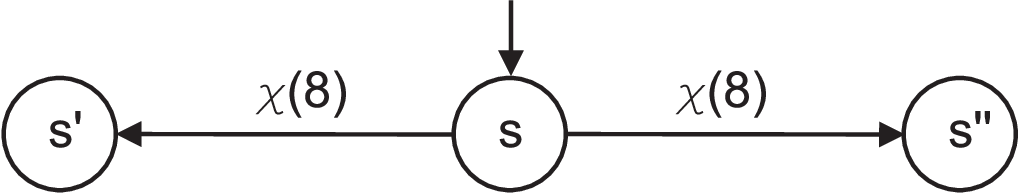}}
  %\subfloat[$M_3$]{\label{fig:timedlambda}\includegraphics[height=1.4cm]{timedlambda}} \\
  \caption{MA to PA transformations}
  \label{fig:mapa}
\end{figure}
%\vspace{-1cm}

\begin{example}
Two %\js{Hier basic examples in special cases geaendert:} 
special cases are given in Fig.~\ref{fig:mapa}. $M_1$ (Fig.~\ref{fig:labelled}) has a $\tau$ loop, so no timed transition (i.e.~$\chi(\cdot)$) 
exists after the transformation, i.e.~
$\mapping (M_1)=M_1$. In the example $M_2$ (Fig.~\ref{fig:faith}) $u$ is a stable state and therefore the transformation 
leads to a $\chi$ transition with exit rate 0 (Fig.~\ref{fig:nonfaith}).
\end{example}

%We will see in Sec.~\ref{sec:bisim} that this mapping is used to define bisimulations on MA.

%\vspace{-0.5cm}
\begin{figure}


  \centering
  %\subfloat[\mbox{$M_1=\mapping(M_1)$}]{\hspace{0.7cm}\label{fig:labelled}\includegraphics[height=1.4cm]{trap}\hspace{0.4cm}} \qquad %$=\mapping$ ($M_1$)
  %\subfloat[$M_2$]{\label{fig:faith}\includegraphics[height=1.4cm]{notrap0}} %\\ %\qquad \\             
  %\subfloat[$\mapping$($M_2$)]{\label{fig:nonfaith}\includegraphics[height=1.4cm]{notrap}} \qquad
  \subfloat[$M_3$]{\label{fig:trap2}\includegraphics[height=1.4cm]{trap2}} \qquad
  \subfloat[PA which is not in $\mapping (MA)$]{\label{fig:manotinpa}\includegraphics[width=7.5cm]{manotinpa2}}
  %\subfloat[$M_3$]{\label{fig:timedlambda}\includegraphics[height=1.4cm]{timedlambda}} \\
  \caption{MA to PA transformations - continued}
  \label{fig:mapa2}
\end{figure}
%\vspace{-1cm}

\begin{lemma}
The mapping $\mapping : MA \rightarrow PA$ is neither injective nor surjective.
\end{lemma}

%\begin{remark}
%\label{rem:problem}
\begin{proof}
It is not surjective, as a Markovian race condition is always converted to
a deterministic $\chi$ transition. That means for example the PA in Fig.~\ref{fig:manotinpa} is not in $\mapping (MA)$.
It is also not injective, as $M_3$ in Fig.~\ref{fig:trap2} is not equal to $M_1$ in Fig.~\ref{fig:labelled}, but $\mapping (M_1)=\mapping (M_3)=M_1$.
\end{proof}
%\end{remark}

%\begin{corollary}
%The mapping $\mapping : MA \rightarrow PA$ is neither injective nor surjective.
%\end{corollary}

%\begin{figure}
%  \centering
%  \includegraphics[width=6cm]{manotinpa2} 
%  \caption{PA which is not in $\mapping (MA)$}
%  \label{fig:manotinpa}
%\end{figure}

%\section{Operations on distributions}
%TODO: define $|\mu|$, $\oplus$, $\mu-s$ (maybe use the corrections from \verb+papers/2011/MA-Errata/errata.pdf+ -- but as far as I remember the definitions for distributions were OK)
%EDIT: fitted this in the weak transitions section...

\subsection{Weak transitions}
\label{sec:weaktrans}
In the following we use the definitions and terminology of \cite{Segala:07}, but we leave out the definitions for labelled transition systems.
%The only major difference is that we do not assume \emph{finite branching}, i.e.~for each state
%$s$ the set %$\{(a_i, \mu_i)_{i\in I}\in Act\times Dist(S)| s\stackrel{a_i}{\rightarrow}\mu_i \}$
%$\{(a, \mu)\in Act\times \Dist{S}| s\stackrel{a}{\rightarrow}\mu \}$
%does not have to be finite.
Given a transition $tr = (s, a, \mu)$, we denote $s$ by $source(tr)$ and $\mu$ %either by $target(tr)$ or 
by $\mu_{tr}$.
Consider a PA $\aPA$ (with transition relation $T$).
An execution fragment of $P$ is a finite or infinite
sequence $\alpha = q_0 a_1 q_1 a_2 q_2\cdots$ of alternating states and actions, starting with a state
and, if the sequence is finite, ending in a state, where each $(q_i,a_{i+1},\mu_{i+1})\in T$
and $\mu_{i+1}(q_{i+1})>0$.
State $q_0$, the first state of $\alpha$, is denoted by $fstate(\alpha)$. If $\alpha$ is a finite
sequence, then the last state of $\alpha$ is denoted by $lstate(\alpha)$.
An \emph{execution} of $P$ is an execution fragment of $P$ where $q_0=s_0$.
We let $frags(P)$ denote the set of execution fragments of $P$ and $frags^\ast(P)$
the set of finite execution fragments of $P$. Similarly, we let $execs(P)$
denote the set of executions of $P$ and $execs^\ast(P)$ the set of finite executions.
Execution fragment $\alpha$ is a \emph{prefix} of execution fragment $\alpha'$, denoted
$\alpha \leq \alpha'$, if sequence $\alpha$ is a prefix of sequence $\alpha'$.
%A finite execution fragment $\alpha_1=q_0 a_1 q_1 a_2 q_2\cdots a_k q_k$ and 
%execution fragment $\alpha_2$ can be concatenated if $fstate(\alpha_2)=q_k$.
%In this case the \emph{concatenation} of $\alpha_1$ and $\alpha_2$, $\alpha_1 \bowsymbol \alpha_2$,
%is the execution fragment $q_0 a_1 q_1 a_2 q_2\cdots a_k \alpha_2$.
%Given an execution fragment $\alpha$ and a finite prefix $\alpha'$, $\alpha \triangleright \alpha'$
%(read as ''$\alpha$ after $\alpha'$`` is defined to be the unique execution fragment $\alpha''$
%such that $\alpha=\alpha' \bowsymbol \alpha''$.

The \emph{trace} of an execution fragment $\alpha$,
written $trace(\alpha)$, is the
sequence of actions obtained by restricting $\alpha$ to the set of external actions, i.e.~$Act\setminus \{\tau\}$.
For a set $E$ of executions of a PA $P$, $traces(E)$
is the set of traces of the executions in $E$. 
We say that $\beta$ is a trace of a PA $P$ if there is an execution $\alpha$ of $P$ with
$trace(\alpha)=\beta$. 
Let $traces(P)$ denote the set of traces of $P$. 
%We define the \emph{trace preorder} relation on PA as follows:
%$P_1\leq_T P_2$ iff $S_1=S_2$ and $traces(P_1)\subseteq traces(P_2)$. We use $\equiv_T$ to denote the kernel of 
%$\leq_T$. That is $P_1\equiv_T P_2$ iff $P_1\leq_T P_2$ and $P_2\leq_TP_1$. 
%A similar convention will be adopted
%to denote the kernels of other preorder relations.
%If $\beta\in Act$, then $s\stackrel{\beta}{\Rightarrow}\mu$ if there exists an execution fragment $\alpha$ such that
%$fstate(\alpha)=s$, $lstate(\alpha)=$ and $trace(\alpha)=trace(\beta)$. Here, by abuse of notation, one uses the $trace$
%function also for arbitrary sequences. We call $s\stackrel{\beta}{\Rightarrow}\mu$ a weak transition. If $\beta$ is the
%empty sequence, one writes alternatively $s\Rightarrow \mu$.

A \emph{scheduler} for a PA $P$ is a function $\sigma: frags^\ast(P)\rightarrow \SubDist{T}$
such that $tr\in supp(\sigma(\alpha))$ implies that $source(tr)=lstate(\alpha)$.
This means that the image $\sigma(\alpha)$ is a \emph{discrete} subdistribution over transitions.
%\js{added these two sentences}
The defect of the subdistribution, i.e.~$1-|\sigma(\alpha)|$ is used for stopping in the current state. 
A scheduler $\sigma$ is said to be \emph{deterministic} if for each finite execution fragment
$\alpha$ either $\sigma(\alpha)(T)=0$ or $\sigma(\alpha)=\Delta_{tr}$ (Dirac measure for $tr$) for some $tr\in T$.
In other words,
a deterministic scheduler is the entity that resolves nondeterminism in a probabilistic automaton by choosing randomly either
to stop or to perform one of the transitions that are enabled from the current state.
A scheduler is called \emph{memoryless} if it depends only on the last state of its argument, that is, for each pair $\alpha_1$,
$\alpha_2$ of finite execution fragments, if $lstate(\alpha_1)=lstate(\alpha_2)$, then $\sigma(\alpha_1)=\sigma(\alpha_2)$.
%\js{fehlende determinate und Dirac determinate definition eingefuehrt}
A scheduler is called \emph{determinate} if its choice depends only on the current trace and on the last state of its argument, that is,
for each pair $\alpha_1$, $\alpha_2$ of finite execution fragments, if $trace(\alpha_1)=trace(\alpha_2)$ and $lstate(\alpha_1)=lstate(\alpha_2)$, then $\sigma(\alpha_1)=\sigma(\alpha_2)$.
Following \cite{segala:02} we call a deterministic determinate scheduler a \emph{Dirac determinate} scheduler.

A scheduler $\sigma$ and a discrete initial probability measure $\mu_0\in \Dist{S}$ induce a measure $\epsilon$ on the sigma-field
generated by cones of execution fragments as follows. If $\alpha$ is a finite execution fragment, then the \emph{cone}
of $\alpha$ is defined by $C_\alpha=\{\alpha' \in frags(P)| \alpha \leq \alpha' \}$.
The measure $\epsilon$ of a cone $C_\alpha$ is defined recursively:
If $\alpha=s$ for some $s\in S$ we define $\epsilon(C_\alpha)=\mu_0(s)$.
If $\alpha$ is of the
form $\alpha' a' s'$, $\epsilon(C_\alpha)$ is defined by the equation
$$\epsilon(C_\alpha)=\epsilon(C_{\alpha'})\cdot \sum_{tr \in T(a')}\sigma(\alpha')(tr)\mu_{tr}(s'),$$
where $T(a')$ denotes the set of transitions of $T$ that are labelled by $a'$. Standard measure theoretical arguments
ensure that $\epsilon$ is well defined. We call the measure $\epsilon$ a probabilistic execution fragment of $P$, and we say
that $\epsilon$ is generated by $\sigma$ and $\mu_0$.
%\todo{clarify this!}
%\ms{Was soll hier geklaert werden. Gibt's Zweifel am voranstehenden Paragraphen?
%Das einzige, was mich ein wenig stoert, ist der letzte Satz, dass naemlich das Mass als prob. exec. fragment bezeichnet wird.}
%\js{ich denke mittlerweile, dass alles passt. Siehe z.B. auch die gleiche Def im Saarbrueckener QEST Papier - von Segala-Schueler Turrini}

Consider a probabilistic execution fragment $\epsilon$ of a PA $P$, with first state $s$, i.e.~$\mu_0=\Delta_{s}$, that assigns probability
1 to the set of all finite execution fragments $\alpha$ with trace $trace(\alpha) = \beta$ for some $\beta \in (Act \setminus \{\tau\})^\ast$. Let $\mu$ be the discrete
measure defined by $\mu(s')=\epsilon(\{\alpha | lstate(\alpha)=s'\})$.
Then $s\stackrel{\beta}{\Rightarrow}_C\mu$ is a \emph{weak combined transition} of $P$. We call $\epsilon$ a \emph{representation}
of $s\stackrel{\beta}{\Rightarrow}_C\mu$. If $s\stackrel{\beta}{\Rightarrow}_C\mu$ is induced by a deterministic scheduler, we also write
$s\stackrel{\beta}{\Rightarrow}\mu$.
%\ms{Ich denke, wir brauchen noch folgenden Zusatz:}
In case $trace(\alpha)$ is empty we write $s\stackrel{\tau}{\Rightarrow}_C\mu$.
%\ms{Damit muesste dann auch der Hermanns-Hat ueberfluessing sein.}
%\js{Passt. Andrea Turrini schreibt etwas ganz aehnliches... Empty trace hat man ja insbesondere, wenn der Scheduler 'nirgends' hingeht.}

Let $\{s\stackrel{a}{\rightarrow}\mu_i\}_{i\in I}$ be a collection of transitions of a PA $P$, and let $\{c_i\}_{i\in I}$ be a collection
of probabilities such that $\sum_{i\in I}c_i=1$. Then the triple
$(s,a,\sum_{i\in I}c_i\mu_i)$ is called a \emph{(strong) combined transition} of $P$
and we write $s\stackrel{a}{\rightarrow}_C \sum_{i\in I}c_i \mu_i$.
%
%\js{Neu 22.7. War meiner Meinung nach vorher noch nicht definiert}
%\ms{stimmt!}
%\js{Dann lasse ich es so.}
We say that there is a \emph{hyper-transition} from $\mu\stackrel{a}{\Rightarrow}_C \mu'$, 
if there exists a family of weak combined transitions
$\{s\stackrel{a}{\Rightarrow}_C \mu_s\}_{s\in Supp(\mu)}$ such that $\mu' = \sum_{s\in Supp(\mu)}\mu(s)\cdot \mu_s$.

\section{Relating na\"ive weak \& weak bisimulation}
\label{sec:bisim}
Remember that for a MA $M$ its transitions have been defined by means of 
$\mapping (M)$, so in the following it is safe to assume that all Markov Automata
are represented
%From now on we will always assume a MA $M$ to be represented 
by their PA images. All calculations will be made in this context.
%In PA, every state has an implicit $\tau$ loop, therefore we will often not explicitly mention them. The corresponding MA
%in the preimage may still be identified properly because of the $\chi$ transitions (if present).
%We start with common definitions. 
%
%\subsection{Basic definitions}
%

Note that, in contrast to the transition tree notation of \cite{lics:10,avacsreport,stochworld}, we do not need the notation 
$\stackrel{\hat{\alpha}}{\Rightarrow}$ (which includes the possibility of zero steps in the case $\alpha=\tau$) as our definition 
of $\stackrel{\alpha}{\Rightarrow}$ also includes this case\footnote{Definition 10 in \cite{lics:10,avacsreport} erroneously uses $\alpha$ instead of $\hat{\alpha}$.
%This has been corrected by Definition 10 in \cite{stochworld}.
}.

%\ms{hoert sich alles ziemlich schulmeisterlich an. Evtl. vorigen Satz streichen.}
%\js{Jetzt -- als Fussnote -- besser?}
%\ms{Ja, ist besser. Aber 2. Satz der Fussnote koennte man evtl. streichen.}
%\js{OK. Ist akzeptiert}
%The first definition is 
%a necessary adaptation of the definition in \cite{lics:10,avacsreport} that better
%corresponds to the definition of \cite{segala:95b}. We use $\stackrel{\hat{\alpha}}{\Rightarrow}$ instead of $\stackrel{\alpha}{\Rightarrow}$ 
%because then it is allowed for the ``defender'' to remain in its state even if there is no explicit $\tau$ loop 
%(note that this corrected definition has also been given in \cite{stochworld} by the authors of \cite{lics:10,avacsreport}).

The relation defined in the following is called ``weak probabilistic bisimulation'' \cite{segala:02} in the context of PA:

\begin{definition}[Na\"ive weak bisimulation in the spirit of \cite{segala:95b}]
\label{def:naiveweak}
An equivalence relation $\mathcal{R}$ on the set of states $S$ of a MA $M=(S, Act, \mapr{ }, \mati{}, s_0)$ is called \emph{na\"ive weak bisimulation} if and only if
$x \mathcal{R} y$ implies for all $\alpha \in Act^\chi$: $(x\stackrel{\alpha}{\rightarrow}\mu)$ implies $(y\stackrel{\alpha}{\Rightarrow}_C\mu')$ 
with $\mu \equiv_R \mu'$
%$\mu(C)=\mu'(C)$ for all $C \in \nicefrac{S}{\mathcal{R}}$
%\ms{here we could simplify the notation by using Def 2 (lifting of equiv. relations to distribs)}
%\js{done. Was also suggested by reviewer 1.}
(note that the transitions are regarded in $\mapping (M)$).
If $x$ and $y$ are contained in a na\"ive weak bisimulation relation, we write
$x \approx_{\text{na\"ive}} y$.
%\ms{Habe hier noch das Symbol $\approx_{\text{na\"ive}}$ eingefuehrt
%(wird dann erstmals im Proof von Lemma 2 benutzt).}
Two MA are called \emph{na\"ively weakly bisimilar} if
their initial states are related by a na\"ive weak bisimulation relation on the direct sum of their states.
\end{definition}

We would like to mention that modulo na\"ive weak bisimulation it is possible to omit $\tau$-loops in the image $\mapping (\cdot)$. The property whether
a state is stable or unstable can still be recovered by looking for the presence (or absence) of $\chi$ transitions.

The authors of \cite{lics:10,avacsreport} argued that the (state-based) notion of na\"ive weak bisimulation is too fine.
Therefore they defined the coarser notion of (distribution-based) weak bisimulation:

\begin{definition}[Weak bisimulation \cite{lics:10,avacsreport}]
\label{def:weakbisim}
A relation $\mathcal{R}$ on sub-distributions over the set of states $S$ of a MA $M=(S, Act, \mapr{ }, \mati{}, s_0)$ is called \emph{weak bisimulation} if 
for all $(\mu_1,\mu_2)\in \mathcal{R}$ it holds that (transitions regarded in $\mapping (M)$) 
\begin{enumerate}[A.)]
 \item $|\mu_1|=|\mu_2|$
 \item %\js{hier ueberall $s$ zu $t$ gewandelt:} 
       $\forall t \in Supp(\mu_1), \forall \alpha \in Act^\chi: \exists \mu_2^g, \mu_2^s: \mu_2 \Rightarrow_C \mu_2^g \oplus \mu_2^s$ such that 
      \begin{enumerate}[(i)]
        \item $(\mu_1(t)\cdot \De_t) \mathcal{R} \mu_2^g$ and $(\mu_1-t)\mathcal{R}\mu_2^s$
        \item whenever $(t\stackrel{\alpha}{\rightarrow}\mu'_1)$, then $(\exists \mu_2': \mu_2^g \stackrel{\alpha}{\Rightarrow}_C\mu_2'$ and  $(\mu_1(t)\cdot \mu'_1) \mathcal{R}\mu_2')$
      \end{enumerate}
 \item a symmetric condition with $\mu_1$ and $\mu_2$ interchanged (roles of left-hand side and right-hand side also interchanged)
% \item $\forall t \in Supp(\mu_2): \exists \mu_1^g, \mu_1^b: \mu_1 \Rightarrow_C \mu_1^g \oplus \mu_1^b$ such that
%      \begin{enumerate}[(i)]
%        \item $ \mu_1^g \mathcal{R}(\mu_1(t)\cdot \De_t)$ and $\mu_1^b\mathcal{R} (\mu_2-t) $
%        \item $\forall \alpha \in Act^\chi: (t\stackrel{\alpha}{\rightarrow}\mu'_2) \Rightarrow (\exists \mu'': \mu_1^g \stackrel{\hat{\alpha}}{\Rightarrow}_C\mu''$ and  $\mu'' \mathcal{R} (\mu_2(t)\cdot \mu'_2) )$
%      \end{enumerate}
\end{enumerate}
Two distributions $\mu$, $\gamma$ are called \emph{weakly bisimilar} (with respect to some MA $M$), written $\mu \approx \gamma$, if the pair $(\mu, \gamma)$ is contained in a weak bisimulation
relation (with respect to $M$). Two states are called weakly bisimilar if their corresponding Dirac distributions are weakly bisimilar. We write $s\approx_\De t$ for $\De_s \approx \De_t$.
Two MA are called weakly bisimilar if their initial states are weakly bisimilar in the direct sum of the MA.
\end{definition}

%\todo{Show that we can leave out LICS Thm.~1. Maybe it's easy to see that the resulting relation $\approx$ is an equivalence relation}
Thm.~1 in \cite{lics:10,avacsreport} shows that $\approx$ is an equivalence relation (and therefore also $\approx_\De$).
%So it is possible to talk about quotients. 

% JS moved this to the quotient section
%\begin{remark}
%Thm.~1 in \cite{lics:10,avacsreport} shows that $\approx$ is an equivalence relation (and therefore also $\approx_\De$).
%So it is possible to talk about quotients. 
%An equivalence class of a state $s$ in $\nicefrac{S}{\approx_\De}$ is denoted by $[s]_{\approx_\De}$. When the context is clear we will only write $[s]$
%without mentioning the quotient.
%\end{remark}
% JS end of move

%\input{avacs_lics_theorem}

%\js{try to give the same info as in the LICS theorem (unproven) and a corollary out of it}
The following statement is a corollary of Thm.~2 in \cite{lics:10,avacsreport}.
As we show in \ref{sec:saarbruecken_continuous} that Lemma 16 in \cite{avacsreport} (and therefore also Thm.~2 in \cite{lics:10,avacsreport})
must be considered unproven,
we give an independent proof of the following statement.
\begin{lemma}[Corollary of Thm.~2 in \cite{lics:10,avacsreport}]
\label{lemma:naive-is-weak}
If two MA are na\"ively weakly bisimilar, they are also weakly bisimilar.
%A na\"ive weak bisimulation is a weak bisimulation
\end{lemma}

\begin{proof}
%\js{Caution. New - please check. If this works, we may omit the Theorem and corollary above. So no unproven LICS-Statement has to be used.}
%\js{changed this from an iterative procedure to a all-in-one construction as suggested by MS}
%\ms{ich denke der Beweis ist prinzipiell i.O. Habe aber viele Details geaendert, also nochmal genau durchlesen!}
We provide a direct proof of this statement.
Let $P=(S, Act^\chi, \rightarrow, \emptyset, s_0)$, $P'=(S', Act^\chi, \rightarrow', \emptyset, s'_0)$.
We directly construct a weak bisimulation relation $R'$ out of a given na\"ive weak bisimulation relation $R \subseteq S \times S'$.
$R'=\setcond{(\mu_1, \mu_2)}{\mu_1\in \SubDist{S}, \mu_2\in \SubDist{S'}, \mu_1\equiv_R \mu_2\ }$.
%$R'=\{(\mu_1, \mu_2) | \mu_1\in \SubDist{S}, \mu_2\in \SubDist{S'}, \mu_1\equiv_R \mu_2\}$.
We show that this is indeed a weak bisimulation.
%To simplify matters we assume that we consider quotients
To simplify matters, we assume that we work on the quotient
with respect to $\approx_{\text{na\"ive}}$
of the direct sum of $S$ and $S'$.
%\js{Wir brauchen \emph{einen} Quotienten, in dem beide Automaten zusammengefasst sind -- wie in MA\_Lattice}
%\ms{passt der vorige Satz jetzt?}
Now we may verify the condition for weak bisimulation. Assume that $(\mu_1, \mu_2) \in R'$. By $\mu_1 \equiv_R \mu_2$ it is clear that $|\mu_1|=|\mu_2|$.
Now choose an arbitrary $s\in Supp(\mu_1)$ and let $\mu_1(s)=c$. Again by $\mu_1 \equiv_R \mu_2$ we find also $s\in Supp(\mu_2)$
and we may choose $\mu_2^g=c\Delta_{s}$
%\js{Die beiden Automaten nutzen das \emph{gemeinsame} s aus dem MA\_Lattice Automaten. Siehe folgenden Satz in Klammern.
%Wenn wir auf \emph{zwei} Quotienten arbeiten, dann sind Deine Aenderungen OK.}
%\ms{hab's kapiert. Erklaerung in Klammer leicht veraendert:}
(the fact that we may consider the same state $s$ in both $Supp(\mu_1)$ and $Supp(\mu_2$)
 holds only for quotients -- in the general
case we must find bisimilar states with the same probability mass $c$). Of course also the remaining parts $\mu_1-s$ and $\mu_2^s$ 
(where $\mu_2=\mu_2^g \oplus \mu_2^s$) must be in relation, 
as $\mu_1 \equiv_R \mu_2$.
Finally it is clear that whenever $s\stackrel{a}{\rightarrow}{\mu'_1}$, then also $\mu_2^g\stackrel{a}{\Rightarrow}_C \mu''$ with $(c\cdot \mu'_1, \mu'')\in R'$ by na\"ive weak bisimilarity and the construction of $R'$.
The same holds with the roles of $\mu_1$ and $\mu_2$ interchanged, so the claim is shown.
\end{proof}

%Next we try to understand when a weak bisimulation relation is already naive weak. We need the following lemma
%Before we relate elimination, weak and na\"ive weak bisimulation 
%We state a corollary of Thm.~2 in \cite{avacsreport}:
%\begin{corollary}
%If for a weak bisimulation in the case of Thm.~\ref{th:corrected} 1a) and 1b) we can choose $\mu'=\mu$ and $\gamma'=\gamma$ in part 2
%it induces already a na\"ive weak bisimulation $\approx_\De$.
%\end{corollary}

A precalculation of the elimination procedure presented later in this paper is the ``rescaling'' procedure. 
\begin{remark}
\label{rem:rescale}
Combined transitions $\Rightarrow_C$ can be used to rescale loops. A basic example is given in Fig.~\ref{fig:rescale_loop}. 
%An infinite transition tree rooted at $s$ 
A Dirac determinate scheduler choosing the transition $s\stackrel{\tau}{\rightarrow}\frac{1}{3}\Delta_{x}\oplus \frac{1}{3}\Delta_{s}\oplus \frac{1}{3}\Delta_{y}$ with probability one and 
stopping in states $x$ and $y$
leads to the 
%\js{changed here from distribution to transition (in addition to the above change requested by the reviewer)}
%distribution
weak transition
$s\stackrel{\tau}{\Rightarrow}\frac{1}{2}\De_x \oplus \frac{1}{2}\De_y$. 
To mimic the $\tau$ transition of $s$, $t$ has to perform a transition combined of $\frac{2}{3}$ times $t\rightarrow \frac{1}{2}\De_x \oplus \frac{1}{2}\De_y$
and $\frac{1}{3}$ times $t \Rightarrow \De_t$. Without using combined transitions $t$ could not mimick this transition of $s$.
\end{remark}

\begin{figure}
  \centering
  \includegraphics[width=8cm]{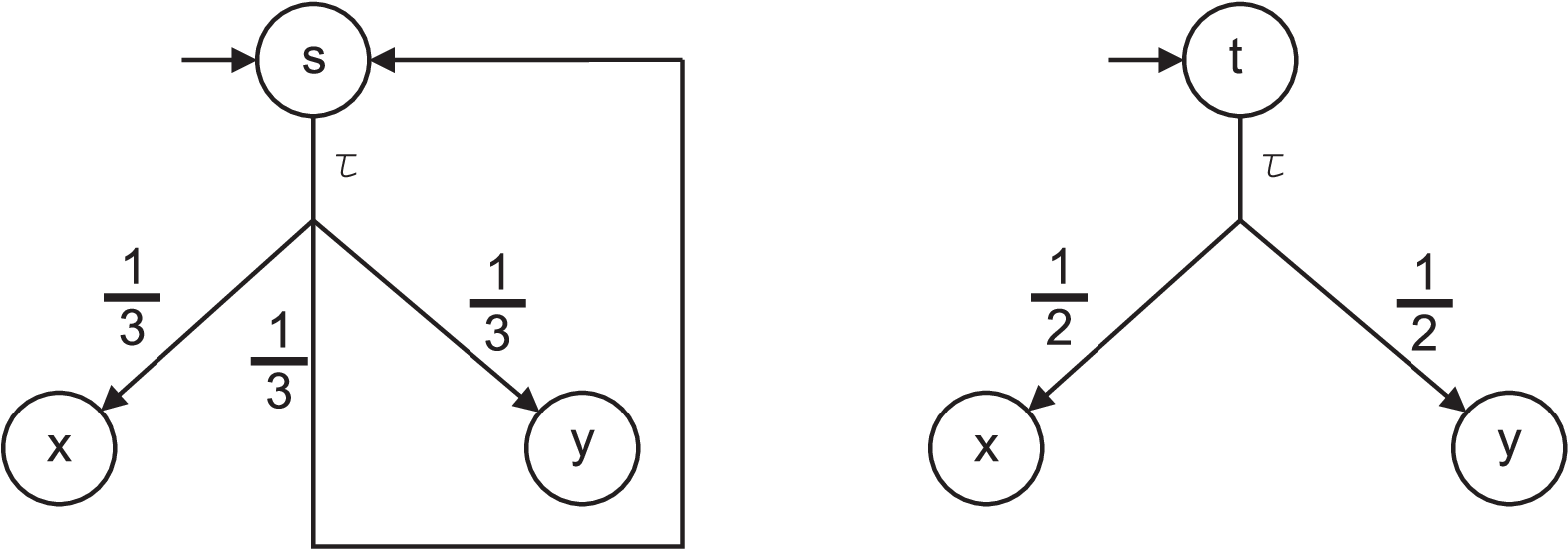}\\
  \caption{Resolving a $\tau$ loop}
  \label{fig:rescale_loop}
\end{figure}

For many of the proofs in this paper we are only interested in properties ``up to an equivalence relation $R$'', which motivates the following definition.
In the quotient automaton,  no two distinct states $s$, $t$ exist with $sR\,t$.

\begin{definition}[Quotient automaton]
%\todo{Adapt the following two definitions to the $\emptyset$ notations as degenerated MA.}
%\ms{OK, habe Def von aPA geaendert, hoffentlich ohne neg. Seiteneffekte!}
%\js{caution: removed reachability assumption here}
Let $\aPA$ be a PA %$\stateSet' \subseteq \stateSet$ its set of reachable states
and $R$ an equivalence relation over %(a superset of)
$\stateSet$.
%%%%%%%%%%%%%%%%%%%%%%%%%%%%%%%%%%%%% previous discussion %%%%%%%%%%%%%%%%%%%%%%%%%%%%%%%%%%%%%%%%%%%%%%%%%%%%%%%%%%%%%%
%\ms{NEU Das mit dem Superset ist nicht unmittelbar klar. Wir brauchen es
%wohl wg. Def.9. Man koennte das Superset hier noch weglassen.}
%\js{Stimmt. Habe es jetzt mal auskommentiert.}
%%%%%%%%%%%%%%%%%%%%%%%%%%%%%%%%%%%%% end of previous discussion %%%%%%%%%%%%%%%%%%%%%%%%%%%%%%%%%%%%%%%%%%%%%%%%%%%%%%%
The equivalence class of a state $s$ is denoted by $[s]_{R}$
(when the context is clear we write $[s]$ instead of $[s]_R$).
We write $\nicefrac{P}{R}$ to denote the 
quotient automaton of $P$ with respect to $R$, that is
$$\nicefrac{P}{R}=(\nicefrac{\stateSet}{R},Act,\nicefrac{\transitionRelation}{R},[\startState]_R)$$
with $\nicefrac{\transitionRelation}{R}\subseteq \nicefrac{\stateSet}{R}\times Act \times \Dist{\nicefrac{\stateSet}{R}}$ such that
$([s]_R,a,\mu)\in \nicefrac{\transitionRelation}{R}$ if and only if there exists a state
$s' \in [s]_R$ such that $(s',a,\mu')\in \transitionRelation$ and $\forall [t]_R\in \nicefrac{\stateSet}{R}: \mu([t]_R)=\sum_{t'\in [t]_R}\mu'(t')$.
We call an automaton a \emph{quotient with respect to $R$} or, if $R$ is clear, just a \emph{quotient}, if it holds that 
%\js{habe iso hier entfernt, da nirgends definiert. Bitte lesen:} 
$P$ and $\nicefrac{P}{R}$ coincide (up to a renaming of the set of states).
%\ms{Typo ausgebessert und underlying geloescht}
\end{definition}

\begin{definition}[Reachable states]
Let $\aPA$ be a PA, $\stateSet' \subseteq \stateSet$ its set of reachable states, i.e.~those states that can be reached with
non-zero probability by a scheduler starting from $\startState$.
Let $\transitionRelation':=\transitionRelation|_{\stateSet'\times Act \times \Dist{\stateSet'}}$
%%%%%%%%%%%%%%%%%%%%%%%%%%%%%%%%%%%%% previous discussion %%%%%%%%%%%%%%%%%%%%%%%%%%%%%%%%%%%%%%%%%%%%%%%%%%%%%%%%%%%%%%
%\ms{man koennte genauso gut $\Dist{\stateSet}$ sagen}
%\js{OK. Sehe ich ein. Wenn Startzustand aus den erreichbaren Zust\"anden, muss auch die Verteilung in erreichbare Zust\"ande f\"uhren. Finde es aber, ehrlich gesagt,
%  intuitiver, gleich $\stateSet'$ zu schreiben. Das passt auch besser zur Automaten-Definition von Segala. Da steht auch keine gr\"o\ss ere Menge als die Zust\"ande.}
%%%%%%%%%%%%%%%%%%%%%%%%%%%%%%%%%%%%% end of previous discussion %%%%%%%%%%%%%%%%%%%%%%%%%%%%%%%%%%%%%%%%%%%%%%%%%%%%%%%
 be the restriction of the transition relation to $\stateSet'$.
We define $r(P):=(\stateSet',Act, \transitionRelation', \startState)$ and call it the \emph{reachable fragment} of $P$.
\end{definition}

As bisimulation only focuses on the reachable fragment of the state space, we assume from now on that by \emph{quotient} we always mean the reachable
part of the quotient, i.e.~$r(\nicefrac{P}{R})$.

\section{Vanishing states and vanishing representations}
\label{sec:elimrelate}

We now introduce a notion of vanishing states in the context of MA.

%Without loss of generality we work on PA \emph{without} considering $\tau$ self-loops of the form $s\stackrel{\tau}{\rightarrow}\De_s$. If there are some,
%we simply omit them.

%\input{andreadefinition}
\begin{definition}
	Given a PA $P = (S, Act, T, \emptyset, s_0)$, and $t \notin S$,
	we define the following renamings:
	\begin{itemize}
		\item 
			for each $v \in \stateSet$, 
			\[
				\replace{v}{\replacement{s}{t}} = 
				\begin{cases}
					t & \text{if $v = s$,} \\
					v & \text{otherwise.}
				\end{cases}
			\]
			The set of all renamed states is denoted by $\replace{\stateSet}{\replacement{s}{t}}$;
		\item 
			for each $\sd \in \Dist{\stateSet}$ and $v \in \replace{\stateSet}{\replacement{s}{t}}$, 
			\[
				\probeval{\replace{\sd}{\replacement{s}{t}}}{v} = 
				\begin{cases}
					\probeval{\sd}{s} & \text{if $v = t$,} \\
					\probeval{\sd}{v} & \text{otherwise;}
				\end{cases}
			\]
		\item 
			for each $(v,a,\sd) \in T$, $\replace{(v,a,\sd)}{\replacement{s}{t}} = (\replace{v}{\replacement{s}{t}},a,\replace{\sd}{\replacement{s}{t}})$.
			The set of all renamed transitions is denoted by $\replace{T}{\replacement{s}{t}}$.
	        \item   $(\replace{\stateSet}{\replacement{s}{t}}, \replace{\startState}{\replacement{s}{t}}, \actionSet, \replace{\transitionRelation}{\replacement{s}{t}})$ is denoted by $\replace{\aut}{\replacement{s}{t}}$.
	\end{itemize}
\end{definition}

\begin{definition}[Emanating Internal Weak Combined Transitions]
\label{def:emanatingWeakTransitions}
	Given a PA $\aut = (S, Act, T, \emptyset, s_0)$ and a state $s \in S$, 
	%\at{I think we have no need of requiring $s$ unstable in the definition. We will require this when actually needed} 
	we denote by $\transitionsCombinedFromState{s}$ the set $\setcond{(s,\hidden,\sd)}{\weakCombinedTransition{s}{\hidden}{\sd}}$ of 
	\emph{internal weak combined transitions emanating from $s$}.
	Further, we denote by $\mathfrak{S}(s)$ the set $\setcond{(s,a,\sd)}{\strongTransition{s}{a}{\sd}, a\in Act}$ of 
	\emph{strong transitions emanating from $s$}.
\end{definition}

%Note that we do not use the ``hat'' requirement for transitions in $\transitionsCombinedFromState{s}$, as we want to exclude the
%implicit transition $s\stackrel{\tau}{\rightarrow}\Delta_s$, unless it is not explicitly there. \js{this is anyway not interesting, as this would not render the state vanishing}

\begin{definition}[local change of transitions]
\label{def:localChangeOfTransitions}
\label{def:transet}
Let $P=(S, Act, T, \emptyset, s_0)$ be a PA and $s \in \stateSet$. 
%\at{I think we have no need of requiring $s$ unstable in the definition. We will require this when actually needed} 

For any $\mathfrak{T} \subseteq \transitionsCombinedFromState{s}$ we define the PA %\at{we need of a name for this automaton}
\[
	\transitionReplacement{P}{\mathfrak{T}} = (S, Act, (T \setminus \mathfrak{S}(s) \cup \mathfrak{T}),\emptyset, s_0)\text{.}
\]
If $\mathfrak{T} = \setnocond{(s,\hidden,\sd)}$ we also write $\locallyChangedAut{s}{\sd}$ -- or simply $P'$ if the context is clear -- 
instead of $\transitionReplacement{P}{\setnocond{(s,\hidden,\sd)}}$.
\end{definition}

\begin{definition}[vanishing states]
\label{def:vanishing}
Let $P = (S, Act, T, \emptyset, s_0)$ be a PA. Let $s\in S$ be unstable and $\transitionsCombinedFromState{s}$ as in Def.~\ref{def:emanatingWeakTransitions}.
State $s$ is called %non-na\"ively vanishing
\begin{description}
  \item[\emph{trivially vanishing}] 
	if $\mathfrak{S}(s) = \setnocond{(s,\hidden, \sd)}$ for some $\sd \in \Dist{\stateSet}$. 
  \item[\emph{vanishing}] 
	if there exists $(s,\hidden,\sd) \in \transitionsCombinedFromState{s}$ such that $s \approx_\Delta t$ when comparing $\replace{P}{\replacement{s}{t}}$
	and $\locallyChangedAut{s}{\sd}$ %\at{moved renaming to the other automaton in order to make notation lighter} 
	for $t \notin \stateSet$. 
	In this case $P_{(s,\sd)}$ -- or $(s,\sd)$, for short -- is called a \emph{vanishing representation} of $s$.
  \item[\emph{non-na\"ively vanishing}] 
	or \emph{nn-vanishing}, for short, if it is vanishing and there is a vanishing representation $\locallyChangedAut{s}{\sd}$ such that there exists $t \in \Supp{\sd}$ such that $s \not\weakBisimD t$.
  \item[\emph{na\"ively vanishing}]
        if it is vanishing, but not nn-vanishing, i.e.~for all vanishing representations $\locallyChangedAut{s}{\sd}$ it holds that for all $t \in \Supp{\sd}:s \weakBisimD t$.
\end{description}
A state that is not vanishing is called \emph{tangible}.
A state that is not nn-vanishing is called nn-tangible\footnote{Note that in \cite{schuster:13} 
% OK. Ich gebe in dieser Diskussion nach. Aendere die Referenz aber entsprechend.
%\ms{bin immer noch der Meinung, dass wir nur eine Referenz einbauen sollten, ggf. mit Verweis auf die versch. Versionen}
%\js{ich brauche die neuere Version wegen dem Appendix und die aeltere Version wegen dem Zeitstempel. Was meinst Du?} 
we used the term ``tangible'' for describing ``nn-tangible''. For a more readable notation we distinguish now between tangible and nn-tangible.}.
\end{definition}

Trivially vanishing states correspond to vanishing markings in GSPNs \cite{ajmone:90}, provided
that they are well-defined, i.e.\ there is no non-determinism \cite{ciardo:1996}. 
Vanishing states extend this idea to the presence of non-determinism: for a vanishing state,
emanating non-deterministic transitions can be 
bisimilarly reduced to a single deterministic transition.
Non-naively vanishing states can be transformed to a distribution, such that the equivalence class with respect to $\approx_\Delta$ changes. It will turn out
that this is essentially the difference to state-based bisimulations.

According to the definition, the set of
%vanishing
states can be partitioned in two ways: 
vanishing vs.\ tangible, or nn-vanishing vs.\ nn-tangible.
%\ms{vorigen Satz geaendert!}\js{OK}
This classification of states is illustrated in Fig.~\ref{fig:classes1}
and Fig.~\ref{fig:classes2}.
%
%Clearly, the states that are
%%vanishing, but not nn-vanishing,
%na\"ively vanishing
%\ms{habe jetzt die Def von naively vanishing benutzt}
%cannot change their classes without losing bisimilarity.
%\js{letzte 3 woerter angefuegt}.
%\ms{Versteh ich nicht, bitte wieder entfernen!} 
By definition, any vanishing representation of a na\"ively vanishing state cannot change the equivalence class with respect to $\approx_\Delta$.
%\ms{das war der neue Satz von JS. Folgenden Satz etwas umformuliert:}
%\js{OK}
For that reason, na\"ively vanishing states do not have to be 
eliminated in order to reduce the problem to na\"ive bisimulation. We will show later that the only obstacle are nn-vanishing states.
%That is why --
%by abuse of notation -- we defined ``tangible'' to be not nn-vanishing (that means a vanishing state can be tangible when it is not nn-vanishing!). 
%\todo{Diesen Satz entweder mit Referenz versehen, oder an die richtige Stelle packen!}
The next example shows basic representatives of the different types of vanishing states.
\begin{figure}
  \centering
  \includegraphics[width=7.5cm]{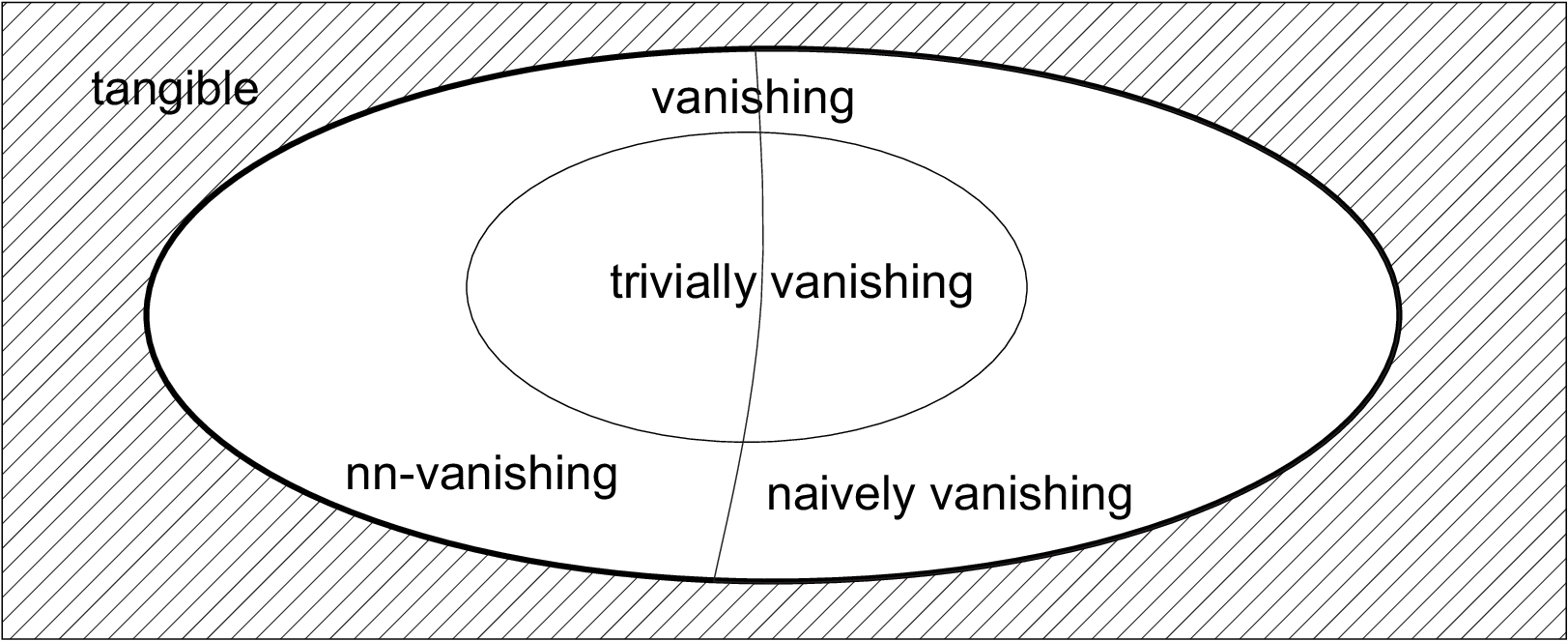}\\
  \caption{Partition of state space into vanishing and tangible states}
  \label{fig:classes1}
\end{figure}
\begin{figure}
  \centering
  \includegraphics[width=7.5cm]{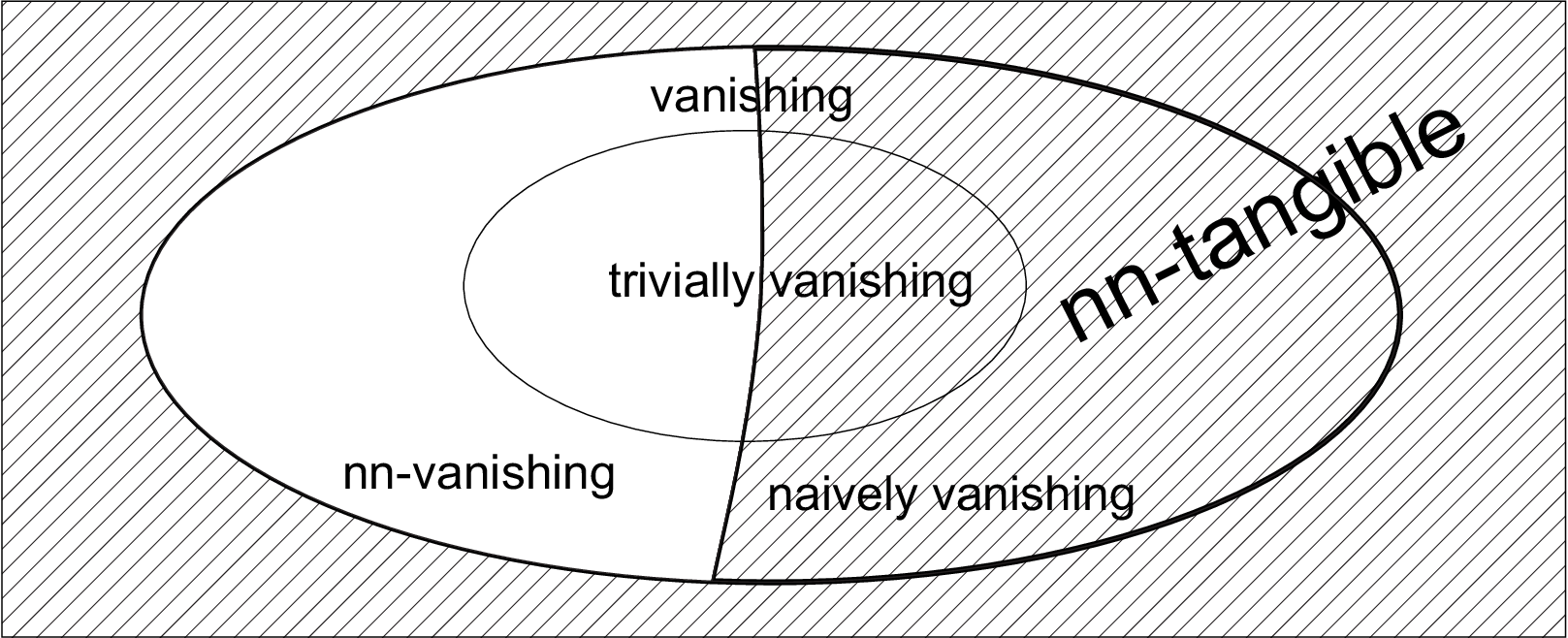}\\
  \caption{Partition of state space into nn-vanishing and nn-tangible states
%\ms{Schoener Wohnen: Man koennte den Schriftzug ``nn-tangible'' nach rechts
%verschieben, so dass er quer ueber den inneren und ausseren schraffierten Bereich geht (damit noch klarer ist, dass das Innere auch dazugehoert)}
%\js{Ich will mal nicht so sein...}
}
  \label{fig:classes2}
\end{figure}

\begin{example}
%\todo{changed the references to the figures. Please cross-check.}
%\ms{Referenzen passen!}
Assume that $p\in (0,1)$.
State $E$ in Fig.~\ref{fig:E_triv_van} is trivially vanishing since
it only has an emanating $\tau$ transition, and it is na\"ively vanishing, as it holds that $E\approx_\Delta C\approx_\Delta D$.
%\ms{die beiden Relationszeichen sollten besser $\approx_\Delta$ statt $\approx_{\text{na\"ive}}$ sein, oder?}
%\js{OK}
A non-trivially and na\"ively vanishing state $E$ is given in Fig.~\ref{fig:E_van}.
Note that all non-$\tau$ transitions emanating from $E$ may be omitted as they
can be mimicked by appropriate weak transitions.
The automaton in Fig.~\ref{fig:E_triv_van} is the corresponding vanishing representation, so na\"ivety follows as in this case.
$E$ is na\"ively vanishing as it turns out to be in the same class as $C$ and $D$ modulo weak bisimulation.
%
% Hallo Markus, das ist schon ein neuer Erklärungsversuch - nach unserem Telefonat aufgeschrieben
For the next example, first note that in Fig.~\ref{fig:E_nn_van} 
$C$ and $D$ cannot be weakly bisimilar (because $C$ can only perform the $c$ to $A$, while $D$ can additionally perform the $d$ to $B$).
As $E$ is trivially vanishing we notice that it is also nn-vanishing, because $E$ moves to the distribution $p\De_C\oplus (1-p)\De_D$, where $C\not \approx D$.
Moreover, since in Fig.~\ref{fig:E_nn_van} $D$ is not vanishing (and $E$ is
nn-vanishing), we have that $E\not\approx D$.
%
%But now also $E$ cannot be bisimilar to $D$: There is always a non-zero $C$ part in the successor
%distribution, so the $d$ action to $B$ cannot be performed anymore.
%$E$ can neither be bisimilar to $C$ nor to $D$, 
%so the $\tau$ transition emanating from $E$ induces a change of equivalence class, so $E$ cannot be na\"ively vanishing.
%Still it is non-naively vanishing.
In the last example in Fig.~\ref{fig:E_nn_van_nt} we see that $E$ is not trivially vanishing as there is more than one emanating transition.
Still it is easy to verify that the automaton in Fig.~\ref{fig:E_nn_van} is a vanishing representation, as the Dirac determinate scheduler choosing the transition
$E \rightarrow p\Delta_C\oplus (1-p)\Delta_D$ with probability 1, $D \rightarrow \Delta_E$ with probability 1, and stopping in all other states realises the transition
$E\stackrel{\tau}{\Rightarrow}\Delta_C$.
%\ms{Bitte den letzten Satz nochmals ueberpruefen (Wording geaendert und Fehler (state s) ausgebessert.}
%\js{meines Erachtens OK.}
\end{example}

\begin{figure}
  \centering
  \subfloat[E trivially \& na\"ively vanishing]{\label{fig:E_triv_van}\includegraphics[width=5cm]{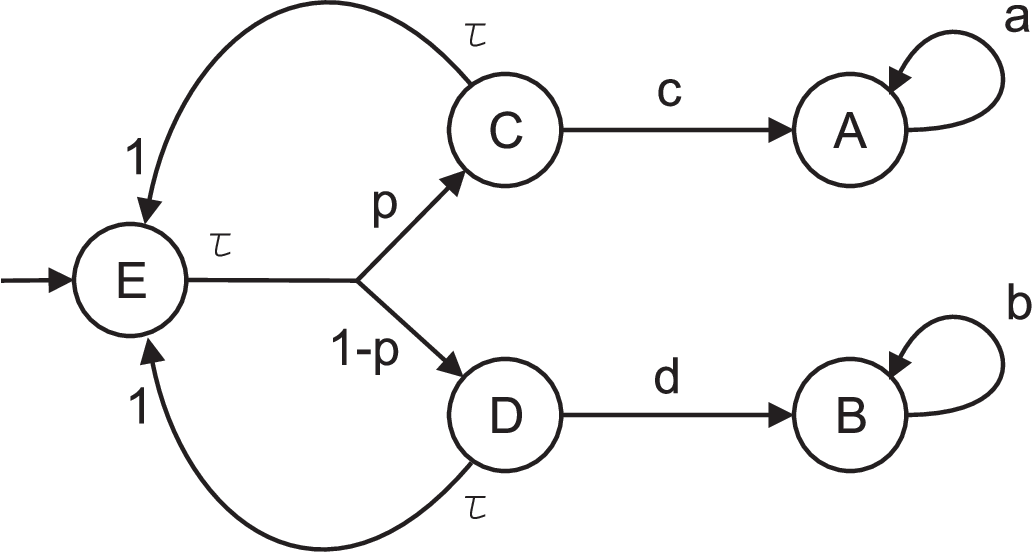}} \qquad
  \subfloat[E non-trivially \& na\"ively vanishing]{\label{fig:E_van}\includegraphics[width=5cm]{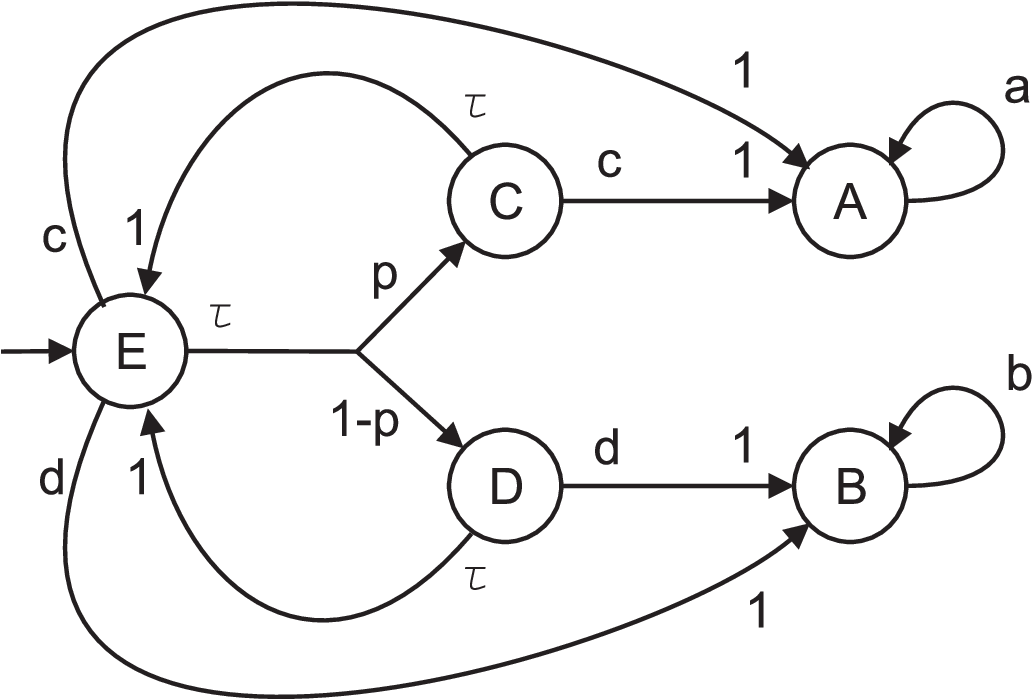}} \\
  \subfloat[E trivially \& nn-vanishing]{\label{fig:E_nn_van}\includegraphics[width=5cm]{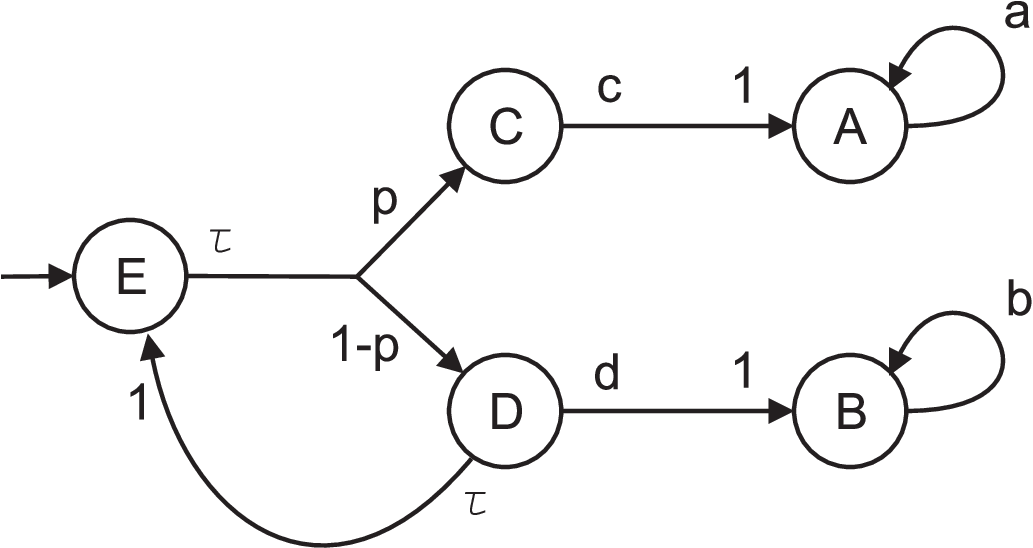}} \qquad
  \subfloat[E non-trivially \& nn-vanishing]{\label{fig:E_nn_van_nt}\includegraphics[width=5cm]{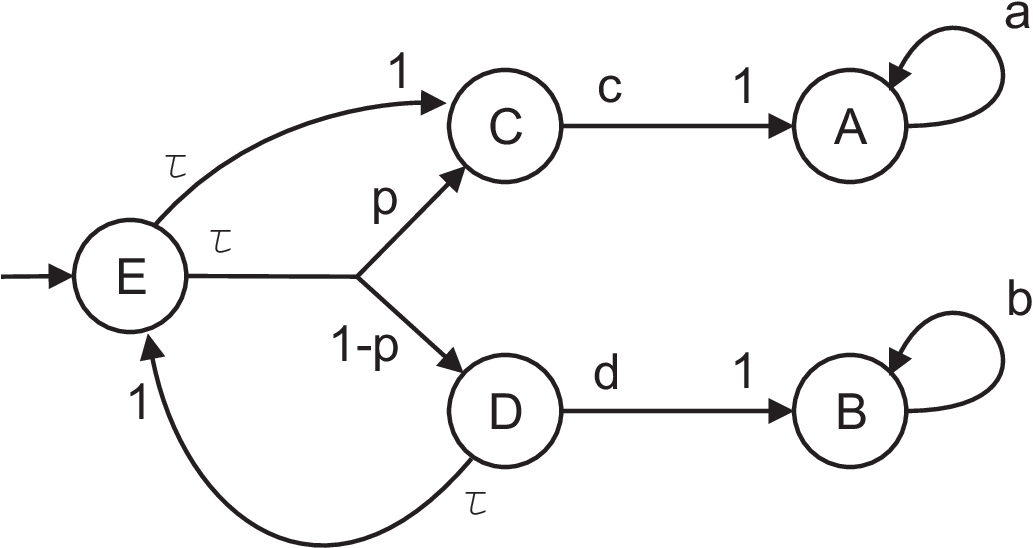}}
  \caption{Examples of vanishing states
}
  \label{fig:notions_of_vanishing}
\end{figure}

\begin{definition}[Elimination of vanishing states]
\label{def:elim}
Let $P=(S, Act, \rightarrow, \emptyset, s_0)$ be a PA. 
Let $s\in S$ be a vanishing state and let $s\stackrel{\tau}{\rightarrow}_{van}\nu$ be the only transition
emanating from $s$ in the vanishing representation $P'=P_{(s,\nu)}=(S, Act, \rightarrow_{van}, \emptyset, s_0)$.
The elimination of $s$ is defined by two steps:
\begin{enumerate}
  \item Rescaling (cf.~Remark \ref{rem:rescale}):
                    $$\rightarrow_{res}=\begin{cases}\rightarrow_{van} \setminus \{ (s,\tau,\nu)\} & \text{ if }\nu = \De_s \\
                                                    (\rightarrow_{van} \setminus \{ (s,\tau,\nu)\}) \uplus \{ (s,\tau,\frac{1}{1-\nu(s)}(\nu-s))\} & \text{ otherwise} \end{cases}$$
  \item
%\ms{Bitte nochmal anschauen ($\nu_{res}$ wird in dieser Zeile eingefuehrt und dann immer wieder benutzt):}
%\js{gefaellt mir jetzt ganz gut. Ich war offenbar einem Irrtum aufgesessen: Die alte Definition war schon korrekt, aber unschoen. Die folgende Zeile hatte $\nu$ re-definiert! Mit $\nu_{res}$ wendet sich die Sache zum Besseren.}
  Elimination (only performed if after rescaling a transition $s\stackrel{\tau}{\rightarrow}_{res}\nu_{res}$ remains):
      $$P^{\widehat{s}}=\begin{cases}(S\setminus \{s\}, Act, \rightarrow_{el}, \emptyset, s_0) & \text{ if } s \neq s_0 \\
                                  ((S\setminus \{s_0\})\uplus \{s_0^{\circ}\}, Act, \rightarrow_{el}\uplus \{ (s_0^{\circ}, \tau, \nu_{res}) \}, \emptyset, s_0^{\circ}) & \text{ if } s = s_0 \text{ and } \\
                                                                                                                                  & \exists t\rightarrow_{res}\gamma: s_0\in Supp(\gamma) \\
                                  P' & \text{otherwise}
                                  %(S, Act, \rightarrow'\setminus \{(s,\tau,\nu)\}, \emptyset, t)   & \text{ if }\nu = \De_t \text{ and } t \neq s = s_0 \\
                                  %(S, Act, \rightarrow', \emptyset, s_0)   & \text{ if }\nu \neq \De_t \text{ for any }t\text{ and } s = s_0
                                  \end{cases}$$
\end{enumerate}
%\js{Achtung: haben oben $\nu$ in $\nu_{res}$ geaendert, da man sonst bei verschwindendem rekurrentem startzustand Probleme bekommt, da dieser ja eigentlich schon 
%eliminiert sein muesste!}
%\ms{Gut, dass Du das gemerkt hast. Aber koennte man es nicht einfacher machen: Statt dem folgenden Halbsatz (Erklaerung von $\nu_{res}$) gleich oben in der Zeile 2. $s\stackrel{\tau}{\rightarrow}_{res}\nu_{res}$ schreiben?}
%\js{Na ja, dann ist doch $\nu_{res}$ immer noch nicht gescheit definiert, oder? Vielleicht kann man schreiben ``the vanishing representation after rescaling'' oder so?}
%\ms{Doch, ich denke schon, dass $\nu_{res}$ definiert waere, wenn man 
%"a transition $s\stackrel{\tau}{\rightarrow}_{res}\nu_{res}$ remains"
%schreiben wuerde.}
%\js{Wo willst Du das jetzt hinschreiben. Ich dachte, Du willst nur noch oben in Zeile 2 was hineinschreiben. Dann wuerde die Erklaerung aber fehlen... Vorschlag:
%aendere es doch mal so, wie Du meinst und dann schau' ich's mir nochmal an.}
where
%$\nu_{res}$ is the successor distribution of $s \stackrel{\tau}{\rightarrow}_{van}\nu$ after rescaling at $s$
%and 
$\rightarrow_{el}:=\setcond{ (t,\alpha,\mu')}{ t\stackrel{\alpha}{\rightarrow}_{res}\mu, t\in S\setminus \{s\}, \mu':=\mu_{s\rightarrow \nu_{res}} }$.
Here $\mu_{s\rightarrow \nu_{res}}$ denotes the replacement of every occurrence of $s$ by the corresponding distribution $\nu_{res}$:
%\ms{Bis hier.}

Without loss of generality let $\mu$ be of the form $\mu:=c_s \De_s \oplus ( \oplus_{i \in I, s_i\neq s} c_i\De_{s_i})$ and $\nu$ be of the 
form  $\nu=\oplus_{j \in J, s_j \neq s}d_j\De_{s_j}$.
Then we define $\mu_{s\rightarrow \nu}:=c_s(\oplus_{j \in J, s_j \neq s}d_j\De_{s_j}) \oplus ( \oplus_{i \in I, s_i\neq s} c_i\De_{s_i})$.

%If $s=s_0$ we will not eliminate it (except in the special case that $s_0\mapr{\tau}\De_t$), 
%but we will
%replace transitions emanating from $s_0$ by the corresponding vanishing representation. 
%For $\nu=\De_s$ we remove the $\tau$ loop and turn the state into a deadlock
%(note that still we can see that this is a vanishing state as it does not have a $\chi(0)$ transition).
\end{definition}

%\begin{remark}
We omit the $(s,\tau,\De_s)$ transition from the set of transitions for the purpose of minimality of the resulting PA. 
One could also just add the loop case to the case where $P'$ is not changed. Note that even when loops are removed, all
information about the MA may be safely recovered.
Such a state without loop is a deadlock in PA and no longer vanishing according to our definition (note that there cannot be any other competing transition,
as we then could not get the vanishing representation with the $\tau$ loop). Looking back to the MA setting it is clear that
$s$ must be an unstable state as it does not have the $\chi(0)$ transition.
We will see later when we describe the decision algorithm that we basically treat \emph{all} nn-vanishing
states as if they were starting states, i.e.~we consider them as transient copies.
%\end{remark}

\begin{example}
To explain Definition \ref{def:elim} we give the following examples.
The first example is the most common one (cf.~Fig.~\ref{fig:case1}): The vanishing state $s$ is neither the initial state nor does it 
have a probability-one-self-loop.
Therefore the elimination is straightforward: Redirect all incoming arcs according to the vanishing representation (cf.~Fig.~\ref{fig:case1_elim}).
The next example is the probability-one-self-loop case (cf.~Fig.~\ref{fig:case2}): It does not matter whether the vanishing state $s$ is the initial 
state or not, the self-loop is removed by the rescaling operation
(cf.~\ref{fig:case2_elim}). In the third example we have a vanishing initial state with incoming transition(s) (cf.~Fig.~\ref{fig:case3}). We add a copy $s_0^{\circ}$ of the initial state
and eliminate the \emph{old} initial state $s_0$ (cf.~Fig.~\ref{fig:case3_elim}). 
Note also that when $s$ is a vanishing initial state 
but it has no incoming transitions, then nothing is changed (without a figure).
\end{example}

\begin{figure}
  \centering
  \subfloat[Case 1]{\label{fig:case1}\includegraphics[width=3cm]{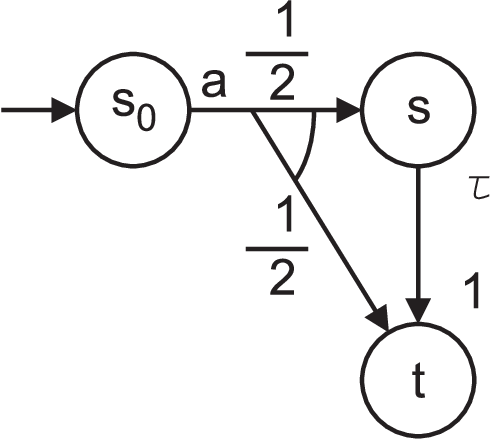}} \qquad \subfloat[Case 1 eliminated]{\label{fig:case1_elim}\includegraphics[width=3cm]{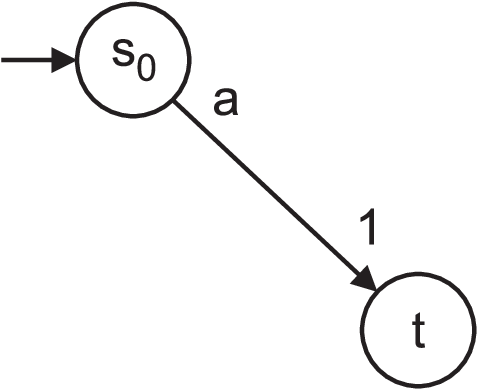}} \\
  \subfloat[Case 2]{\label{fig:case2}\includegraphics[width=3cm]{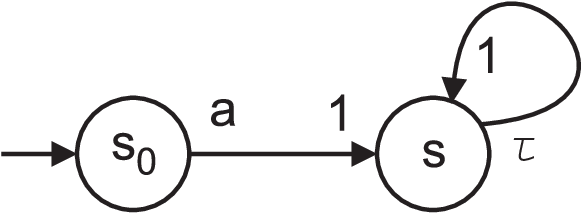}} \qquad \subfloat[Case 2 eliminated]{\label{fig:case2_elim}\includegraphics[width=3cm]{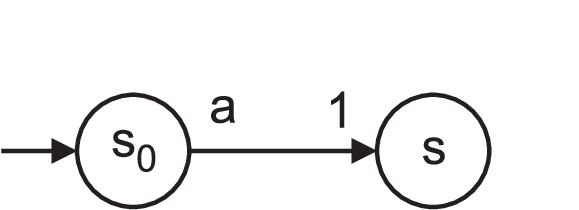}} \\
  \subfloat[Case 3]{\label{fig:case3}\includegraphics[width=3cm]{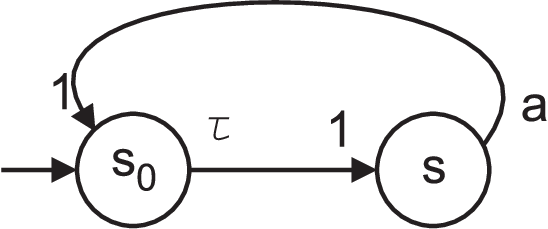}} \qquad \subfloat[Case 3 eliminated]{\label{fig:case3_elim}\includegraphics[width=3cm]{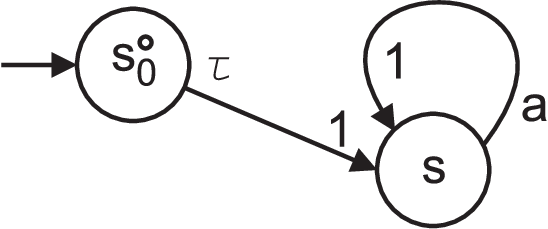}} \\
  \caption{Different cases of eliminations}
  \label{fig:cases_for_elimination}
\end{figure}

\begin{lemma}[Elimination does not destroy weak bisimilarity]
\label{lemma:1}
For every vanishing state $s$ it holds that $P\approx P'^{\widehat{s}}$
\end{lemma}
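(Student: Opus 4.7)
The plan is to establish $P \approx P'^{\widehat{s}}$ by chaining three bisimilarities along $P \approx P' \approx P'_{res} \approx P'^{\widehat{s}}$ and invoking transitivity of $\approx$, where $P'_{res}$ denotes the intermediate PA obtained from $P'$ by performing the rescaling step of Definition \ref{def:elim} but not the elimination step.

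The first link $P \approx P'$ is immediate from Definition \ref{def:vanishing}: the vanishing representation provides $s \approx_\Delta s'$ in the direct sum, and by Definition \ref{def:transet} the only change between $P$ and $P'$ is at the state $s$ itself. The Dirac-identity relation on the common state space, augmented by the pair $(\Delta_s, \Delta_{s'})$, therefore relates the two initial states; the subcase $s = s_0$ is handled by the explicit renaming $s \mapsto s'$ built into Definition \ref{def:transet}. For the rescaling link $P' \approx P'_{res}$, I would follow Remark \ref{rem:rescale}: iterating the self-loop component of $s \stackrel{\tau}{\to} \nu$ infinitely often and summing the resulting geometric series realises the rescaled distribution $\hat{\nu} := \frac{1}{1-\nu(s)}(\nu - s)$ as a combined weak transition $s \Rightarrow_C \hat{\nu}$ in $P'$, and conversely a convex combination of the rescaled transition with the stutter step recovers the original. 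Clause 1 of Theorem \ref{th:corrected} then yields bisimilarity via the Dirac-identity relation.

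The core of the argument is the elimination link $P'_{res} \approx P'^{\widehat{s}}$. Define a substitution map $\phi$ on distributions that replaces $s$ by $\hat{\nu}$ (and replaces $s_0$ by $s_0^{\circ}$ in case~3 of Definition \ref{def:elim}), and propose the candidate relation
$$\mathcal{R} := \{(\Delta_s, \hat{\nu})\} \cup \{(\Delta_t, \Delta_t) : t \in S \setminus \{s\}\} \cup \{(\mu, \phi(\mu)) : \mu \in Dist(S)\},$$
closed under the constructions of clause~2 of Theorem \ref{th:corrected}. The central observation is a substitution lemma: for every transition $t \stackrel{\alpha}{\rightarrow}'_{res} \mu$ with $t \neq s$, the definition of $\rightarrow''$ in Definition \ref{def:elim} gives precisely the matching transition $t \stackrel{\alpha}{\rightarrow} \phi(\mu)$ in $P'^{\widehat{s}}$, while the sole $\tau$-transition at $s$ is matched on the right by the stutter step $\phi(\Delta_s) = \hat{\nu} \stackrel{\hat{\tau}}{\Rightarrow} \hat{\nu}$.

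The main obstacle I anticipate is twofold: the case distinction for Definition \ref{def:elim} (ordinary elimination, probability-one self-loop, and vanishing initial state with predecessors), and establishing that $\phi$ commutes with combined weak transitions, i.e.\ that every $\mu \stackrel{\hat{\alpha}}{\Rightarrow}_C \rho$ in $P'_{res}$ lifts to $\phi(\mu) \stackrel{\hat{\alpha}}{\Rightarrow}_C \phi(\rho)$ in $P'^{\widehat{s}}$ and vice versa. Case~(iii) is particularly delicate and relies on the inserted transition $s_0^{\circ} \stackrel{\tau}{\to} \hat{\nu}$ to simulate the old initial $\tau$-move of $s_0$. Once the substitution lemma is in place, verifying the two clauses of Theorem \ref{th:corrected} at every pair in $\mathcal{R}$ reduces to routine bookkeeping.
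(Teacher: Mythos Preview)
Your proposal is correct and follows the same skeleton as the paper: factor as $P \approx P' \approx P'^{\widehat{s}}$ and invoke transitivity of $\approx$, with the first link coming directly from the definition of a vanishing representation. The only difference is that the paper dispatches the second link in one line by citing the proof of Theorem~7 in \cite{lics:10}, whereas you unfold that argument explicitly via the intermediate rescaled automaton $P'_{res}$ and the substitution relation $\mathcal{R}$; your sketch of that unfolding is sound and simply makes self-contained what the paper leaves to the external reference.
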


\begin{proof}
By definition $P\approx P'$. With the same arguments as in \cite{lics:10} (proof of Thm.~7) it follows that $P'\approx P'^{\widehat{s}}$.
So by transitivity of $\approx$ the claim follows.
\end{proof}

%\js{new lemma - just for understanding the structures better. This is not referenced at any point, so far}\js{Now it is referenced!}
%\js{Im Beweis dieses Lemmas wird von Elimination gesprochen. Daher muss es nach der Eliminations-Definition kommen!}
%\ms{OK, gut dass Dir das aufgefallen ist!}
The following lemma helps to understand the difference between na\"ively vanishing and nn-vanishing states.
\begin{lemma}
\label{at_least_two_states}
For every vanishing representation $(s,\tau,\mu)$ that renders
%a nn-vanishing
state $s$ nn-vanishing there must be at least two 
distinct states $t_1,t_2 \in Supp(\mu)$ such that $t_1 \not \approx_\Delta s$, $t_2 \not \approx_\Delta s$ and $t_1 \not \approx_\Delta t_2$.
\end{lemma}

\begin{proof}
Let $(s,\tau,\mu)$ be the vanishing representation and assume that 
there is only \emph{one} state $t\in Supp(\mu)$ such that $t \not \approx_\Delta s$.
Without loss of generality, we may work on the quotient with respect to $\approx_\Delta$. Further we may assume that $\mu$ is rescaled, i.e.~$s\notin Supp(\mu)$\footnote{For otherwise pretend that $s$ is the initial state and eliminate it, i.e.~replace it by a transient copy. By Lemma \ref{lemma:1} it is clear that bisimilarity is not lost by this operation.}.
%\js{Die Klammer ist jetzt eine Fussnote geworden. Verstehst Du die Fussnote nun?}
%\ms{Ja, ich denke, das habe ich jetzt gefressen.} 
Therefore we may assume that $\mu=\Delta_t$.
%\js{Habe den folgenden Satz nun wunschgemaess weggelassen}
%\ms{folgenden Satz wuerde ich weglassen}
%By definition of a vanishing representation it is clear that all transitions -- except one $\tau$ transition --
%may be removed and still it holds that the remaining automaton is bisimilar to the original one.
%\ms{ab hier noch etwas umformuliert}
%\js{OK}
%But with
Thus we obtain the vanishing representation $s\stackrel{\tau}{\rightarrow}\Delta_t$ from which it would follow that $s\approx_\Delta t$, which is a contradiction.
\end{proof}

%\begin{remark}
%For a given PA $P$ the set of vanishing states is a superset of the set of trivially vanishing states. 
%By definition, every non-trivially vanishing state $s$ must have a vanishing representation where $s$ is trivially vanishing.
%So, by successively replacing the $P$ by vanishing representations, only trivially vanishing states remain.
%Moving to vanishing representations,
%both sets are equal.
%\end{remark}

\begin{lemma}
\label{lemma:elim_naively_van}
%Not necessarily all vanishing states may be eliminated.
%\js{22.7.: added the term ``na\"ively'' (2x)}
It is not always possible to eliminate all na\"ively vanishing states, but 
by elimination it is always possible to reach an automaton without na\"ively vanishing states.
%\ms{Der Nachsatz ist neu}
%\js{Ist OK.}
\end{lemma}

\begin{proof}
%\js{22.7.: added the term ``na\"ively'' (1x)}
A trivial example is given in Fig.~\ref{fig:not_nn_van}. Here both $s$ and $s'$ are na\"ively vanishing.
But only either $s$ or $s'$ may be eliminated (it is easy to see that after the first elimination, 
the other state is no longer
%na\"ively \ms{dieses naively koennte weg}
vanishing but has become tangible). 
\end{proof}

\begin{figure}
  \centering
  \includegraphics[width=5cm]{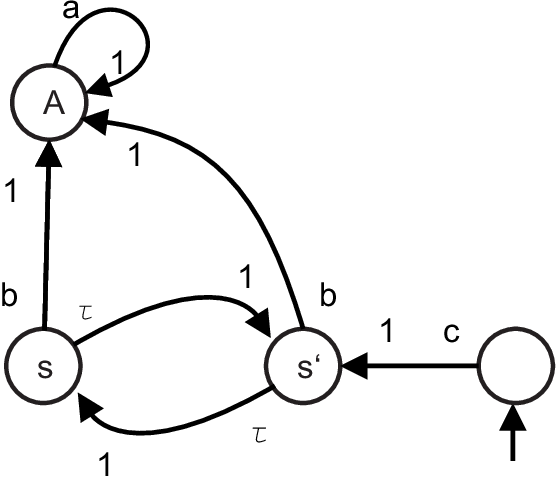}\\
  \caption{Two na\"ively vanishing states}
  \label{fig:not_nn_van}
\end{figure}

% Lemma lem:not_all_eliminated has been removed as Markus didn't like it. Therefore we have to remove the reference here.
%A deeper understanding of Lemma~\ref{lemma:elim_naively_van} will be provided by Lemma \ref{lem:not_all_eliminated}.
%Note that after elimination of one of these vanishing states, the other is rendered as tangible in the resulting automaton.

\begin{lemma}
\label{lem:all_nn_eliminated}
All nn-vanishing states %in a MA
can be eliminated
(except for a nn-vanishing initial state which can only be made transient).
\end{lemma}

\begin{proof}
%\ms{Skizze eines neuen Beweises (ich denke die Idee ist aehnlich wie
%beim originalen JS-Beweis):}\js{OK}
We show that the situation described in the proof of
Lemma~\ref{lemma:elim_naively_van}, cannot occur for nn-vanishing states.
I.e.\ we show that the elimination of a nn-vanishing state $s$ may not
cause another nn-vanishing state $s'$ to lose its property of being 
nn-vanishing.
Again we work on the quotient with respect to $\approx_\Delta$.
Assume that $s$ is nn-vanishing with vanishing representation $(s,\tau,\mu)$.
Let $s'$ be another nn-vanishing state with vanishing representation $(s',\tau,\gamma)$.
In the case $s \not\in Supp(\gamma)$ the elimination of $s$ will not affect
$s'$.
In the case $s \in Supp(\gamma)$ the elimination of $s$ will cause $s$ to be
replaced by $\mu$ in the vanishing representation of $s'$.
We denote the resulting vanishing representation of $s'$ as 
$(s',\tau,\gamma')$\footnote{The resulting distribution $\gamma'$ is still bisimilar to the original one because of Lemma 4.}.
Since according to Lemma~\ref{at_least_two_states} we know that $Supp(\mu)$
still contains at least two states $t_1$ and $t_2$ such that $t_1 \not\approx_\Delta t_2$,
we conclude that $\gamma'$ contains at least one state ($t_1$ or $t_2$) that is not
in the $\approx_\Delta$ relation with $s'$.
Thus, after the elimination of state $s$, state $s'$ is still nn-vanishing.
\end{proof}

Directly from the proof of Lemma \ref{lem:all_nn_eliminated} we may deduce:
\begin{corollary}
\label{cor:nn-van-property-kept}
The property of a state $s$ being nn-vanishing is not destroyed by other states being eliminated.
\end{corollary}

\begin{example}
\label{ex:nn-van_rep}
This example shows that it is not enough to consider only strong emanating transitions when
searching for vanishing representations such that non-bisimilar states are reached.
Assume that $p\in (0,1)$.
We start with the automaton in Fig.~\ref{fig10:EF_nn}. Clearly both states $E$ and $F$ are trivially vanishing.
Considering only strong transitions we see that $E$ is nn-vanishing (vanishing representation $P_{(E,p\Delta_C\oplus (1-p)\Delta_D)}$), while 
$F$ would be erroneously
%\ms{``erroneously'' eingefuegt}
%\js{Na gut. Erroneously ist es erst wenn man weiss, dass man auch schwache Transitionen beachten muss.}
detected as na\"ively vanishing with its vanishing representation $P_{(F,\Delta_E)}$.
After elimination of $E$ we obtain the automaton in Fig.~\ref{fig10:F_nn}.
As elimination leads to bisimilar results (Lemma \ref{lemma:1})
%\ms{Wozu diese Bemerkung ``As elimination ...''?}
%\js{10b ist ja die Elimination von 10a. Es muessen also die Startzustaende wieder bisimilar sein nach Lemma 4. Aus Bild 10b ist aber dann sofort klar, dass
%F trivially \& non-naively vanishing ist. Das ist aus 10a erst mal nicht so klar. Wenn Dir das aus 10a auch schon klar ist, dann schenken wir uns Bild 10b und 10c -- und die Bemerkung.}
we see -- after possibly rescaling the transition emanating from $D$ leading to Fig.~\ref{fig10:F_nn_rep} -- 
that also 
$F$ must be nn-vanishing with vanishing
representation $P_{(F,p\Delta_C\oplus (1-p)\Delta_D)}$.
%\ms{Die folgende Bemerkung ist hier vollkommen unmotiviert. Es wurde noch nicht einmal erklaert was ``continuous state'' bedeutet. Gehoert also in den Appendix.}
%This is the main reason why the concept of ``continuous states'' is not suitable for reducing the weak bisimulation problem
%to na\"ive weak bisimulation. For details we confer to \ref{sec:saarbruecken_continuous}.
%\js{habe das jetzt entfernt, da Bild 16 im Anhang eigentlich die gleiche Information traegt.}
\end{example}

\begin{figure}
  \centering
  \subfloat[E and F trivially vanishing]{\label{fig10:EF_nn}\includegraphics[width=7cm]{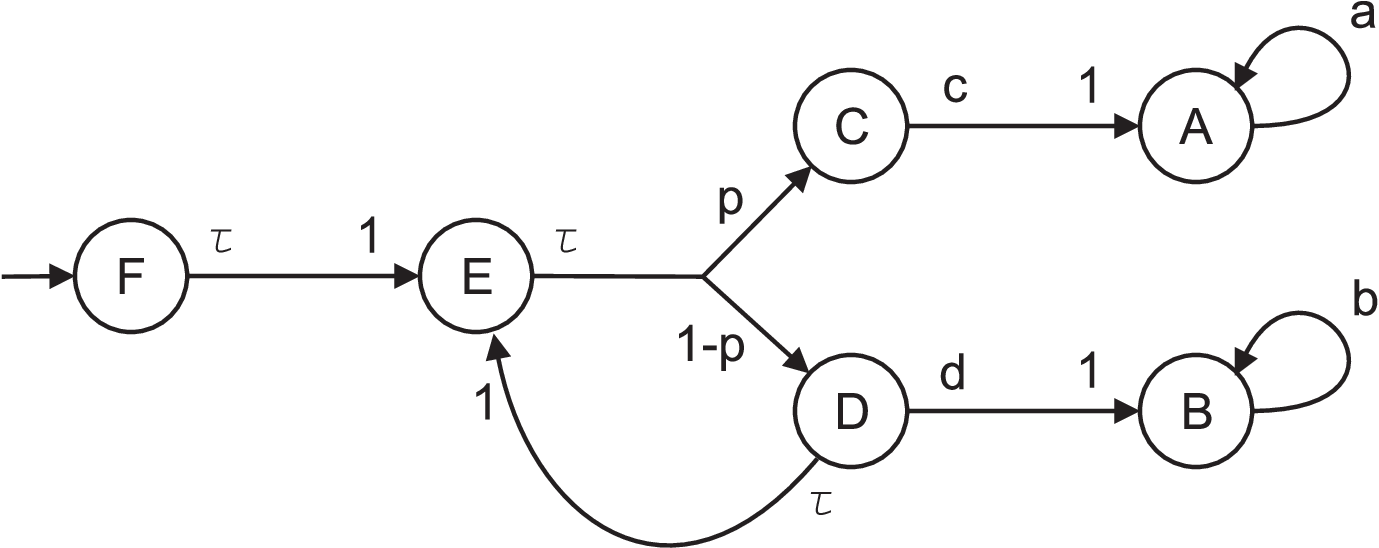}} \\
  \subfloat[E eliminated]{\label{fig10:F_nn}\includegraphics[width=5.5cm]{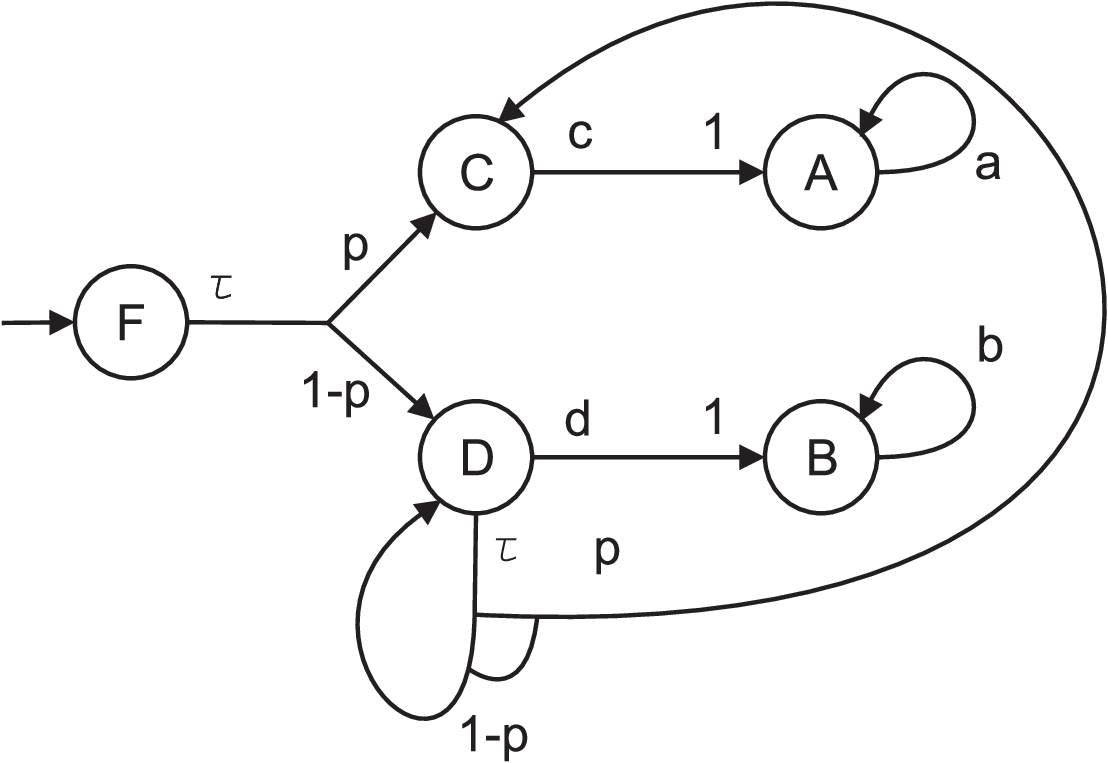}} \qquad
  \subfloat[D rescaled]{\label{fig10:F_nn_rep}\includegraphics[width=5cm]{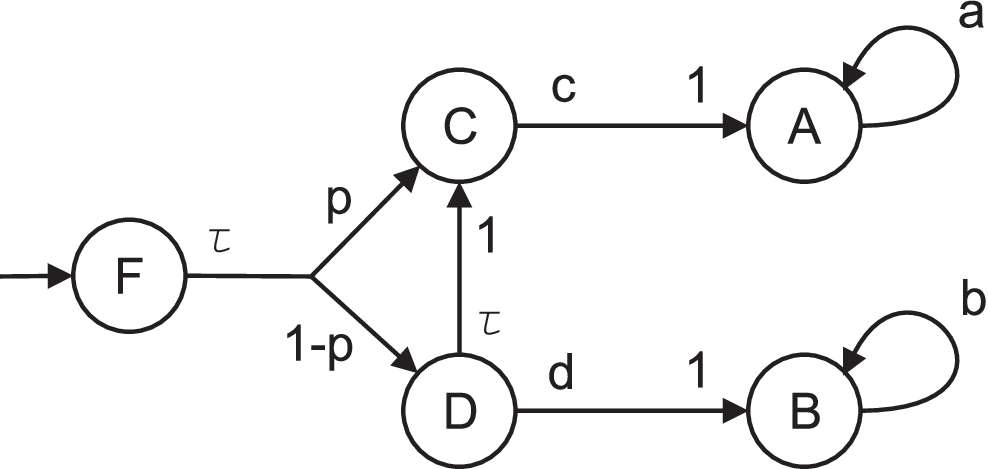}}
  \caption{Examples of nn-vanishing states}
  \label{fig:notions_of_vanishing2}
\end{figure}

We saw that all nn-vanishing states of an automaton may be eliminated, but not necessarily all na\"ively vanishing states. That is why 
we introduce $m\leq n$ in the following definition.
%The machinery up to now motivates the following definition:
\begin{definition}
Let $P=(S, Act, \rightarrow, \emptyset, s_0)$. Let $S^v=\{s^v_1,\ldots,s^v_n\}$ be the set of vanishing states.
Denote by $\widehat{P}$ the \emph{complete elimination} of $P$, i.e.~
$\widehat{P}:=(\ldots(P'^{\widehat{s^v_{i_1}}})'^{\widehat{s^v_{i_2}}}\ldots)'^{\widehat{s^v_{i_m}}}$, $m\leq n$, $i_j\in \{1,\ldots,n\}$ for all $j\in\{1,\ldots,m\}$,
such that $\widehat{P}$ contains no more vanishing states.
%\ms{letzte 4 Woerter hinzugefuegt!} 
%\js{es war so gemeint, wie es vorher dastand. Naively vanishing states koennen durchaus auch tangible werden, nachdem man andere vanishing Zust. eliminiert hat.}
%\ms{Ist mir unklar, nochmal diskutieren!}
Let $\widehat{P}^{\somesymbol}$ denote the elimination of all nn-vanishing states.
\end{definition}

Using this definition, we can state the following important lemma:

\begin{lemma}[Complete elimination and bisimilarity]
\label{lemma:elimbisim}
For two PA $P_1$ and $P_2$ it holds: $P_1\approx P_2 \Leftrightarrow \widehat{P}_1\approx \widehat{P}_2 \Leftrightarrow \widehat{P}^{\somesymbol}_1\approx \widehat{P}^{\somesymbol}_2$
%are weakly bisimilar if and only if $\widehat{P}_1\approx \widehat{P}_2$
\end{lemma}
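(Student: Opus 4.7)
The plan is to reduce the statement to Lemma~\ref{lemma:1} by induction on the number of eliminations, and then chain the resulting equivalences via transitivity and symmetry of $\approx$ (granted by Theorem~1 of \cite{lics:10,avacsreport}).

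First I would establish the auxiliary claim that $P\approx \widehat{P}$ for every PA $P$. The argument proceeds by induction on $n=|S^v|$. The base case $n=0$ is trivial. For the inductive step, assume the claim holds for every PA with at most $n-1$ vanishing states. Applying Lemma~\ref{lemma:1} to the first vanishing state $s_1^v$ yields $P \approx P'^{\widehat{s_1^v}}$. The subtle point here, and the step I expect to be the main obstacle, is verifying that the subsequent states $s_2^v,\ldots, s_n^v$ (which were vanishing in $P$) remain vanishing in $P'^{\widehat{s_1^v}}$, so that the iterative definition of $\widehat{P}$ is meaningful and the induction hypothesis applies. This should follow from the facts that (i) the remaining states are unchanged as states, (ii) $P\approx P'^{\widehat{s_1^v}}$ transports the equivalence classes of $\approx_\De$ faithfully between the two PAs, and (iii) the property of possessing a vanishing representation is stated entirely in terms of $\approx_\De$. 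Once this is checked, the inductive hypothesis gives $P'^{\widehat{s_1^v}} \approx \widehat{P}$, and transitivity yields $P\approx \widehat{P}$.

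Next I would repeat the same inductive argument for the nn-vanishing case. Since every nn-vanishing state is in particular vanishing, Lemma~\ref{lemma:1} applies verbatim to each elimination step in the construction of $\widehat{P}^{\somesymbol}$. The preservation of the nn-vanishing property under a single elimination is argued exactly as above: non-na\"ive vanishing is characterised by Corollary~\ref{cor:main} purely via $\approx_\De$, which is preserved. Hence $P \approx \widehat{P}^{\somesymbol}$.

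Finally, the three-way equivalence follows by pure equational reasoning. From $P_i\approx \widehat{P}_i$ and $P_i\approx \widehat{P}^{\somesymbol}_i$ for $i\in\{1,2\}$, together with symmetry and transitivity of $\approx$, one immediately obtains
\[
P_1\approx P_2 \;\Longleftrightarrow\; \widehat{P}_1\approx \widehat{P}_2 \;\Longleftrightarrow\; \widehat{P}^{\somesymbol}_1\approx \widehat{P}^{\somesymbol}_2,
\]
which is the claim. No further combinatorial argument is needed, as everything is absorbed into Lemma~\ref{lemma:1} and the equivalence-relation properties of $\approx$.
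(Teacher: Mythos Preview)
Your proposal is correct and follows essentially the same route as the paper: iterate Lemma~\ref{lemma:1} along the chain of eliminations to obtain $P_i\approx\widehat{P}^{\somesymbol}_i$ and $P_i\approx\widehat{P}_i$, then conclude the three-way equivalence by transitivity and symmetry of $\approx$. The paper presents this via a single commuting diagram and is silent on the preservation of (nn-)vanishing under intermediate eliminations, a point you explicitly flag and justify; otherwise the arguments coincide.
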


\begin{proof}
By Lemma \ref{lemma:1} we know that elimination preserves weak bisimilarity.
The following diagrams show this by the right arrows.
$$
\xymatrix{ P_1 \ar[d] \ar[r]^\approx & \widehat{P}^{\somesymbol}_1 \ar[d] \ar[r]^\approx & \widehat{P}_1 \ar[d] \\
            P_2  \ar[r]^\approx        & \widehat{P}^{\somesymbol}_2 \ar[r]^\approx & \widehat{P}_2
            }
$$
As soon as one of the down arrows is a weak bisimulation, by transitivity of weak bisimulation we immediately get that the other two
arrows are also weak bisimulations.
%
% then they must also be bisimilar 
%after/before elimination by transitivity of weak bisimulation. The same diagram can be drawn with $\widehat{P}^\somesymbol$ instead of $\widehat{P}$.
\end{proof}

%\begin{corollary}
%A sufficient criterion for weak bisimilarity is that the complete eliminations are weakly bisimilar in the na\"ive sense.
%\end{corollary}

\begin{remark}
In every example from \cite{lics:10}, elimination leads
%After elimination of vanishing states \emph{all} examples in \cite{lics:10} lead 
to isomorphic automata (assuming that we replace vanishing initial
states by their vanishing representation).
\end{remark}

It is clear by Lemma~\ref{lemma:naive-is-weak} that $\widehat{P}_1 \approx_{\text{na\"ive}} \widehat{P}_2\Rightarrow \widehat{P}_1 \approx \widehat{P}_2$
and therefore (by Lemma \ref{lemma:elimbisim}) $P_1 \approx P_2$.
Now we try to understand why it is also the case that $P_1 \approx P_2 \Rightarrow \widehat{P}_1 \approx_{\text{na\"ive}} \widehat{P}_2$.
\section{Canonical vanishing representations and properties of vanishing states}
\label{subsec:canon_van_repres}

In this section we prove that every nn-vanishing state $s$ has a vanishing representation $(s,\tau,\mu)$ where $Supp(\mu)$ only consists of 
nn-tangible states and the weak transition $s\stackrel{\tau}{\Rightarrow}\mu$ is driven by a Dirac determinate scheduler (i.e. it is not a combined transition).
%\ms{Erklaerung in Klammern hinzugefuegt}

%\js{Evtl. hier noch eine Definition, was eine canonical representation ist, einfuegen}

%\js{18.8.: Eigentlich ist das folgende Lemma doch eher ein Korollar aus Thm.~1. Ersetzung von nn-van Zust. durch aequivalente Verteilungen. 
%    Das fuehrt schliesslich zu van rep bestehend aus nn-tangible Zustaenden}
%\ms{Da wir dieses Lemma ein paar mal zitieren, schlage ich vor, wir lassen es wie's ist}

%\todo{add a lemma that a vanishing representation can be also without nn-vanishing states.}
%\js{done}
\begin{lemma}[Vanishing representations don't need nn-vanishing states]
\label{lemma:non-vanishing}
%\ms{Lemma umformuliert, aber inhaltlich (hoffentlich) nichts geaendert.}
%\js{Das Lemma liest sich dann als ``every nn-vanishing state $s$ has a canonical vanishing representation}
Every nn-vanishing state $s$ has a vanishing 
representation $(s,\tau,\mu')$ where $Supp(\mu')$ does not contain
any nn-vanishing state
(i.e.\ only nn-tangible states are in $Supp(\mu')$).
%\js{Umformulierung OK.}
\end{lemma}

\begin{proof}
%Without loss of generality we only consider the nn-vanishing case (remember that vanishing states that are not nn-vanishing will 
%not change their equivalence class, so they can be easily substituted by its successor class).
%\ms{VERSUCH EINES NEUEN BEWEISES:}
Let $s$ be nn-vanishing with vanishing representation $(s,\tau,\mu)$.
We assume that $s \not\in Supp(\mu)$, otherwise rescale.
If $Supp(\mu)$ does not contain any nn-vanishing state we are done.
Otherwise we perform ``successive'' eliminations (and possibly rescalings)
of the nn-vanishing states in $\mu$ until all nn-vanishing states
are eliminated.
In detail: Replace all nn-vanishing states in $Supp(\mu)$ by their
vanishing representations, leading to a new vanishing representation
$(s, \tau, \mu')$ of state $s$.
If $s \in Supp(\mu')$ then rescale.
In case $Supp(\mu')$ contains only nn-tangible states we are finished.
Otherwise set $\mu := \mu'$ and perform another round of elimination / rescaling.
The finally resulting vanishing representation $(s,\tau,\mu')$ has only nn-tangible states.
The procedure terminates according to Lemma \ref{lem:all_nn_eliminated}.
\end{proof}

\begin{theorem}[nn-vanishing states correspond to ``real'' distributions]
\label{thm:nn-def-reformulation}
%\js{Neue Ueberarbeitung nach Diskussion am 19.7. Bitte genau anschauen. Mit dem Theorem steht oder faellt das ganze Papier!}
Let $P = (S, Act, T, \emptyset, s_0)$ be a PA.
A state $s\in S$ is nn-vanishing iff %it has a transition $s\Rightarrow\mu$ where 
there exists a distribution $\mu$ such that
$\Delta_s\approx \mu$ but $\exists t\in Supp(\mu)$ such that $s\not \approx_\Delta t$.
%\ms{Theorem etwas umformuliert, Aussage sollte genau dieselbe sein wie zuvor.}
\end{theorem}

\begin{proof}
%\ms{Beweis hier und da etwas umformuliert}
$\Rightarrow$ Assume that $s$ is nn-vanishing, then we may use the vanishing representation.
%\ms{ja, so wuerde ich auch beginnen, aber dann (einfacher) so weitermachen:}
Let $(s,\tau,\mu')$ be the vanishing representation of $s$.
Then we have both $\Delta_s \approx_\Delta \mu'$ (which follows directly from the definition of weak bisimilarity) and
$\exists t\in Supp(\mu')$ such that $s\not \approx_\Delta t$.
Therefore we can use $\mu = \mu'$ to satisfy the right hand side of the Theorem.
%\ms{bis hier. Damit ist der Beweis für die Richtung $\Rightarrow$ schon fertig.
%Der Rest des Beweises in Richtung $\Rightarrow$ ist mir nicht verständlich.
%Es wird so getan, als werde $s$ eliminiert. Das braucht man aber gar nicht.}
%\js{Einverstanden. Habe aber noch eine Fussnote eingefuegt.}
%\ms{Fussnote ist jetzt Klammer geworden}
%It costs nothing when we assume that we leave a transient copy $s'$ of $s$ in the transition system. Clearly $s'\approx_\Delta s$.
%%Next, by our concept of elimination (case initial state), we may substitute $s$ by a transient copy (no matter if $s$ is actually the initial state or not).
%But then it is easy to see from the definition of weak bisimulation, that $\Delta_s'\approx \mu$.
%By the definition of a nn-vanishing state it is clear that we find $t\in Supp(\mu)$ such that $s\not \approx_\Delta t$.

$\Leftarrow$ Without loss of generality we may work on quotients with respect to $\approx_\Delta$.
%\footnote{ 
%js{Idee dahinter: Wenn ich auf dem Quotienten weiss, dass $|Supp(\mu)|=1$, dann ist es ein Zustand. Wenn man bisimilare Zustaende zulaesst, dann
%st das nicht mehr so klar, also $|Supp(\mu)|>1$, aber doch der gleiche Zust, waere moeglich. Daher waehle ich hier den Quotienten.}}.
%\js{26.7. Neuer Versuch, es genauer aufzuschreiben.}
Assume that on the quotient it holds that $\mu=\oplus_{i\in \{0,\ldots,n\}}d_i \Delta_{t_i}$ for some $n\in\mathbb{N}$ ($t_i\neq t_j$ whenever $i\neq j$) and
without loss of generality we assume that $t^0 \not \approx_\Delta s$.
%Further we may assume by Remark \ref{rem:rescale}, that $\mu$ is already rescaled, i.e.~$s\notin Supp(\mu)$.
Firstly, we show that there must be a weak combined transition $s\stackrel{\tau}{\Rightarrow}_C \gamma$ 
with $\exists x\in Supp(\gamma):x \not \approx_\Delta s$ and $\Delta_s\approx \gamma$:
From $\Delta_s\approx \mu$ we get by \cite[Lemma~11]{avacsreport} a transition 
$s\stackrel{\tau}{\Rightarrow}_C\gamma=\oplus_{i\in \{0,\ldots,n\}}\gamma_i$
such that for all $i\in \{0,\ldots,n\}$ we get $d_i\Delta_{t_i}\approx \gamma_i$.
First note that by \cite[Lemma~9]{avacsreport} we get $\Delta_s\approx \gamma$, 
as from $d_i\Delta_{t_i}\approx \gamma_i$ it follows that $\mu=\oplus_{i\in \{0,\ldots,n\}}d_i \Delta_{t_i}\approx \oplus_{i\in \{0,\ldots,n\}}\gamma_i=\gamma$
and we have $\Delta_s\approx \mu$ as a precondition. So also $\Delta_s\approx \gamma$.
For $(s,\tau,\gamma)$ being a vanishing representation and $s$ being nn-vanishing, it remains to show that there exists some $x\in Supp(\gamma)$ such that $s\not \approx_\Delta x$.
Assume that this is not the case, i.e.~$\forall x\in Supp(\gamma):x\approx_\Delta s$.
%\ms{bis hierher keine Einwaende.}\js{26.7. Bitte nochmal anschauen. Habe es gerade noch einmal geaendert!}
Now assume that $\gamma_i=\oplus_{j\in J_i}b_{ij}\Delta_{x_{ij}}$ ($x_{ij}\neq x_{ik}$ whenever $j\neq k$).
But then we have 
$$d_{0}\Delta_s\not \approx d_{0}\Delta_{t^0} \approx \underbrace{\oplus_{j\in J_{0}}b_{0j}\Delta_{x_{0j}}}_{\gamma_0}\approx d_{0}\Delta_s $$
%\ms{Verstehe die vorige Zeile nicht. Verstehe schon nicht die Bezeichnungen:
%Was ist $d_0$, $\Delta_{t^0}$, $J_i$, $b_j$, $x_j$?}
%\js{Die erste Ungleichung folgt aus der Voraussetzung, da $\Delta_s\not \approx \Delta_{t^0}$. Das
%folgt freilich auch fuer alle (Sub-)distributions mit dem Vorfaktor $d_0\leq 1$.
%Die zweite Gleichung ist ja gerade Lemma 11 aus Avacs. 
%Die dritte Gleichung folgt aus zwei Beobachtungen:
%a) aus der zweiten Gleichung folgt, nach Definition von schwacher Bisimulation,
%dass $\sum_{j\in J_0}b_{0j}=d_0$. b) Da nach Annahme fuer alle $x_{0j}$ gilt $x_{0j}\approx_\Delta s$ folgt nun freilich aus Beobachtung a), 
%dass $\gamma_0\approx d_0\Delta_s$}
which is a contradiction
(the rightmost $\approx$ follows
from $\sum_{j\in J_0}b_{0j}=d_0$ and that for all $x_{0j}$ we have $x_{0j}\approx_\Delta s$,
therefore $\gamma_0\approx d_0\Delta_s$).
So we conclude that there must be a $x\in Supp(\gamma)$ such that $x\not\approx_\Delta s$.
Therefore we have a transition $s\stackrel{\tau}{\Rightarrow}_C\gamma$ with
$\Delta_s \approx \gamma$ and $\exists x \in Supp(\gamma): s\not \approx_\Delta x$.
%\ms{warum hier $t$ und nicht wie 10 Zeilen weiter oben $t^0$?}
%\js{such Dir einen Buchstaben aus. Muss halt konsistent sein. Der Witz ist, dass das $t^0$ aus $Supp(\mu)$ ist, waehrend $x$ oder $t$ aus $Supp(\gamma)$ ist.
%Wenn $t_0$ z.B. nn-vanishing waere, dann muss es in $Supp(\gamma)$ nicht unbedingt vorkommen.}

%\todo{Hier nochmal pruefen: wann $s$ und wann $t$?}
Secondly, we have to show that $(s,\gamma)$ is a vanishing representation, i.e.~that $s \approx_\Delta t$ when comparing $\replace{P}{\replacement{s}{t}}$
and $\locallyChangedAut{s}{\sd}$.
So we have to show that when only using the transition $(s,\tau,\gamma)$ all other transitions emanating from $t$ can be mimicked and thus omitted.
Now let $t\stackrel{a}{\rightarrow}\rho$ be an arbitrary transition from $\replace{P}{\replacement{s}{t}}$. 
By $\Delta_t\approx \gamma$
%\footnote{\js{We have shown this in the first part, but now we renamed $s\rightarrow t$}}
we get directly from Definition \ref{def:weakbisim}
%and Lemma 11 in \cite{avacsreport}
%we get 
a hypertransition $\gamma\stackrel{a}{\Rightarrow}_C\rho'$
such that $\rho\approx \rho'$.
It remains to show that $\gamma\stackrel{a}{\Rightarrow}_C\rho'$ is possible in $\locallyChangedAut{s}{\sd}$ (i.e.~without using the transition $t\stackrel{a}{\rightarrow}\rho$).
%\ms{... aber leider nur bis hier}
%\js{warum ist der naechste Satz nicht klar? Alle Zustaende in einer MEC sind doch klarerweise bisimilar, oder?}
If there was a transition $\gamma \stackrel{\tau}{\Rightarrow}_C \Delta_t$, this 
would be a contradiction to the existence of $\exists x\in Supp(\gamma):x \not \approx_\Delta t$
because all states in $Supp(\gamma)$ would be bisimilar to $t$ due to the loop $t\stackrel{\tau}{\Rightarrow}_C \gamma \stackrel{\tau}{\Rightarrow}_C \Delta_t$.
%\js{kannst Du die Substitution zumindest an dem Beispiel nachvollziehen?}
%Now suppose that such a transition does not exist.
Therefore we know that such a transition does not exist\footnote{So the probability for returning to $s$ is strictly smaller than one.}.
That means we may successively substitute every occurrence of $t\stackrel{a}{\rightarrow}\rho$ by the 
transition $t\stackrel{\tau}{\Rightarrow}_C \gamma \stackrel{a}{\Rightarrow}_C \rho'$. 
Note that such a substitution clearly does not change bisimilarity, as the substituted distribution is bisimilar (by \cite[Lemma~9]{avacsreport}
distributions containing those different subdistributions will still be bisimilar).
%\todo{Wenn moeglich, die Substitution allgemein fassen. Oder reicht das Beispiel hier?}
%\ms{Ich denke, es muesste reichen.}
%\js{Nun ja, ein bisschen was fehlt hier schon noch:}
The successive substitutions make the probability of choosing $t\stackrel{a}{\rightarrow}\rho$ tend to zero, i.e.~this transition is not needed.
As our example transition was general (and not equal to $t\Rightarrow \gamma$), we see that state $t$ must be nn-vanishing. 
%as the results can be lifted back to the original PA.
%\ms{Voriger Satz ist mir unklar. Was ist $s$? Diesen Satz evtl. weglassen?}
%\js{Der Satz gehoerte zum Beweis, nicht zum Beispiel. Habe ihn wieder zurueck kopiert. Der Satz besagt, dass Du auch alle anderen Transitionen, die nicht
%die vanishing representation sind, weglassen kannst.}
\end{proof}

\begin{example}
%\js{Bitte nochmal anschauen. Der Reviewer Kommentar war Anlass, das Beispiel noch ein bisserl umzustricken. Es soll klarer werden, was das Ziel der ganzen Aktion ist: Vanishing representations!}
%\js{Soll man das nicht in eine example-Umgebung packen, wie die anderen Beispiele auch -- evtl. mit einem einleitenden Satz zwischen Beweis und Example-Umgebung versehen?}
%\ms{Das Bsp steht jetzt in einer Example-Umgebung}
%\js{OK. Danke}
A basic example of this kind of substitution used in the proof is given in Fig.~\ref{fig:example_substitution}.
\begin{figure}
  \centering
  \includegraphics[width=5cm]{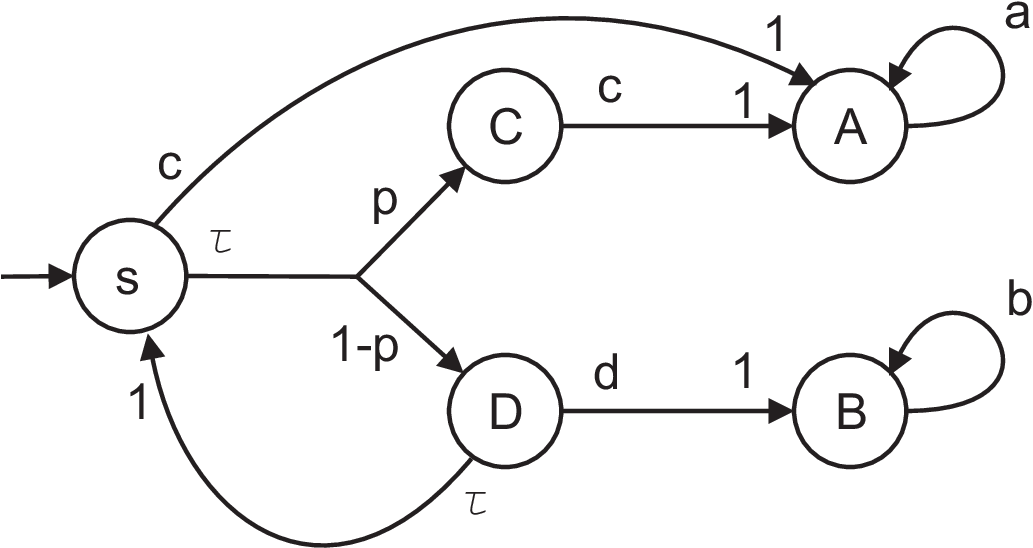}\\
  \caption{Example for the substitution of transitions
%\ms{im Bild $E$ durch $s$ ersetzen. Text habe ich schon angepasst.}
%\js{Erledigt. Wenn Dir das besser gefaellt $\rightarrow$ ``schoener wohnen''}
}
  \label{fig:example_substitution}
\end{figure}
Let $(s,p\Delta_C\oplus (1-p)\Delta_D)$ be the candidate for the vanishing representation. We have to show that
$s \stackrel{c}{\rightarrow}\Delta_A$ is superfluous.
According to the construction of the previous proof we get a hypertransition $p\Delta_C\oplus (1-p)\Delta_D\stackrel{c}{\Rightarrow}_C\Delta_A$.
We assume that this hypertransition is driven by the transitions $\Delta_C \stackrel{c}{\rightarrow}\Delta_A$ and $\Delta_D\stackrel{\tau}{\rightarrow}\Delta_s\stackrel{c}{\rightarrow}\Delta_A$.
%Assume that we know that this hypertransition 
%is generated by the transitions $C\stackrel{c}{\rightarrow}\Delta_A$ and $D\stackrel{\tau}{\rightarrow}\Delta_s\stackrel{c}{\rightarrow}\Delta_A$.
This leads us to the sequence of transitions
$s \stackrel{\tau}{\rightarrow}(p\Delta_C\oplus (1-p)\Delta_D)\stackrel{\tau}{\Rightarrow}(p\Delta_C\oplus (1-p)\Delta_s)\stackrel{c}{\rightarrow} \Delta_A$,
where $s \stackrel{c}{\rightarrow}\Delta_A$ is taken with probability $(1-p)<1$, as assumed.
So in the next substitution step, we may use
$s \stackrel{\tau}{\rightarrow}(p\Delta_C\oplus (1-p)\Delta_D)\stackrel{\tau}{\Rightarrow}(p\Delta_C\oplus (1-p)\Delta_s)\stackrel{\tau}{\Rightarrow}(p(2-p)\Delta_C\oplus (1-p)^2\Delta_s)\stackrel{c}{\rightarrow} \Delta_A$,
that is we utilise $s \stackrel{c}{\rightarrow}\Delta_A$ only with probability $(1-p)^2$. Taking this to infinity means that the transition
$s \stackrel{c}{\rightarrow}\Delta_A$
 is indeed redundant, so
$(s,p\Delta_C\oplus (1-p)\Delta_D)$ is really a vanishing representation.
%\ms{Hier Wording leicht geaendert und Fehler 2+p -> 2-p verbessert.}
%\js{OK. Sehe ein, dass es $2-p$ heissen muss. Wording scheint OK zu sein.}
%We want to achieve this without the transition $s \stackrel{c}{\rightarrow}\Delta_A$.
%Now we perform the first substitution step: $D\stackrel{\tau}{\rightarrow}\Delta_s\stackrel{\tau}{\rightarrow}(p\Delta_C\oplus (1-p)\Delta_D)\stackrel{\tau}{\Rightarrow}(p\Delta_C\oplus (1-p)\Delta_s)\stackrel{c}{\Rightarrow} \Delta_A$.
%%\ms{in der Formel ein fehlendes $p$ und zur besseren Lesbarkeit Klammern eingefuegt}\js{OK}
%In the first original transition $s\stackrel{c}{\rightarrow}\Delta_A$ was taken with probability $1-p$, after one substitution the probability is only
%$(1-p)^2$. This sequence obviously tends to zero, i.e.~the transition $\Delta_s\stackrel{c}{\rightarrow}\Delta_A$ can be mimicked using in state $s$ only the 
%emanating transition $s\stackrel{\tau}{\rightarrow}p\Delta_C\oplus (1-p)\Delta_D$.
%%\js{Diese Saetze waren aus dem Beweis und gehoerten NICHT zum Beispiel!}
%%The general case follows similarly, the only exception is that
%%one may have only equivalent, not equal, target distributions.
%%As the transition was general (and not equal to $s\Rightarrow \gamma$), we see that state $s$ must be nn-vanishing, as the results can be lifted back to the original PA.
%%\ms{Voriger Satz ist mir unklar. Was ist $s$? Diesen Satz evtl. weglassen?}
\end{example}

The following corollary states that all states within an equivalence class with respect to $\approx_\Delta$
are either nn-vanishing or not. 
%\ms{Das folgende Lemma wird nirgends mehr referenziert. Sollen wir es rausnehmen?} 
%\js{Wird referenziert in Lemma 10}
\begin{corollary}
\label{lem:nn-van-on-classes}
Let $P=(S, Act, \rightarrow, \emptyset, s_0)$ be a PA. If $s\in S$ is nn-vanishing and $s\approx_\Delta t$
then $t$ is also nn-vanishing.
\end{corollary}
%\ms{Das Lemma waere ein Korollar aus dem neuen Lemma 12, der Beweis hier koennte dann weg.}\js{Beweis ist entfernt.}
\begin{proof}
%\js{new explicit proof of this corollary. As reviewer 1 requested. Please check!}
It follows from Thm.~\ref{thm:nn-def-reformulation} that
there exists a distribution $\mu$ such that
$\Delta_s\approx \mu$ but $\exists x\in Supp(\mu)$ with $s\not \approx_\Delta x$.
The assumption $s\approx_\Delta t$ yields, by transitivity of $\approx$ (cf.\ Thm.~1 in \cite{lics:10,avacsreport}),
that $\Delta_t\approx \mu$. Still $t\not \approx_\Delta x$, for otherwise $s\approx_\Delta t \approx_\Delta x$
which would be a contradiction to the nn-vanishing property of $s$. So, again by Thm.~\ref{thm:nn-def-reformulation},
we find that $t$ must be nn-vanishing, too.
\end{proof}

The following theorem states that the nn-vanishing states are the ``obstacle'' between ``weak bisimulation'' and ``na\"ive weak bisimulation''.
This theorem renders Thm.~2
%\js{Hier Fussnote bzw. Stichelei entfernt.}
%\footnote{We show in \ref{sec:saarbruecken_continuous} that Lemma 16 in \cite{avacsreport} and therefore Thm.~2 in \cite{lics:10,avacsreport} currently must be considered as unproven. So our result can also be considered as a proof of this Theorem.}
in \cite{lics:10,avacsreport} more precisely.

\begin{theorem}
\label{th:main}
%In Lemma \ref{lemma:elimbisim} it is necessary and sufficient to show 
%$\widehat{P}_1\approx \widehat{P}_2$ in the na\"ive weak case.
It holds that $P_1\approx P_2 \Leftrightarrow \widehat{P}^{\somesymbol}_1\approx_{\text{na\"ive}} \widehat{P}^{\somesymbol}_2$.
\end{theorem}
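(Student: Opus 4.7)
The plan is to prove both directions separately; the easy one first. For $(\Leftarrow)$, Corollary~\ref{cor:naiveisweak} applied to the eliminated automata upgrades na\"ive weak bisimilarity to $\approx$, and Lemma~\ref{lemma:elimbisim} transports this back to $P_1\approx P_2$.

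For the hard direction $(\Rightarrow)$, I first apply Lemma~\ref{lemma:elimbisim} to obtain $\widehat{P}^{\somesymbol}_1\approx\widehat{P}^{\somesymbol}_2$. It then suffices to show that in any PA without nn-vanishing states, $\approx_\De$ on states already satisfies the na\"ive bisimulation condition of Definition~\ref{def:naiveweak}. Given $s\approx_\De t$ and a transition $s\stackrel{\alpha}{\rightarrow}\mu$, part~1 of Theorem~\ref{th:corrected} produces $\gamma$ with $t\stackrel{\hat{\alpha}}{\Rightarrow}_C\gamma$ and $\mu\approx\gamma$, and part~2 applied to $\mu\approx\gamma$ yields $\mu^*,\gamma^*$ with $\mu\Rightarrow\mu^*$, $\gamma\Rightarrow\gamma^*$ and $\mu^*(C)=\gamma^*(C)$ for every $C\in\nicefrac{S}{\approx_\De}$. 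Composing the two weak transitions gives $t\stackrel{\hat{\alpha}}{\Rightarrow}_C\gamma^*$, so the na\"ive condition from $s$ is discharged once we prove $\mu(C)=\mu^*(C)$ for all $C$.

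This class equality is exactly what the absence of nn-vanishing states buys. By the contrapositive of Corollary~\ref{cor:main}, every Dirac-determinate weak transition $x\Rightarrow\nu$ in $\widehat{P}^{\somesymbol}$ with $\De_x\approx\nu$ must satisfy $Supp(\nu)\subseteq[x]_{\approx_\De}$, since otherwise $x$ would be nn-vanishing and hence would have been eliminated. Factoring $\mu\Rightarrow\mu^*$ into its per-state components $x\Rightarrow\nu_x$ with $\mu^*=\oplus_x\mu(x)\nu_x$, it therefore suffices to establish $\De_x\approx\nu_x$ for each $x\in Supp(\mu)$; the invariant then forces $Supp(\nu_x)\subseteq[x]$, so $\nu_x(C)=\De_x(C)$, and summing over $x$ gives $\mu^*(C)=\mu(C)$.

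The main obstacle is extracting these per-state bisimilarities from the distribution-level $\mu\approx\mu^*$. Definition~\ref{def:weakbisim}(B) provides, for each $x\in Supp(\mu)$, a split $\mu^*\Rightarrow_C\mu^{*g}\oplus\mu^{*b}$ with $\mu(x)\De_x\approx\mu^{*g}$ and $(\mu-x)\approx\mu^{*b}$, but $\mu^{*g}$ arises from a further combined refinement of $\mu^*$ and need not coincide with $\mu(x)\nu_x$. I plan to reconcile the two splittings by induction on $|Supp(\mu)|$, exploiting the freedom granted by part~2 of Theorem~\ref{th:corrected} to rechoose $\mu^*$ so that the weak transition $\mu\Rightarrow\mu^*$ factors through the per-state splits of Definition~\ref{def:weakbisim}(B). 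Once each $\De_x\approx\nu_x$ is in place, the $\widehat{P}^{\somesymbol}$-invariant closes the simulation step, and the symmetric argument with $s$ and $t$ swapped completes the proof.
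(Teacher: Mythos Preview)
Your $(\Leftarrow)$ direction and the overall scaffolding of $(\Rightarrow)$ match the paper. The divergence is in how per-state information is extracted from $\mu\approx\gamma$, and here you make the argument harder than it needs to be.

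The paper does \emph{not} apply part~2 of Theorem~\ref{th:corrected} to the pair $(\mu,\gamma)$ as a whole. It first writes $\mu=\bigoplus_i c_i\De_{s_i}$ and invokes Lemma~11 of \cite{avacsreport} (a splitting lemma for $\approx$) to obtain $\gamma\Rightarrow_C\gamma'=\bigoplus_i c_i\gamma'_i$ with $\De_{s_i}\approx\gamma'_i$ for every~$i$. Only then is part~2 of Theorem~\ref{th:corrected} applied, separately to each pair $(\De_{s_i},\gamma'_i)$, which directly yields $\De_{s_i}\Rightarrow\mu'_i$ with $\De_{s_i}\approx\mu'_i$. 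The absence of nn-vanishing states then forces $\mu'_i(C)=\De_{s_i}(C)=\gamma''_i(C)$, and summing over~$i$ gives $t\stackrel{\hat\alpha}{\Rightarrow}_C\gamma''$ with $\mu(C)=\gamma''(C)$.

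By applying part~2 at the distribution level first, you manufacture the obstacle you then try to repair: the decomposition of $\mu\Rightarrow\mu^*$ into components $x\Rightarrow\nu_x$ carries no guarantee that $\De_x\approx\nu_x$. Your proposed fix via Definition~\ref{def:weakbisim}(B) and induction on $|Supp(\mu)|$ does not obviously close this gap, because the split $\mu^*\Rightarrow_C\mu^{*g}\oplus\mu^{*b}$ involves a \emph{further} weak step and need not align with the $\nu_x$-decomposition of the original transition $\mu\Rightarrow\mu^*$; ``rechoosing $\mu^*$'' to make the two splittings coincide is precisely what the splitting lemma (Lemma~11 of \cite{avacsreport}) provides, so your induction is effectively an attempt to reprove that lemma from scratch. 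The paper's ordering---split first, then apply part~2 per component---sidesteps the issue entirely.

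A minor point you omit and the paper handles: after $\ast$-elimination the (possibly replaced) initial state may still be nn-vanishing, but by construction it has no incoming transition, hence never occurs as some $s_i\in Supp(\mu)$ for a target distribution~$\mu$; this is why the ``no nn-vanishing'' hypothesis is available for every $s_i$.
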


\begin{proof}
%\ms{Hans sagt, der Beweis muesse nochmal ueberarbeitet werden, evtl. mit Hilfe des neuen Lemmas 12}
%\js{Beweis am 26.7. nochmal ueberarbeitet. Bitte diesen Beweis und den von Thm.~1 genau anschauen.}
$\Leftarrow$ is immediate by Lemma~\ref{lemma:naive-is-weak} and Lemma \ref{lemma:elimbisim}.\\
%
%$\Rightarrow$ Suppose that we are given a state $s$ in $P_1$. By weak bisimilarity we only know that $\De_s\approx \gamma$ for 
%gamma some distribution on states in $P_2$. We only have to show that the entire support of $\gamma$ is contained in a single
%equivalence class $C\in \nicefrac{S}{\approx_\De}$. Suppose that this is not the case. Then the support of $\gamma$ contains
%equivalence classes $C_i$, $i\in \{1,\ldots, n\}$, $n\geq 2$. Look at the diagram induced by part 2 of Thm.~\ref{th:corrected}:
%%(the down arrows can be seen as driven by Dirac determinate schedulers):
%$$
%\xymatrix{
%\De_s \ar[r]^{\approx} \ar@{=>}[d]^{\approx} & \gamma \ar@{=>}[d]^{\approx} \\
%\mu \ar@{=}[r]^{id}            &                              \gamma' 
%}
%$$
%As no more nn-vanishing states are present, the equivalence class of $s$, $C_i$, $i\in \{1,\ldots,n\}$ cannot be left without losing bisimilarity. 
%This is a contradiction, as $\De_s$ must be zero for all but one class, but $\mu$ and $\gamma'$ have to coincide on $\nicefrac{S}{\approx_{\De}}$.
%%
%\js{Removed the LICS theorem from the second part. Please check!}
$\Rightarrow$ From Lemma \ref{lemma:elimbisim} we already know $P_1\approx P_2 \Leftrightarrow \widehat{P}^{\somesymbol}_1\approx \widehat{P}^{\somesymbol}_2$.
So it remains to show that $\widehat{P}^{\somesymbol}_1\approx \widehat{P}^{\somesymbol}_2$ is already a na\"ive weak bisimulation. 
By the definition of weak bisimilarity (Definition \ref{def:weakbisim}) it follows that whenever $s\approx_\De t$ then
for every $s\stackrel{a}{\rightarrow}\mu$ we find $t\stackrel{a}{\Rightarrow}_C\gamma$
%\ms{der Hut hat in diesem Paper nix mehr zu suchen. kann man ihn einfach weglassen? Aber COMBINED muesste die schwache Transition sein!} 
%\js{Habe den Hut weggelassen -- was unproblematisch ist durch die Segala-Definition -- und die Transition auf kombiniert geaendert.}
with $\mu \approx \gamma$ (and vice versa).
%%%%%%%%%%%%%%%%%%%%%%%%%%%%%%%%%%%%%%%%%%%%%%%%%%%%%%%%%%%%%%%%%%%%555
% alter Satz (nicht OK)
%By Lemma~\ref{lemma:non-vanishing} 
%%\ms{Dieses Lemma kommt erst weiter hinten!}
%%\js{Das Lemma wurde jetzt vorgezogen.}
%we may take $\mu$ to be the vanishing representation consisting only of nn-tangible states.
%\ms{Immer noch ein Problem fuer mich: Warum konnen wir von einer van repres. mit Vert. $\mu$ sprechen, wenn noch nicht
%einmal klar ist, dass $s$ vanishing ist.}
%%%%%%%%%%%%%%%%%%%%%%%%%%%%%%%%%%%%%%%%%%%%%%%%%%%%%%%%%%%%%%%%%%%%%%%
% neuer Satz
%%%%%%%%%%%%%%%%%%%%%%%%%%%%%%%%%%%%%%%%%%%%%%%%%%%%%%%%%%%%%%%%%%%%555
%\js{Habe das folgende Argument getauscht, da das alte etwas ungluecklich formuliert war}
By Lemma \ref{lem:all_nn_eliminated} and Definition \ref{def:elim} it is clear that neither $\widehat{P}^{\somesymbol}_1$ nor $\widehat{P}^{\somesymbol}_2$
contain any nn-vanishing states (the only exception are, if present, nn-vanishing initial states, which then must be transient).
%%%%%%%%%%%%%%%%%%%%%%%%%%%%%%%%%%%%%%%%%%%%%%%%%%%%%%%%%%%%%%%%%%%%%%%%
% Ende neuer Satz
%%%%%%%%%%%%%%%%%%%%%%%%%%%%%%%%%%%%%%%%%%%%%%%%%%%%%%%%%%%%%%%%%%%%%%5%
%\ms{Problem: Wir moechten fuer die Bisim gerne alle Aktionen $a$ betrachten, aber das mit der van. Repres geht nur fuer $a=\tau$.}
%\js{Klar muessen wir alle $a$ betrachten! Wir zeigen, dass im eliminierten Fall $\mu\approx gamma$ bedeutet, dass diese schon auf Klassen uebereinstimmen!}
%By part 1 of Thm.~\ref{th:corrected} we must have for a pair of states $s\approx_\De t$ that
%for every $s\stackrel{a}{\rightarrow}\mu$ we find $t\stackrel{\hat{a}}{\Rightarrow}\gamma$ with $\mu \approx \gamma$ (and vice versa).
%\js{naechsten Satz hinzugefuegt:}
We have to show that $\mu\approx \gamma$ already coincide on classes, i.e.~the weak bisimulation is already na\"ive.
Assume that we split $\mu$ according to its support: $\mu=\oplus_{i\in I}c_i\De_{s_i}$, then
with Lemma 11 of \cite{avacsreport} we get a hypertransition $\gamma \Rightarrow_C \gamma'=\oplus_{i\in I}c_i\gamma'_i$ with $\De_{s_i}\approx \gamma'_i$.
By assumption $s_i$ cannot be nn-vanishing (only the initial state could be, but $s_i\neq s_0$ as it is a target state of some transition and $s_0$ is transient). 
Now it is clear that for all states $x \in Supp(\gamma'_i)$ it must hold that $x\approx_\Delta s_i$ -- for otherwise 
by Thm.~\ref{thm:nn-def-reformulation} $s_i$ would be nn-vanishing, which is a contradiction.
Summing up, we have $t\stackrel{a}{\Rightarrow}_C\gamma'$ 
%\ms{Hut muss weg}
%\js{Richtig}
and %\js{hier noch die Liftung eingebaut:}
%$\forall C\in \nicefrac{S}{\approx_\De}: \mu(C)=\gamma'(C)$.
$\mu \equiv_{\approx_\De} \gamma'$.
We can use the same argumentation for $t\stackrel{a}{\rightarrow}\gamma$ to find $s\stackrel{a}{\Rightarrow}_C\mu'$ 
%\ms{Hut muss wieder weg}
%\js{OK}
with %\js{hier auch:}
%$\forall C\in \nicefrac{S}{\approx_\De}: \gamma(C)=\mu'(C)$
$\gamma \equiv_{\approx_\De} \mu'$
%As the initial states are always states rather than distributions 
and we conclude that
%in this case
$\approx_\De$ is already a na\"ive weak bisimulation relation.

\end{proof}

%\ms{Sollte diese Bem. nicht erst vor Lemma 10 kommen?}\js{Mit dieser Bemerkung wollte ich sagen, dass ein Lemma folgt, mit dem man das Dirac determinate
%Lemma beweisen kann. Wenn Du das besser formulieren kannst, dann gerne.}
%\ms{Neue Formulierung}\js{OK}
The following lemma will be used in the proof of Lemma~\ref{lem:dd-is-enough},
which shows that it suffices to consider Dirac determinate schedulers when
trying to identify nn-vanishing states.
%\ms{Zusatz gemaess JS-Bemerkung in Lemma}\js{OK}
Note that, since Lemma~\ref{lem:unique} assumes that the PA at hand is a quotient
with respect to $\approx_\Delta$, it cannot contain any na\"ively vanishing states.

\begin{lemma}
\label{lem:unique}
%\js{Mit der canonical Definition wuerde das Lemma heissen: auf Quotienten ist die kanonische vanishing representation eindeutig definiert.}
%\ms{Formulierung nochmal minimal geandert.}
%\js{Passt, glaube ich.}
%\js{1.8. neu erstellt. Bitte pruefen.}
Let $P=(S, Act, \rightarrow, \emptyset, s_0)$ a PA that is 
a quotient with respect to $\approx_\Delta$.
%\ms{Gilt das folgende für alle van. Zustaende $s$, oder nur fuer nn-van Zustaende?}
%\js{Nur fuer den nn-vanishing Fall. Auf dem Quotienten in Bezug auf $\approx_\Delta$ kann es keine na\"ively vanishing Zustaende geben! Das sind ja gerade
%    Zustaende, die ihre Klasse \emph{nicht} verlassen koennen. Sie landen also auf dem Quotienten in \emph{einem} Zustand und da kann auch nix mehr verschwinden.
%    Sollte man evtl. in dem Satz vor dem Lemma noch erwaehnen.}
Let $s$ be a nn-vanishing state. For a vanishing representation $(s,\tau,\mu)$ where $Supp(\mu)$
%On a quotient with respect to $\approx_\Delta$ a vanishing representation $(s,\tau,\mu)$ where $Supp(\mu)$
%only consists of
contains only nn-tangible states
%(i.e.\ no nn-vanishing states)
%\js{Zur Erklaerung: Hier koennten sonst nur noch nn-vanishing Zustaende enthalten sein}, 
it holds that $\mu$ is uniquely defined, i.e.~there is no other vanishing representation $(s,\tau,\gamma)$
with $\mu \neq \gamma$
where $Supp(\gamma)$
%only consists of
contains only nn-tangible states.
(Such a vanishing representation $(s,\tau,\mu)$ is called canonical.)
%\ms{letzten Satz angefuegt, um "canonical" definiert zu haben}
%\js{Warum schreibst Du das nicht gleich in Lemma 8? Lemma 9 sagt ja was ueber die Eindeutigkeit. Wenn es bei Lemma 9 bleibt, dann kann man irgendwas schreiben wie
%``this uniqueness property justifies why we call such a vanishing representation canonical.}
\end{lemma}

\begin{proof}
%\ms{Proof auf $\mu$ und $\gamma$ (statt $\mu_1$ und $\mu_2$) umgestellt damit's besser zum Lemma passt.}
%\js{OK. Schoener wohnen.}
Let $P'=(S, Act, \rightarrow, \emptyset, s_0)$ be a quotient with respect to $\approx_\Delta$, $s\in S$ nn-vanishing and assume that there exist different vanishing representations
$(s,\tau,\mu)$ and $(s,\tau,\gamma)$, where $Supp(\mu)$ and $Supp(\gamma)$ contain only nn-tangible states.
We now pretend that $s$ is the starting state, i.e.~consider the automaton $P=(S, Act, \rightarrow, \emptyset, s)$.
According to Thm.~\ref{th:main} it must hold that $\widehat{P_{(s,\mu)}}^\somesymbol \approx_{\text{na\"ive}} \widehat{P_{(s,\gamma)}}^\somesymbol$.
%\ms{Nach Th.\ 2 wuerde sogar $\approx_{naive}$ folgen}\js{Tippfehler. Das braucht man auch, sonst kann man das TACAS Theorem nicht anwenden.}
%\ms{Der erste Teil des folgenden Satzes (bis zum Komma) ist unnoetig da
%selbstverstaendlich (alle Zust in den Supports von $\mu$ und $\gamma$ sind nn-tang,
%also kann $s$ selbst nicht im Support sein -- also von vornherein rescaled, nicht
%erst durch die Elim).
%Ich wuerde also den Satz ersetzen durch ``But even after this
%elminination $(s,\tau,\mu)$ and $(s,\tau,\gamma)$ still cannot be equal.}
%\js{OK. Hab ich nicht gesehen, dass die Transitionen schon rescaled sind. Ich versuch's nochmal:}
% Alter Satz: But as $(s,\tau,\mu)$ and $(s,\tau,\gamma)$ are rescaled by the elimination procedure, they still cannot be equal.
It is easy to see that $(s,\tau,\mu)$ and $(s,\tau,\gamma)$ are not changed by the elimination 
procedure\footnote{The transitions corresponding to the vanishing representations must already be 
rescaled as $s$ is nn-vanishing while all states in the support of $\gamma$ and $\mu$ are nn-tangible, 
so it holds that $s\notin Supp(\mu)$ and $s\notin Supp(\gamma)$. No state in the support of these 
distributions will have been eliminated.}.
As $\mu$ and $\gamma$ do not coincide and there are no other emanating transitions of state $s$, 
Thm.~1 of \cite{tacas:13} tells us that the corresponding Normal Form cannot have an emanating 
$\tau$ transition from $s$.
%\ms{neu: Der vorige Satz klingt so, als schliesse Th.1 [16] generell aus, dass der Startzustand eine
%ausgehende $\tau$-Transition hat? Ich denke aber, das ist nicht der Fall. Das Problem ist vielmehr die
%Nicht-Einheitlichkeit dieser Transition? Das koennte man doch sagen, oder?}\\
%\js{Wie ist die nun hinzugefuegte Einleitung zu dem Satz?}
%\ms{alt: Stimmt der vorige Satz wirklich? Ich haette gesagt, Th 1 aus Tacas 13 sagt, dass 
%    die Minimalform unique ist, was auch ein Widerspruch zu $(s,\tau,\mu_1) \not\equiv (s,\tau,\mu_2)$ waere.}
%\js{Also: Da das TACAS Theorem stimmt weiss man, dass die Normalformen isomorph sein muessen. Das heisst aber,
%    dass die beiden betrachteten Automaten noch keine Normalformen sein koennen. Die stoerenden vanishing representation
%    transitionen muessen weg, damit Isomorphie der Normalformen erreicht werden kann. Dann ist aber s nicht mehr nn-vanishing
%    und das ist genau der Widerspruch.}
A contradiction to the nn-vanishing property of $s$. 
%\ms{m.E. passt der Beweis jetzt so}
%\js{Na Gott sei Dank!}
\end{proof}

The use of combined transitions $\newsymbol(s)$ for vanishing representations is not necessary,
%\ms{'not necessary' statt bisher 'too general'}
%to get the vanishing representation, 
as the following lemma shows:

\begin{lemma}[Considering Dirac determinate schedulers is sufficient to find nn-vanishing states]
\label{lem:dd-is-enough}
%Every nn-vanishing state with vanishing representation $(s,\tau,\mu)$ can be substituted by a vanishing 
%representation $(s,\tau,\mu')$ driven by Dirac determinate schedulers.
%\js{umformuliert, alte Formulierung ist jetzt Kommentar}
Every nn-vanishing state $s$ has a vanishing 
representation $(s,\tau,\mu)$ where $s\stackrel{\tau}{\Rightarrow}\mu$ is driven by a Dirac determinate scheduler.
\end{lemma}

% Newproof: DTMCs sind rausgeflogen und das Argument wurde vereinfacht.
%\input{lemma_9_newproof}
\begin{proof}
%\js{Neue Version vom 18.8. Habe noch ein Bild eingefuegt.}
%\ms{OK, hab's kommentiert (und ein paar Kleinigkeiten schon mal ausgebessert):}
Let $P^0=(S, Act, \rightarrow, \emptyset, s_0)$ be a PA. In the following we assume that we work on quotients with respect to $\approx_\Delta$.
We will show that any vanishing representation $(s,\tau,\mu)$
of nn-vanishing state $s$
%\ms{lezte 4 Woerter eingefuegt}
% must start with some 
%strong (non-combined) step \js{dieser Satz ist nicht richtig. Er gilt nur auf dem \emph{eliminierten} Quotienten. 
%                                Allgemein gilt nur, dass man eine solchevan. rep. finden kann. Hier muss man also noch praeziser werden. Vorschlag: has also a nn-vanishing representation that starts with a strong (non-combined) transition}. 
can be transformed to a vanishing representation $(s,\tau,\mu')$
where $s\stackrel{\tau}{\Rightarrow}_C{\mu'}$
%\ms{so war's doch gemeint?}\js{Ja, sorry. War ein Tippfehler.}
starts with a strong non-combined transition.                               
Let $\setcond{(s,\tau,\sd)}{\strongTransition{s}{\tau}{\sd}}=\{(s,\tau,\nu_1),\ldots,(s,\tau,\nu_n)\}$.
Assume now that we have a vanishing representation $(s,\tau,\mu)$ where the first strong step of 
$s\stackrel{\tau}{\rightarrow}_C \nu \stackrel{\tau}{\Rightarrow}_C\mu$ leads 
to a non-trivial combination $\nu=\sum_{i\in I} c_i \nu_i$ where $I\subseteq \{1,\ldots,n\}$ and $0<c_i<1$ for all $i\in I$.
Now we pretend for the moment that $s$ is the starting state, i.e.~we consider the automaton 
$P=(S, Act, \rightarrow, \emptyset, s)$. 
Fix for every nn-vanishing state $t$ a vanishing representation $(t,\tau,\mu_t)$ with support in nn-tangible states
(which is unique according to Lemma \ref{lem:unique}).
%\ms{Koennte der vorige Satz nicht ersatzlos gestrichen werden? Alle nn-van
%Zust (ausser $s$) werden ja ohnehin gleich eliminiert.
%Die $\mu_t$ werden nie wieder erwaehnt.}
%\js{Diese eindeutige Wahl sorgt dafuer, dass spaeter die $c_i$ Koeffizienten gleich bleiben -- Vorausgesetzt, der Reduktionsschritt wurde ausgefuehrt.
%    Wenn es verschiedene van rep geben wuerde, dann koennten die Koeff verschieden sein.}
Now we perform two different kinds of eliminations:
\begin{enumerate}
  \item $\widehat{P}^{\somesymbol}$, i.e. the usual elimination from Def.~\ref{def:elim}
  \item $\widehat{P_{(s)}}^{\somesymbol}$,
%\js{maybe use a more intuitive symbol for this!} 
        which means we make $s$ transient by elimination but keep all
        $\tau$-transitions
%\ms{$\tau$- eingefuegt}
%\js{Die anderen Transitionen tun auch nicht weh. Muss also nicht eingeschraenkt werden.}
%\ms{ok, mir ist's egal, ob da $\tau$ steht oder nicht, es macht ja keinen Unterschied}
%\js{OK, lassen wir so. Ist ja wirklich wurst.}
emanating from $s$, i.e.\ we do not move to its vanishing representation.
(All nn-vanishing states apart from $s$ are eliminated as usual.)
%\ms{Satz in Klammer eingefuegt -- war doch so gemeint, oder?}
%\js{Fast. $s$ wird ja auch eliminiert: Naemlich ueberall dort, wo er nicht als Anfangszustand auftaucht.}
%\ms{Ich haette den vorherigen Satz in Klammer gerne drin.}
%\js{OK}
\end{enumerate}
By construction, all transitions in $\widehat{P}^{\somesymbol}$ and $\widehat{P_{(s)}}^{\somesymbol}$
lead to distributions whose support consists only of nn-tangible states (this will be denoted by 
the superscript ``nn-tang'').
%\js{Wenn Dir Theorem 1 nicht passt, dann nehm ich halt Theorem 2 :-)}
%It is clear that $P\approx P_{(s)}$
%(where $P_{(s)}$ denotes the automaton where $s$ has been eliminated
%and made transient, but the other nn-vanishing states are not yet eliminated),
%\ms{Klammer eingefuegt, war doch so gemeint, oder?
%(sonst wäre das Symbol $P_{(s)}$ nicht definiert)}
%\js{Ich habe Deine Klammer wieder entfernt und argumentiere nun leicht anders: trivialerweise gilt $P\approx P$}
As the common starting point for reaching eliminations $\widehat{P}^{\somesymbol}$ and $\widehat{P_{(s)}}^{\somesymbol}$
is the automaton $P$, we get
by Thm~\ref{th:main} that 
$\widehat{P}^{\somesymbol}\approx_{\text{na\"ive}} \widehat{P_{(s)}}^{\somesymbol}$,
so the resulting transition in $\widehat{P_{(s)}}^{\somesymbol}$ emanating from $s$ leading 
to nn-tangible states must still 
be a vanishing representation.
%The resulting distributions still give rise to vanishing representations
%by Thm.~\ref{thm:nn-def-reformulation}\footnote{Let $s_i$, $i\in I$ a set of nn-vanishing states. 
%Vanishing representations $(s_i,\tau,\mu_i)$ fulfill $s_i\stackrel{\tau}{\Rightarrow}_C\mu_i$ with 
%$\Delta_{s_i}\approx \mu_i$, which can be seen in the proof of Thm.~\ref{thm:nn-def-reformulation}. 
%Note that elimination of $s_i$ amounts to substitution of $\Delta_{s_i}$ in
%$(\mu_j)_{j\in I\setminus \{i\}}$ (which we call $\mu'_j$) and one can immediately verify that the transitions 
%$s\stackrel{\tau}{\Rightarrow}_C \mu_j \stackrel{\tau}{\Rightarrow}_C (\mu'_j)$
%are vanishing representations.
%}.
%\ms{Ich muss leider immer noch noergeln:
%Wozu hier unser Th.1 zitieren? Die Aussage ist eh klar. Th.1 sagt ja
%etwas ueber nn-van States aus, aber ausser $s$ (jetzt durch Elim transient
%geworden) ist sowohl in
%$\widehat{P}^{\somesymbol}$ als auch in $\widehat{P_{(s)}}^{\somesymbol}$,
%kein weiterer nn-van Zustand mehr enthalten.}
%\js{primaer geht es doch darum, dass aus der van rep eine van. rep auf tangible zustaenden wird.
%Das wird meiner Meinung nach erst durch Theorem 1 klar -- wenn auch dessen triviale }
%\ms{alt: Voriger Satz ist mir nicht klar, insbesondere was das mit unserem Th.1 zu tun hat.}
%\js{alt: Habe versucht, das in einer Fussnote zu erklaeren.}
%\ms{alt: Die Fussnote hilft mir leider auch nicht, aber das Weitere ist erst mal
%wieder klar.}
In $\widehat{P}^{\somesymbol}$ the transitions $s\stackrel{\tau}{\rightarrow} \nu_i$ are transformed to 
strong non-combined 
%\ms{``non-combined'' eingebaut}\js{OK. schoener wohnen}
transitions $s\stackrel{\tau}{\rightarrow} \nu^{\texttt{nn-tang}}_i$.
In $\widehat{P_{(s)}}^{\somesymbol}$ the transition $s\stackrel{\tau}{\rightarrow}_C \nu$ is 
transformed to a strong (possibly combined)
%\ms{Klammer eingebaut}\js{OK}
transition $s\stackrel{\tau}{\rightarrow}_C \nu^{\texttt{nn-tang}}$.
Now we need the following reduction argument: If there are $i,j$ such that $\nu^{\texttt{nn-tang}}_i=\nu^{\texttt{nn-tang}}_j$,
then we can substitute $\nu_i$ by $\nu_j$ in $\nu=\sum_{i\in I} c_i \nu_i$ without losing the property of $(s,\tau,\nu)$
being the first step of
%\ms{letzte 4 Woerter eingefuegt}
%\js{OK.}
 a vanishing representation (by Thm.~\ref{thm:nn-def-reformulation}). 
(As an example we have in Fig.~\ref{fig:chosing_strong_nn_transitions} the distributions 
$\nu_1=\frac{2}{3}\Delta_x\oplus \frac{1}{3}\Delta_C$ and
$\nu_2=\frac{1}{3}\Delta_A\oplus\frac{2}{3}\Delta_y$, which leads to 
$\nu^{\texttt{nn-tang}}_1=\nu^{\texttt{nn-tang}}_2=\frac{1}{3}\Delta_A\oplus \frac{1}{3}\Delta_B\oplus\frac{1}{3}\Delta_C$ ---
therefore we may substitute $\nu_1$ by $\nu_2$ in $\nu=\sum_{i\in I} c_i \nu_i$). 
\begin{figure}
  \centering
  \includegraphics[width=5cm]{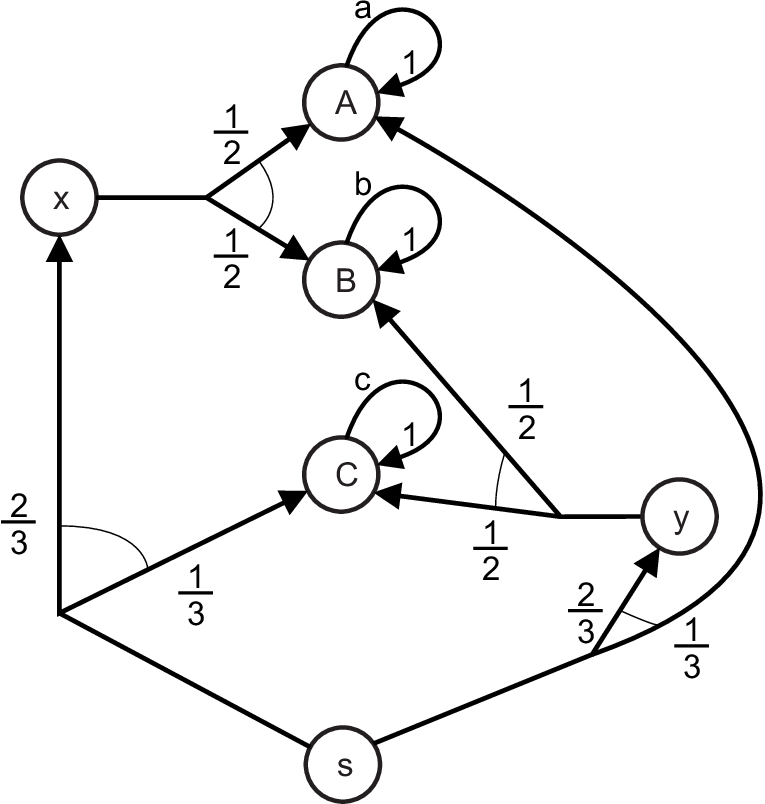}\\
  \caption{Coinciding $\nu^{\texttt{nn-tang}}_i$ and $\nu^{\texttt{nn-tang}}_j$
  %\ms{Im Bild bitte $x$ durch $s$ ersetzen damit's zum Text passt}
  %\js{Bild angepasst, Text angepasst}
  }
  \label{fig:chosing_strong_nn_transitions}
\end{figure}
If after this reduction argument, the vanishing representation has a strong transition as first step,
we are done. Otherwise we may assume 
that $\nu=\sum_{i\in I^*} c^*_i \nu_i$ with $I^*\subseteq I$, where $\nu^{\texttt{nn-tang}}_i\neq \nu^{\texttt{nn-tang}}_j$ for
$i \neq j$. 
%Note that we still use the same index set $I$
%after performing the reduction
%(which means that some of the $\nu_i$ in the sum have actually become identical
%due to the reduction).
%\ms{Klammer eingefuegt. (``abuse'' braucht man nicht mehr, ist ja jetzt korrekt)}
%\js{OK}
%%%%%%%%%%%%%%%%%%%%%%%%%%%%%%%%%%%%%%%
%\ms{Wenn wir 
%$\nu^{\texttt{nn-tang}}_i\neq \nu^{\texttt{nn-tang}}_j$
%fordern, dann ist die Indexmenge $I$ geschrumpft (wegen der Substitution)
%und es haben sich teilweise neue
%Koeffizienten $c_i$ ergeben (durch Summation der alten).
%Also muesste man eigentlich $I' \subset I$  und $c_i'$ schreiben.}
%\js{Ich war hier faul, habe aber jetzt oben die Bemerkung dazugefuegt. Ich denke, das sollte reichen.}
%By Corollary \ref{cor:nn-van-property-kept} we know that there must be some $t\in Supp(\nu^{\texttt{nn-tang}})$ such 
%that $t\not \approx_\Delta s$.
As for every nn-vanishing state $t$ a \emph{unique} vanishing representation
$(t,\tau,\mu_t)$ was chosen,
%\ms{vorigen Halbsatz expliziter gemacht durch Nennung von $t$ usw.}
%\js{OK}
 it must hold that
$\nu^{\texttt{nn-tang}}=\sum_{i\in I^*} c^*_i \nu^{\texttt{nn-tang}}_i$ 
(with the same coefficents $c^*_i$ as above).
% Folgende Diskussion auskommentiert. Die Argumentationskette laeuft jetzt anders.
%\ms{Folgenden Satz umformuliert. War es so gemeint?}\js{Ja, denke schon. Waere evtl. gut, wenn Du in solchen Faellen die vorherige 
%Formulierung als Kommentar drin laesst, dann kann ich leichter pruefen, was ich gemeint habe.}
%\js{Es muss so sein: Eliminiere \emph{alle} nn-vanishing Zustaende -- ausser den Startzustand. Lasse dabei in dem einen Automaten vom Startzustand
%aus nur noch die Trans. der van. rep. drin, bei dem anderen lasse alle ausgehenden Transitionen drin. Wenn der Startzustand wieder erreicht wird, dann wird
%er eliminiert (in beiden Automaten). Diese Konstruktion bezieht sich auf
%die Auswahl der ``wichtigen'' Transitionen, wie oben beschrieben. Wenn zwei Transitionen die gleiche Verteilung auf nn-tang Zustaenden erzeugen,
%dann kann man eine davon weglassen.}
%But this means that the elimination $\widehat{P_{(s,\nu)}}^{\somesymbol}$
%(of the vanishing representation $P_{(s,\nu)}$)
%and
%the elimination $\widehat{P^\somesymbol}$ (of $P$)
%do not have any common strong transition emanating from $s$.
As $s$ is transient in $\widehat{P}^{\somesymbol}$ and $\widehat{P_{(s)}}^{\somesymbol}$
by definition,
all transitions emanating from $s$ in these eliminated automata must be rescaled.
Now we may apply Thm.~1 of \cite{tacas:13} to see that the corresponding
normal form
(whose set of transitions must be from the intersection
of the transition sets of
$\widehat{P}^{\somesymbol}$ and $\widehat{P_{(s)}}^{\somesymbol}$)
%\ms{Klammer zur Erklaerung hinzugefuegt}
%\js{Macht sich gut.}
cannot have an emanating 
$\tau$ transition from
$s$\footnote{The only possible vanishing representation on the quotient would be $(s,\tau, \Delta_s)$, which clearly doesn't satisfy the nn-vanishing property.}.
%\ms{Fussnote nach vorn geschoben}
%\js{OK}
%%%%%%%%%%%%%%%%%%%%%%%%%%%%%%%%%%%%%%%%%%
%\ms{ich sehe wieder nicht, dass Th.1 aus Tacas13 diese Aussage hergibt}
%\js{Es soll einfach ein Widerspruchsbeweis sein: Angenommen, es gibt eine solche nicht-triviale Kombination,
%die nicht schon durch die nn-tang-Reduktion wieder zusammenfaellt. Wenn es also eine solche van rep geben wuerde,
%dann wuerde die entsprechende Normalform einen absorbierenden Zustand $s$ haben -- der kann aber wiederum dann
%nicht nn-vanishing sein. Widerspruch zur Annnahme: Es kann keine solche Transition als van. rep. geben.}
This is a contradiction to the nn-vanishing property of $s$.
This means that the first step of a transition leading to a vanishing representation must be non-combined, 
i.e.\ some $c^*_i=1$ in the sum above.
As the same argument holds for the other 
nn-vanishing states, we see that with Lemma \ref{lemma:non-vanishing}
a vanishing representation $(s,\tau,\rho)$ for $s$ with some $t\in Supp(\rho): s\not\approx_\Delta t$ can be reached from $s$ by 
Dirac Determinate schedulers. 
%
%\js{Der ganze Beweis muss noch sauberer formuliert werden. Die Idee ist hingegen komplett klar.}
%The general case (PA that do not have the quotient property) follows by observing that the transitions on the quotient are transitions of the original PA.
%\js{Determinate Fall wirklich klar?}
%\js{Ende Neuer Versuch}
%
% This is the old part from 30.7.
%\input{lemma_9_bad_proof}
\end{proof}

It can be shown that as long as there are no probability-one $\tau$-loops,
the elimination procedure is unique.
%\ms{Trifft das auf das Bsp von Fig. 9 zu? Je nachdem, ob man s oder s' eliminiert,
%bekommt man ein anderes Resultat (aber isomorph).}
%\js{klar, das ist der folgende Satz:}
At the presence of such loops the elimination procedure is unique up to isomorphism \cite{schuster:11}.

\begin{corollary}
It holds that $P_1\approx P_2 \Leftrightarrow \widehat{P}_1\approx_{\text{na\"ive}} \widehat{P}_2$.
\end{corollary}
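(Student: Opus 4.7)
The plan is to derive this statement as an almost direct consequence of Theorem \ref{th:main} combined with Lemma \ref{lemma:elimbisim}, by applying the theorem not to $P_1,P_2$ but to their completely eliminated counterparts $\widehat{P}_1,\widehat{P}_2$.

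The easy direction is $\Leftarrow$. Assuming $\widehat{P}_1\approx_{\text{na\"ive}} \widehat{P}_2$, Corollary \ref{cor:naiveisweak} upgrades this to $\widehat{P}_1\approx \widehat{P}_2$, and Lemma \ref{lemma:elimbisim} then gives $P_1\approx P_2$. No extra work is needed here.

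For $\Rightarrow$, I would first invoke Lemma \ref{lemma:elimbisim} to rewrite $P_1\approx P_2$ as $\widehat{P}_1\approx \widehat{P}_2$. The key observation is that $\widehat{P}_i$ contains no vanishing states at all (that is the whole point of complete elimination), hence in particular no nn-vanishing states; consequently $\widehat{P}^{\somesymbol}(\widehat{P}_i)=\widehat{P}_i$ for $i=1,2$, so applying the "$\somesymbol$-elimination" on top of $\widehat{P}_i$ is the identity. Plugging $\widehat{P}_1$ and $\widehat{P}_2$ into Theorem \ref{th:main} then yields $\widehat{P}_1\approx \widehat{P}_2 \Leftrightarrow \widehat{P}_1\approx_{\text{na\"ive}} \widehat{P}_2$, which together with the previous step finishes the argument.

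The only delicate point — and the one I would state explicitly in the write-up — is justifying that $\widehat{P}_i$ has no nn-vanishing states left, which relies on interpreting $\widehat{P}$ as the fixed-point of iterated vanishing-state elimination rather than a single sweep over the originally vanishing states. Once this is spelt out, the proof reduces to chaining two already-established equivalences and consists essentially of a single diagram
\[
 P_1\approx P_2
   \;\stackrel{\text{Lem.~\ref{lemma:elimbisim}}}{\Longleftrightarrow}\;
   \widehat{P}_1\approx \widehat{P}_2
   \;\stackrel{\text{Thm.~\ref{th:main}}}{\Longleftrightarrow}\;
   \widehat{P}_1\approx_{\text{na\"ive}}\widehat{P}_2 ,
\]
with no further case analysis required.
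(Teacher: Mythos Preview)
Your proof is correct and the $\Leftarrow$ direction matches the paper exactly. For $\Rightarrow$, however, you take a different route from the paper. The paper goes through $\widehat{P}^{\somesymbol}_i$: it first invokes Theorem~\ref{th:main} on the original $P_1,P_2$ to obtain $\widehat{P}^{\somesymbol}_1\approx_{\text{na\"ive}}\widehat{P}^{\somesymbol}_2$, and then argues that passing from $\widehat{P}^{\somesymbol}_i$ to $\widehat{P}_i$ only removes \emph{na\"ively} vanishing states, which by definition never leave their $\approx_\De$-class and hence their elimination cannot affect the na\"ive weak bisimulation quotient. Your approach instead applies Theorem~\ref{th:main} as a black box to $\widehat{P}_1,\widehat{P}_2$, using that complete elimination leaves no vanishing (hence no nn-vanishing) states, so that the $\somesymbol$-operation is the identity on $\widehat{P}_i$.

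Your version is more modular and avoids re-entering the proof idea of Theorem~\ref{th:main}; the price is exactly the point you flag, namely that $\widehat{P}_i$ must be free of vanishing states. With the paper's Definition of $\widehat{P}$ as a single sweep over the \emph{original} vanishing set $S^v$, this needs the observation that elimination preserves $\approx$ (Lemma~\ref{lemma:1}) and therefore cannot create new vanishing states, so the single sweep already reaches the fixed point. The paper's route sidesteps this by never needing $\widehat{P}_i$ to be vanishing-free; it only needs that the additionally eliminated states are na\"ively vanishing. Either argument is short, so both proofs are of comparable length once the respective gap is closed.
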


\begin{proof}
$\Leftarrow$ is immediate by Lemma~\ref{lemma:naive-is-weak} and Lemma \ref{lemma:elimbisim}.\\
$\Rightarrow$
%\ms{Beginn neuer Beweis}
%\js{ich habe nochmal hier rumgedoktort. Es ist ja so, dass $\approx_\Delta$ nach Elim der nn-vanishing Zust schon eine naive schwache bisim ist.
%Wir brauchen nur noch zu sagen, dass die Elim der naively vanishing Zust. daran nichts aendert.}
Using Thm.~\ref{th:main} we know that after elimination of nn-vanishing states $\approx_\Delta$ is already
a n\"aive weak bisimulation, so it only remains to show that 
$\widehat{P}^*_1\approx_{\text{na\"ive}} \widehat{P}^*_2 \Leftrightarrow
\widehat{P}_1\approx_{\text{na\"ive}} \widehat{P}_2$.
This is true since eliminating any na\"ively vanishing states from 
$\widehat{P}^*_1$ or 
$\widehat{P}^*_2$ does not change the behaviour with respect to  
%$\approx_{\text{na\"ive}}$
$\approx_\Delta$, so it is still a na\"ive weak bisimulation.
%\ms{Beweis passt!}\js{OK}
% The hard part is Thm.~\ref{th:main}, as non-na\"ively vanishing states may be ignored and therefore some equivalence classes modulo weak bisimulation may be ignored.
%As na\"ively vanishing states by definition cannot change their equivalence class modulo 
%weak bisimulation, elimination will not change anything -- the equivalence class remains and cannot be ignored.
%\ms{habe Beweis leider nicht verstanden}
\end{proof}

\section{A partition refinement algorithm}
\label{sec:bisimalgo}

%\todo{Check if erroneously vanishing and tangible states could be separated due to our restriction of Segala-Sets}
%\js{Done. This is no problem, as splitting only can occur in classes of vanishing or tangible states. When two states are already distinguished
%between vanishing and tangible, they are surely already in different classes.}

%\begin{remark}[Finiteness assumption]
%In the non-alternating model it is possible that even if the number of states is limited, there are infinitely many transitions to check (which prevents an algorithm
%from deciding the problem). Infiniteness arises in two ways: 
%\begin{enumerate}
%  \item Infinite number of actions (then we would have an infinite number of schedulers to check).
%  \item Infinite number of probability distributions (again, infinitely many schedulers to check), e.g. $s\stackrel{a}{\rightarrow}p\cdot \De_t \oplus (1-p)\cdot \De_u$
%  for all possible $p\in [0,1]$ - these are uncountably many.
%\end{enumerate}
%Therefore we require for the MA in our algorithm to have a finite number of states, actions and distributions.
%Note that in the alternating model it would be enough to restrict to finite number of states and actions.
%\end{remark}

With Lemma \ref{lem:dd-is-enough}
we can find nn-tangible states and Th.~\ref{th:main} reduces the problem to na\"ive weak bisimulation. 
%Even if the problem of deciding na\"ive weak bisimulation has recently been shown to be solvable in polynomial time \cite{turrini:12}, 
%For the rest of the paper we try to find nn-vanishing states (and their vanishing representations).
%
%
%The basic difference between na\"ive weak and weak bisimulation is that in the weak case states can also be vanishing if they 
%leave their equivalence  class modulo the weak bisimulation relation (which is not the case in the na\"ive setting). 
%We will call those states \emph{non-na\"ively vanishing states},
%or \emph{nn-vanishing states} for short. 
%States will be called \emph{na\"ively tangible} (or \emph{tangible}, for short), if they are 
%na\"ively vanishing (i.e.~cannot leave their equivalence class) or stable.
%
For the description of the partition refinement algorithm below we need the convex sets $S(x,a)\subseteq \BR^n$ introduced by \cite{segala:02}.
%\ms{Auskommentiert, da nie benutzt: We will sometimes call these sets Segala/Cattani sets.}\js{OK, kann weg.}
For details on how to calculate those sets we refer to \cite{segala:02}.
\begin{example}
For the MA given in Fig.~\ref{fig:nondet} 
%\ms{das Bild gibt's nicht mehr}\js{jetzt schon wieder}
the set $S(s_1,\tau)$ is given by the shaded triangle in Fig.~\ref{fig:triangle}.
This triangle encodes all distributions that are reachable via a (weak) combined $\tau$ transition starting from $s_1$.
\end{example}

\begin{figure}
  \centering
  \subfloat[Example automaton]{\label{fig:nondet}\includegraphics[width=3cm]{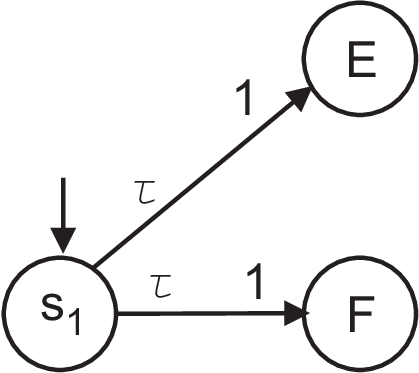}} \qquad
  \subfloat[$S(s_1,\tau)$]{\label{fig:triangle}\includegraphics[width=3cm]{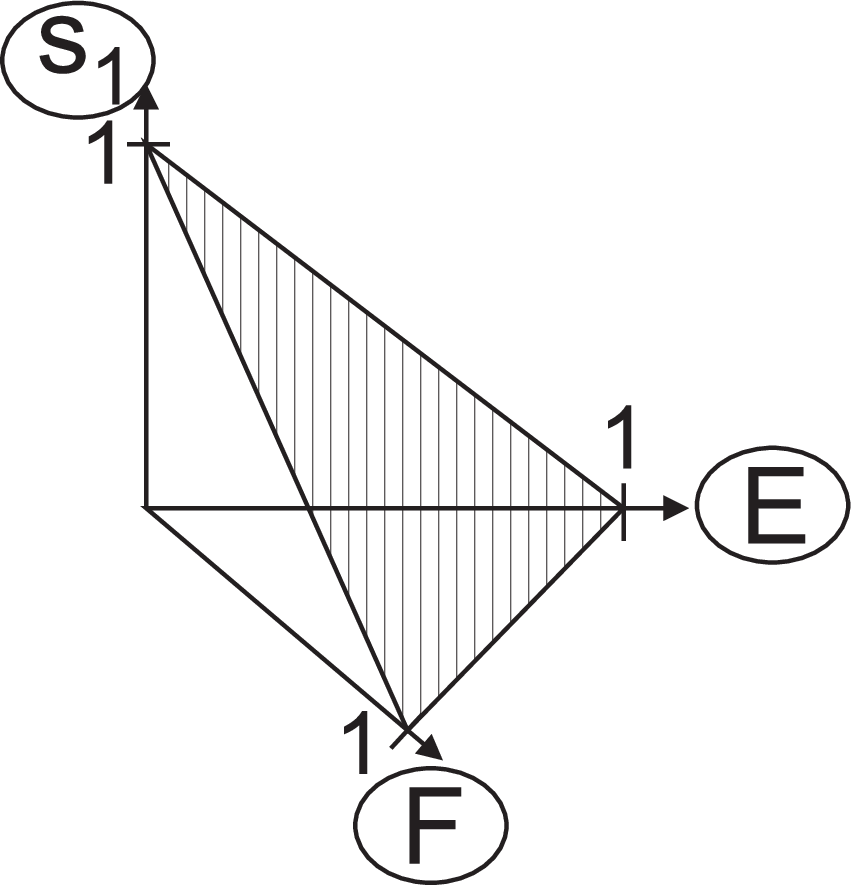}} 
  \caption{Example for reachable distributions}
  \label{fig:notions_of_vanishing3}
\end{figure}

%\begin{figure}
%  \centering
%  \includegraphics[width=7.5cm]{nondt3}\\
%  \caption{Simple example for nondeterminism}
%  \label{fig:nondet}
%\end{figure}
%
%\begin{figure} 
%  \centering
%  \includegraphics[width=3cm]{triangle}       
%  \caption{$S(s_1,\tau)$
%\ms{Den zugeh. PA auch hier abbilden da er sonst nirgends im Paper ist!}
%}
%  \label{fig:triangle}
%\end{figure}

\begin{remark}
\label{rem:convex}
%\ms{Absatz minimal umformuliert}
It was shown in
%\js{hier kann man auch ohne Segala/Cattani leben.} %by Segala/Cattani 
\cite{segala:02} that each convex set is
the convex hull (CHull) of distributions 
(i.e.~points in $\BR^n$)
generated by Dirac determinate schedulers.
%\js{wuerde die ganze folgende Klammer gerne weglassen -- das ist einfach ein transition tree Relikt: (i.e.\ as long as no visible action occurs, in the transition tree
%whenever leaving a certain state always the \emph{same} emanating transition must be taken)} 
%\ms{Mich stoert die Klammer nicht - sie erlaeutert ja nochmal den Begriff
%``Dirac determinate'' und die Beziehung zu ``transition trees'' ist auch verstaendlich.}
%\js{Aber transition trees werden ueberhaupt nicht mehr definiert in dieser Version des Papiers!}
It is shown there that the extremal points (i.e.~generators) of the convex hull can be found by a linear program
%Those generators will be called \emph{corners} in the sequel.
%It is shown in \cite{segala:02} 
and that the complexity of calculating the sets $S(s,a)$ is exponential for the weak case.
%\ms{neu:}
This is one of the reasons why
our algorithm also has exponential complexity
(see Sec.~\ref{subsec:Complexity}). 
\end{remark}

%\begin{theorem}
%\label{th:algo}
%Given a partition $\WW$ of the state space. 
%A state $s$ is nn-vanishing (with respect to $\WW$) if and only if it has a vanishing representation 
%generated by a Dirac determinate scheduler 
%which leaves the equivalence class of $s$ (in $\WW$) with positive probability.
%\end{theorem}

%\begin{proof}
%$\Leftarrow$: Clear by definition.
%$\Rightarrow$: This just a fancy reformulation of Thm.~\ref{th:corrected} using Lemma \ref{lemma:noncomb}.
%and
%the notations of \cite{segala:02}. 
%We can restrict our search space to Dirac determinate schedulers, as for every vanishing state we meet 
%we must choose the \emph{same} vanishing representation if we want to ignore it. As we only consider $\tau$ transitions,
%there is no possibility for the Dirac determinate scheduler to change its decision.
%In \cite{segala:02} it is shown that Dirac schedulers are just convex combinations
%of Dirac determinate schedulers. It can be seen that the convex combination cannot have any other equivalence classes than
%the equivalence classes of the generators.
%As it is not clear what happens within (even part of) the equivalence class we have to claim that we 
%can leave the class with positive probability.
%\end{proof}

Given a set $S(s,a)\subseteq \BR^n$, it may be restricted, which is an essential ingredient of our decision algorithm.
\begin{definition}[Restriction of convex sets]
We define 
$$S(s,a)|_{x_i=0}:=\{(x_1,\ldots,x_n) \in S(s,a)| x_i = 0 \}$$
Multiple restrictions can also be realised. Especially, we define a restriction
to a set of ``nn-tangible'' states $S(s,a)|_{\TANGSTATES}$
%\ms{Der Index muesste auch auf nn-tangible geaendert werden, oder?}
 by requiring that $x_i=0$ for all nn-vanishing states $x_i$.
\end{definition}

Note that the restriction of
convex sets is still convex by definition. %We will denote the restriction operator by $|$.

\begin{lemma}
%\label{lem:restrict_elim}
\label{lem:restrict_elim}
Restriction of $S(s,a)$ to the set of nn-tangible states corresponds to
%\ms{``virtual'' eingefuegt}
virtual elimination of nn-vanishing states.
%Restricted sets $S(s,a)|_{\TANGSTATES}$ for an nn-vanishing state $s$ still determine 
\end{lemma}

\begin{proof}
In the elimination procedure we redirect transitions according to the vanishing representation. The vanishing state is no longer reached
and can be removed from the state space (or alternatively: its probability can be restricted to zero).
All other transitions that can be realised from a nn-vanishing state $s$ using other transitions than the one belonging to the vanishing 
representation can be weakly emulated by using the vanishing representation as a first step.
%
%By Lemma \todo{add it!} we know that for every nn-vanishing state $s$ there is some successor distribution $\mu_s$ that is bisimilar to $s$.
%Therefore it is legitime to restrict the probability of $s$ to 0.
The only exception to keep in mind is when considering the sets $S(s,a)$
%for a nn-vanishing state $s$. 
where $s$ itself is nn-vanishing. 
%\ms{letzte 6 Woerter neu formuliert. War es so gemeint?}\js{Ja.}
%There normally $\Delta_s$ must be in $S(s,a)$ (as this is the initial
%\ms{was heisst hier ``initial''?} distribution). 
%\js{habe den alten Satz und Deinen Kommentar weg-kommentiert und schlage diese neue Formulierung vor:}
There normally $\Delta_s$ must be in $S(s,\tau)$ (as of course always $s\stackrel{\tau}{\Rightarrow}\Delta_s$). 
As $s$ is nn-vanishing, there must also be a vanishing
representation 
%leading to 
%other states. Therefore, the state can be left which means probability 0.
%\ms{was sollen die letzten 4 Woerter bedeuten?}
%\js{ab hier neu. Alte Formulierung als Kommentar.}
consisting of
nn-tangible states which uniquely identifies $s$ (cf.~Lemma \ref{lemma:non-vanishing} and Lemma \ref{lem:unique}).
Note that it holds that $S(s,\tau)=CHull(\Delta_s, S(s,\tau)|_{prob(s)=0})$ and therefore 
$S(s,\tau)$ is uniquely determined by the set $S(s,\tau)|_{prob(s)=0}$.
Similar considerations apply also for the case $a\neq \tau$: nn-vanishing states are identified by their vanishing representations consisting of nn-tangible states. 
%It is clear that after leaving a nn-vanishing state $s$ via its vanishing representation, the resulting distribution leads to the 
%same Segala/Cattani sets restricted to tangible states. All other transitions leaving $s$ can be emulated also by taking first the 
%vanishing representation.
%When there is a nn-vanishing initial state, it is clear that its probability may range from 0 to 1 (1 as initial distribution, 0 is possible because
%the state may move its probability to different classes according to the vanishing representation).
%It must be the case that TODO
%\ms{Rest des Absatzes verstehe ich nicht}
%As a side effect, it is impossible to render nn-vanishing and nn-tangible states equivalent: For nn-tangible states, probabilities equal to $1$ are possible,
%while for nn-vanishing states, this is not the case. \js{gemeint ist, dass die W-keit eines nn-tangible Zustandes $s$ in $S(s,\tau)|_{nn-tangible}$ gleich 1 sein wird,
%fuer einen nn-vanishing Zustand muss aber (durch die Einschraenkung) die W-keit gleich 0 sein. Daher koennen diese Zust. separiert werden.} 
%\ms{tut mir leid, hab's immer noch nicht kapiert.}
\end{proof}

%\ms{Mit Hilfe Deiner Kommentare verstehe ich jetzt besser, was in diesem
%Absatz gesagt wird.}
The last missing part for setting up the algorithm is to show how it can be
fitted
%\ms{Formulierung minimal geaendert}
%\js{liest sich gut.}
 to a partition refinement algorithm.
This can be seen in Fig.~\ref{fig:transient_rendering}. 
The underlying idea 
%model \ms{Sorry, ich finde immer noch, dass ``idea''
%gut passt, waehrend ``model'' dem Satz keinen Sinn gibt.}
%\js{Nein. Ich will hier eigentlich nur sagen, dass das die Ersatzvorstellung ist fuer das, was passiert. 
%    Im Prinzip wird hier nochmal auf Lemma \ref{lem:restrict_elim} verwiesen.
%    Durch die Betrachtung als Initialzustand erhaelt man genau die richtigen $S(s,a)$ Mengen -- wie in obigem Lemma beschrieben.}
%\js{Diskussion beendet.} 
is to 
treat every nn-vanishing state in the elimination as if it were an
initial state (i.e.~eliminate it, but leave a transient copy in the transition system).
Note, however, that the algorithm does not perform a real elimination, but only a ``virtual'' elimination (by considering the restricted $S(s,a)$-sets).
%\ms{Vorigen Satz zur besseren Erklaerung eingefuegt.}
%\js{Sehr gut.}
%%%%%%%%%%%%%%%%%%%%%%%%%%%%%%%
%\ms{Im Algo bauen wir das Transitionssystem ja gar nicht um, wir arbeiteten
%aber mit Restriktionen auf die Menge $S_{nn-tangible}$}
%\js{Genau. Aber die Modellvorstellung ist, dass man jeweils virtuell eliminiert. Es entsteht also ein ``Igel'' transitionssystem, 
%    wo sehr viele (naemlich die nn-vanishing)
%    transiente Zustaende dran haengen.}
But now it is clear by Thm.~\ref{th:main} that
the bisimulation problems we have to solve are na\"ive weak bisimulations that can be treated by the Segala/Cattani algorithm. 
For example, in Fig.~\ref{fig:transient_rendering} one trivially sees that $E$ and $F$ are na\"ively weakly bisimilar.
But after a splitting
occurred, we have to verify if all nn-vanishing states are still nn-vanishing with respect to the new partition. This justifies the iterative scheme
%\ms{bisher stand hier ``loop''}\js{OK}
sketched in Fig.~\ref{fig:algo_schematic}.
\begin{figure}
  \centering
  \subfloat[E and F nn-vanishing]{\label{fig:EF_nn}\includegraphics[width=7cm]{EF_nn}} \\
  \subfloat[E eliminated (treating it as if it were a starting state)]{\label{fig:F_nn}\includegraphics[width=5.5cm]{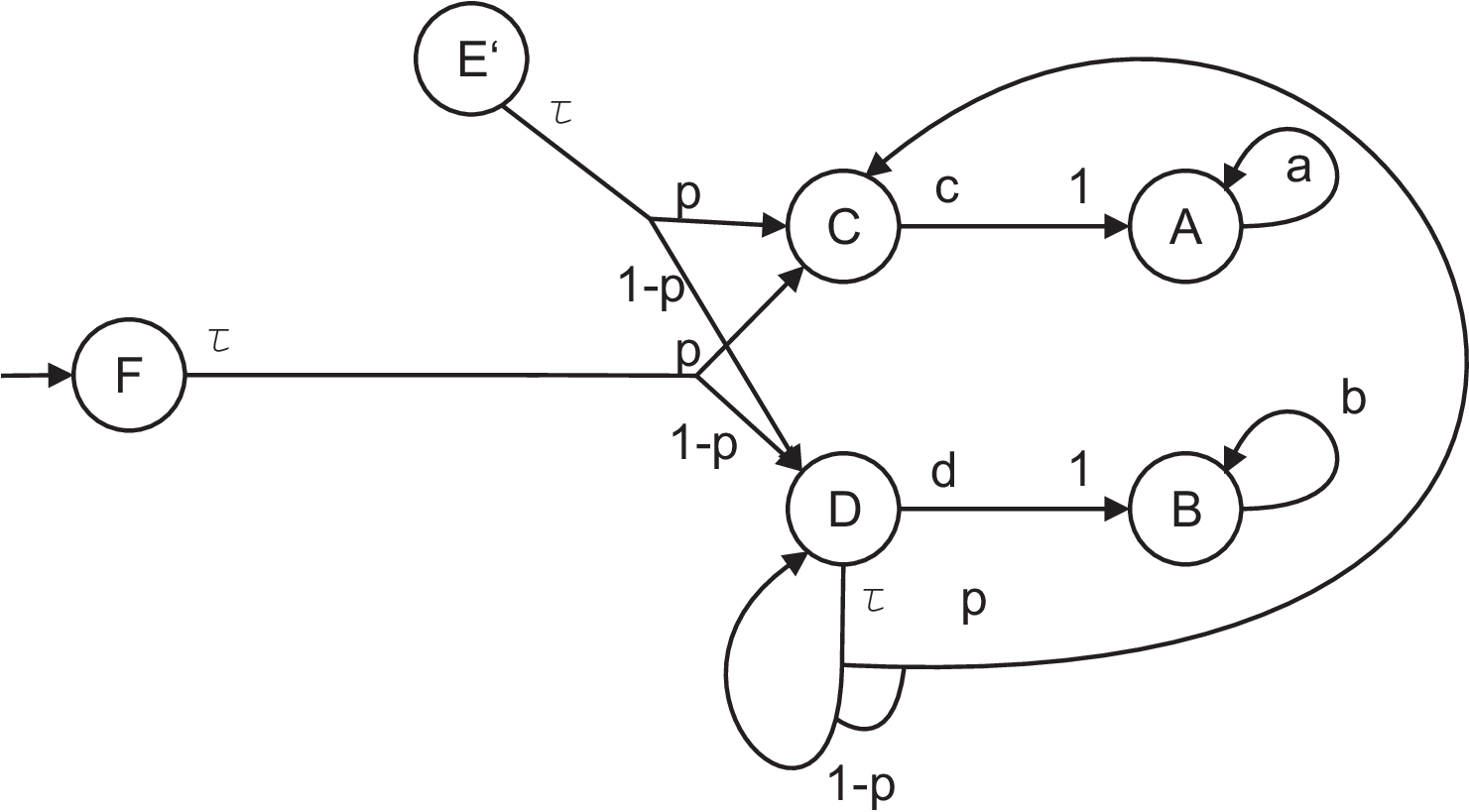}}
  \caption{Rendering nn-vanishing states transient}
% \ms{``nn'' eingefuegt}
  \label{fig:transient_rendering}
\end{figure}

%Our algorithm works according to the partition refinement principle -- looking at restrictions of convex sets -- as follows: 
So the proposed algorithm
for deciding whether two MA are weakly bisimilar
looks as follows:
%We proceed as follows: 
\begin{enumerate}
  \item Start with the initial partition $\WW = \{S_1\uplus S_2\}$.
  \item For all states $s$ and actions $a$ calculate the convex sets $S(s, a)$ (cf.~\cite{segala:02}) 
         and for every (Dirac determinate) weak transition $(s,\tau,\nu)$ calculate $S_{\nu}(s,a)$ 
         ($S_{\nu}(s,a)$ denotes the convex set calculated for the PA $(P_1\uplus P_2)_{(s,\nu)}$, 
         that is the direct sum of automata $P_1$ and $P_2$ where we move to the vanishing
         representation $(s,\nu)$).
%\ms{PA oder MA in diesem Abschnitt?}\js{mir waere generell PA lieber, da wir ja immer die Abbildung MA nach PA nutzen.}
  \item Set $\TANGSTATES=\emptyset$.
  \item For all states $s$ that are not in $\TANGSTATES$
%\ms{erste 2 Bullets leicht umformuliert:}
    \begin{itemize}
       \item Check whether $s$ can leave its equivalence class in $\WW$
% with probability $>0$
by a (Dirac determinate) weak transition $(s,\tau,\nu)$
         such that (modulo $\WW$)\\
 $S(s,a)|_{\TANGSTATES}=S_{\nu}(s,a)|_{\TANGSTATES}$ for all $a \in Act$. %($S_{\nu}(s,a)$ denotes the set calculated for the PA $(P_1\cup P_2)_{(s,\nu)}$).
         This is a \emph{vanishing representation of nn-vanishing state $s$ with respect to $\WW$}.
       \item If no vanishing representation with respect to the current partition $\WW$ can be found, then $s$ must be nn-tangible with respect to $\WW$.
         Add state $s$ to $\TANGSTATES$.
       \item Cross-check all other states if they also become nn-tangible as an effect of $s$ being nn-tangible.
    \end{itemize}
  \item Find a new splitter (in the sense of \cite{segala:02}) with respect to the current partition and the current set of nn-tangible states, 
         i.e.~a tuple $(C,a, \WW)$ which indicates that class $C$ needs to be refined w.r.t.~a weak $a$ transition.
  \item Refine the partition according to the splitter and
start next round at step 3.
%\ms{lezte 3 Woerter eingefuegt}
\end{enumerate}

The algorithm is depicted in Alg.~\ref{alg1}.
%\ms{Gute Nachricht: Ich meine, den Algo zu verstehen und denke, dass er perfekt
%aufgeschrieben ist.}\js{:-)}
By $DiracDet(s,\tau)$ we mean all distributions $\nu$ induced by $s\Rightarrow \nu$ by means of a Dirac determinate scheduler. 
It remains to define the ComputeInfo algorithm, FindWeakSplit algorithm and the Refine algorithm. As the Refine algorithm is standard, we omit
it from this paper. The routine ComputeInfo just calculates the convex sets $S(s,a)$ according to Remark \ref{rem:convex}. The routine FindWeakSplit given in Alg.~\ref{alg2} looks very much like the one given in \cite{segala:02} 
but it only ``sees'' nn-tangible states that 
are provided as an additional parameter to the routine.

\begin{lemma}
%\js{caution: brandnew! This is the key why everything works as it was written down so far}
The algorithm in Alg.~\ref{alg1} calculates the coarsest partition with respect to $\approx_\Delta$.
\end{lemma}

\begin{proof}
The claim follows by the correctness of the na\"ive weak bisimulation algorithm given in \cite{segala:02}.
The only special case to consider is when a nn-vanishing and a nn-tangible state are detected in the \emph{same} class.
%\ms{Verstehe nicht, dass gerade das der ``only special case'' ist.}
%\js{Das haengt wieder mit den Bemerkungen aus Lemma \ref{lem:restrict_elim} zusammen. nn-vanishing und nn-tangible sind verschieden in Bezug auf Ihr Vorkommen in $S(s,a)$}
%\js{Muessen wir hier nochmal diskutieren? Das hatten wir doch am Telefon besprochen.}
This case is not problematic due to the following reasoning: the nn-vanishing state can leave its class 
towards 
%another class
an equivalent
%\ms{``equivalent'' eingefuegt}
%\js{Ist OK.}
 distribution
(i.e.\ a distribution weakly bisimilar to the Dirac distribution on the
%\js{letzte 4 Woerter eingefuegt.}
nn-vanishing state)
%\ms{vorangehende Klammer ist neu}
%\js{OK}
which consists of at least two other classes (cf.\ Lemma~\ref{at_least_two_states}).
%without losing weak bisimilarity,
%\ms{Ist das nicht ein Widerspruch: Leave its class without losing bisimilarity?}
%\js{Er verlaesst die Klasse und geht ueber in eine bisimilare Verteilung. Ist das nun klarer formuliert?}
In contrast, the nn-tangible state either cannot leave its class at all, or it can leave its class but thereby losing bisimilarity.
As the classes are refined and never merged, by the above reasoning nn-vanishing and nn-tangible states cannot be bisimilar and may always be split. 
Therefore the special case that an
nn-tangible state $s$ has $\Delta_s$ in $S(s,\tau)|_{\TANGSTATES}$, whereas -- due to restriction -- for an nn-vanishing state $t$, $\Delta_t$ 
is not in $S(s,\tau)|_{\TANGSTATES}$
is not problematic as it %only
%\ms{was soll eigentlich hier mit ``only'' ausgesagt werden?
%Dass es nicht problematisch ist, sondern sogar wichtig fuer das Funktionieren
%des Algorithmus? Oder kann man ``only'' einfach streichen?}
%\js{gestrichen}
 allows for separating nn-vanishing from nn-tangible states.
%\ms{Ich zweifle zwar nicht an der Korrektheit des Algorithmus, aber
%den Beweis kann ich leider nicht nachvollziehen.}
%\js{Eigentlich muesste man nur den Fall $a\neq \tau$ noch erwaehnen. Da ist die W-keit fuer einen nn-van Zustand bedingt durch die Einschraenkung immer 0, 
%wobei fuer einen nn-tangible Zust die W-keit $>0$ sein muss.}
Similar considerations apply to the case $a=\tau$, which we do not discuss explicitly.
%\js{Letzten Satz eingefuegt.}

%\js{Neu: Hier noch ein Wort zur Terminierung. Bitte pruefen, ob verstaendlich:}
%\ms{Nein, leider ist mir das folgende nicht verstaendlich. Ist auch sehr verwirrend formuliert mit den Klammern und Kommas. Bitte mit einfacheren Saetzen nochmals neu formulieren.}
%When the algorithm terminates, the restriction of successor distributions
%to nn-tangible states leads to na\"ive weak bisimilarity of the states within the same class with respect to the last calculated partition (also for nn-vanishing
%states, cf.~Lemma \ref{lem:restrict_elim}), that is (by Thm.~\ref{th:main}), weak bisimilarity when not ignoring the nn-vanishing successor states.
%\js{Neuer Versuch:}
%When the algorithm terminates, the state space is partitioned into nn-vanishing and nn-tangible states.
%Especially, as pointed out above, there is no class containing both nn-vanishing and nn-tangible states.
Once the algorithm terminates, there is no class containing both nn-vanishing and nn-tangible states. 
%\ms{M.E. reicht der vorige Satz aus. Oder wolltest Du mit Deiner Formulierung noch etwas anderes sagen?}
%\js{Nein, sollte so reichen.}
%\js{habe das jetzt als Zusatz oben eingebaut.}
When restricting to the nn-tangible fraction of successor states (which corresponds by Lemma~\ref{lem:restrict_elim} to elimination of nn-vanishing states), 
the states within one class are na\"ively weakly bisimilar (this follows 
from the algorithm given in \cite{segala:02}). Furthermore, our algorithm calculates for each nn-vanishing state the canonical vanishing representation consisting
of nn-tangible states only (cf.~Lemma~\ref{lemma:non-vanishing}).
Summing up, this means (by Thm.~\ref{th:main}) weak bisimilarity when considering both nn-tangible and nn-vanishing states (i.e.\ without restrictions).
\end{proof}

\begin{algorithm}                     % enter the algorithm environment
\caption{DecideWeakBisim}  % give the algorithm a caption
\label{alg1}                           % and a label for \ref{} commands later in the document
\begin{algorithmic}[1]                    % enter the algorithmic environment
    \REQUIRE Two MA %transformed to 
              as PA $P_1=(S_1, Act_1, \rightarrow_1, \emptyset, s_0)$, $P_2=(S_2, Act_2, \rightarrow_2, \emptyset, t_0)$
    \STATE $S=S_1\uplus S_2$, $\WW = \{S\}$, $Act = Act_1 \cup Act_2$
%    \STATE $Act = Act_1 \cup Act_2$
    \FOR{$s\in S$, $a \in Act$, $\nu \in \mathit{DiracDet}(s,\tau)$}
      \STATE $S(s,a)=\mathit{ComputeInfo}(s,a)$ on $(P_1\uplus P_2)$
%\ms{``on ... '' eingefuegt}
      \STATE $S_{\nu}(s,a)=\mathit{ComputeInfo}(s,a)$ on $(P_1\uplus P_2)_{(s,\nu)}$
    \ENDFOR
    \WHILE{$\WW$ changes}
    \STATE $\TANGSTATES:=\emptyset$
    %\FOR{$s\in S_1\cup S_2$, $a \in Act$}
%      \STATE $\forall s\in S, a\in Act: S(s,a)=ComputeInfo(s,a,\WW)$
    %\ENDFOR
    \WHILE{$\TANGSTATES$ changes} %\COMMENT{Search vanishing representations}
      \FOR{$s \in S\setminus \TANGSTATES$}
%        \STATE \COMMENT{Check for vanishing representations with respect to $\WW$:}   
        \FOR{$\nu \in \mathit{DiracDet}(s,\tau)$ where $\exists x \in Supp(\nu): [x]_{\WW}\neq [s]_\WW$}  % sind verschieden, also Äquivalenzklasse wird verlassen mit Wahrscheinlichkeit p>0 %%% reicht nicht: haben nichts gemeinsam, Äquivalenzklasse wird echt verlassen
%          \STATE $\forall a \in Act : S_{\nu}(s,a)=ComputeInfo(s,a,\WW)$ on $(P_1\cup P_2)_{(s,\nu)}$
          \IF{$\forall a \in Act : (S(s,a)|_{\TANGSTATES})/\WW = (S_{\nu}(s,a)|_{\TANGSTATES})/\WW$}
            \STATE vanishing representation found \textbf{break}
          \ENDIF
        \ENDFOR
        \IF{no vanishing representation found}
          \STATE $\TANGSTATES:=\TANGSTATES \cup \{s\}$
        \ENDIF
      \ENDFOR
    \ENDWHILE
    \STATE $(C,a,\WW)=\mathit{FindWeakSplit}(\TANGSTATES,\WW,S,Act,S(\cdot,\cdot))$
    \STATE $\WW=\mathit{Refine}(C,a,\WW)$
    \ENDWHILE
    \STATE $P_1\approx P_2$ iff $[s_0]_\WW = [t_0]_\WW$
\end{algorithmic}
\end{algorithm}

\begin{algorithm}                      % enter the algorithm environment
\caption{FindWeakSplit (Find weak bisimulation splitter)}  % give the algorithm a caption
\label{alg2}                           % and a label for \ref{} commands later in the document
\begin{algorithmic}[1]
  \REQUIRE nn-tangible
%\ms{``nn'' eingefuegt}
states $\TANGSTATES$, partition $\WW$, states $S$, actions $Act$, Info $S(\cdot,\cdot)$   
%  \FOR{$s\in S$, $a \in Act$}
%    \STATE $S(s,a)=ComputeInfo(s,a,\WW)$
    \FOR{$C_i \in \WW$, $s,t \in C_i$, $a\in Act$}
      \IF{$(S(s,a)|_{\TANGSTATES})/\WW \neq (S(t,a)|_{\TANGSTATES})/\WW$}
         \STATE return $(C_i,a, \WW)$
      \ENDIF
    \ENDFOR
%  \ENDFOR
  %\STATE Split according to $f_{\Omega}(S(x,a)) \neq f_{\Omega}(S(y,a))$ according to $\WW$
\end{algorithmic}
\end{algorithm}

\begin{remark}
The algorithm detects nn-vanishing states and finds their vanishing representations (regarding the most recent partition $\WW$).
Therefore, at the end of the algorithm,
%\ms{letzte 6 Woerter eingefuegt}
we will be able to really (i.e.\ not virtually) eliminate
%\ms{hier ist jetzt eine echte Elim gemeint, keine virtuelle, oder?}\js{Ja, echt.}
 all nn-vanishing states and reach a form where only nn-tangible states are present
%\js{Klammer eingefuegt}
(with the only exception of a nn-vanishing initial state,
 which can only be rendered transient)
and where for every equivalence class only one state is used. 
This can be regarded as a kind of normal form.
\end{remark}

\subsection{Example}
Suppose we are given the MA in Fig.~\ref{fig:v1} (there already transformed to a PA P) where $p,q\in (0,1)$.
% (ignore for the moment the two initial states).
This automaton can be seen as a condensed form of two separate automata (starting with $s_1$ and $t_1$, thus these are indicated as initial states), 
where states $A$ and $B$ 
have been identified (to keep things short -- if there were two copies of A and B: one for the left and one for the right automaton, they would be grouped in the course of the algorithm).
We want to show that $s_1\approx_\De t_1$.

%\begin{figure} 
%  \centering
%  \includegraphics[width=6cm]{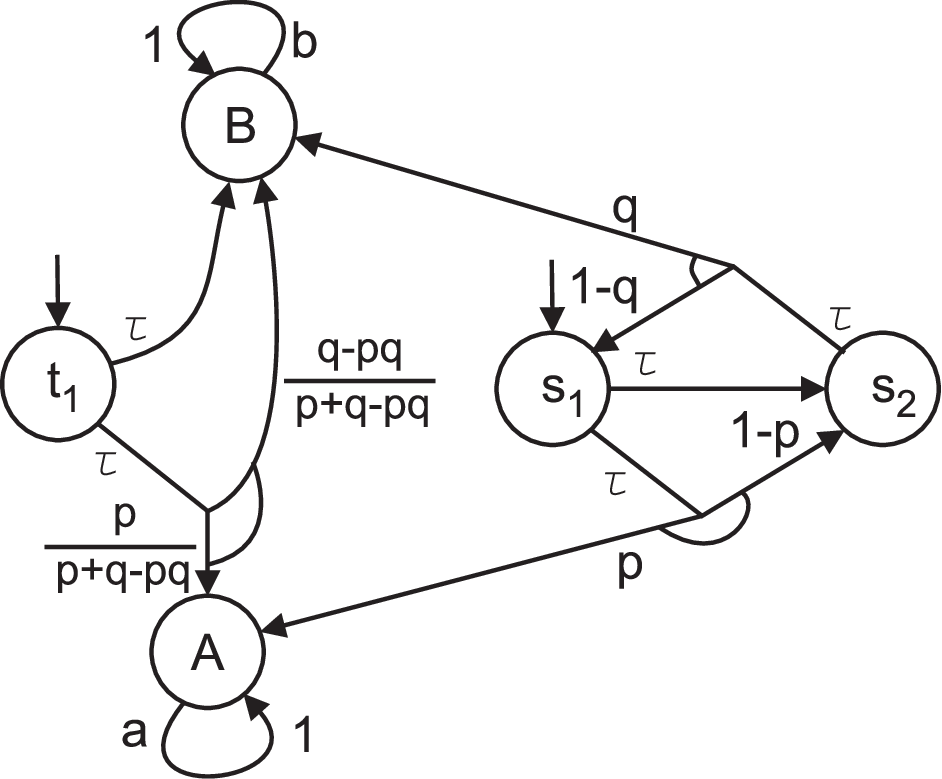}       
%  \caption{Non-trivial example}
%  \label{fig:v1}
%\end{figure}

\begin{figure}
  \centering
  \subfloat[Non-trivial example]{\label{fig:v1}\includegraphics[width=5cm]{example_v1_show}} \\
  \subfloat[$\WW_0$]{\label{fig:part0}\includegraphics[width=2.7cm]{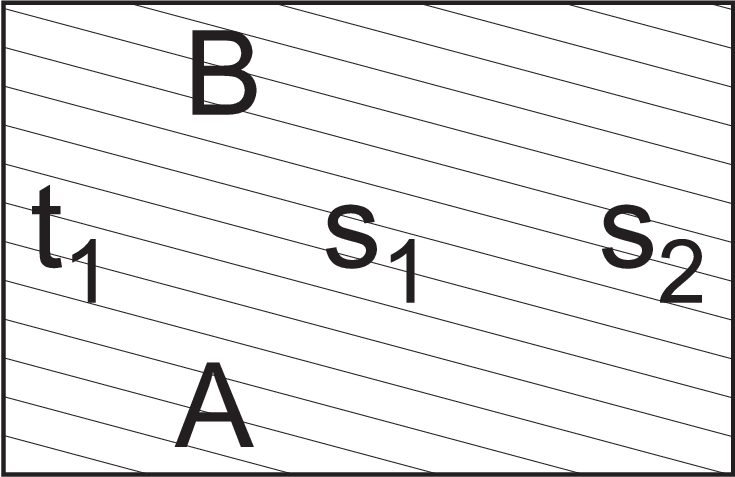}} \qquad 
  \subfloat[$\WW_1$]{\label{fig:part1}\includegraphics[width=2.7cm]{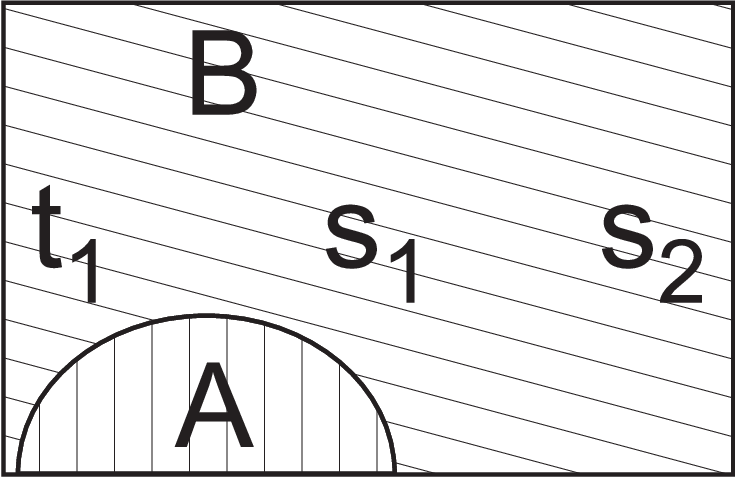}} \qquad              
  %\subfloat[$\WW_2$]{\label{fig:part2}\includegraphics[width=3.5cm]{partition2}} \qquad
  \subfloat[$\WW_2$]{\label{fig:part3}\includegraphics[width=2.7cm]{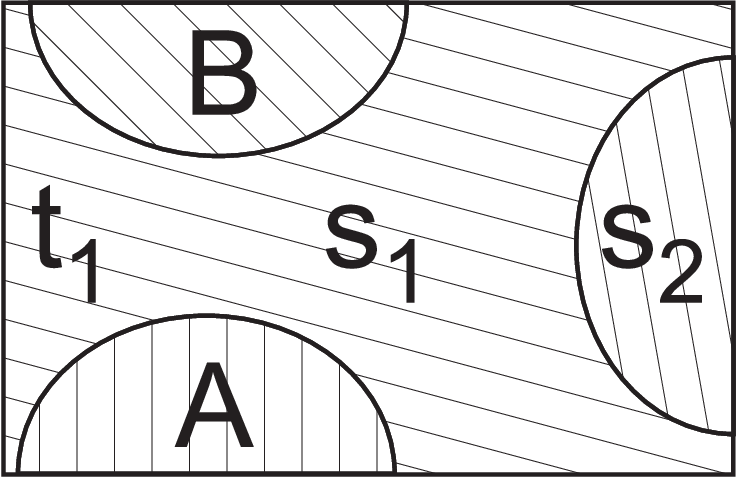}}
  \caption{Example and partitions during algorithm run}
  \label{fig:partitionsbla}
\end{figure}

We assume that $p=q=\frac{1}{2}$, as the pictures are easier to draw in that case, but we would like to stress that the
same arguments work for all other choices (as long as $p$ and $q$ are not equal to $0$ or $1$).

\begin{remark}
In the following graphical representations of the convex sets,
we add dots for every result of a Dirac determinate scheduler (according to \cite{segala:02})
whenever we draw the convex sets of reachable distributions as subsets of $\mathbb{R}^n$. 
Dots that are not extremal points may safely be omitted, as they can be reached as convex combinations of the extremal points.
\end{remark}

%\subsubsection{First round}
{\bf First round:} Start with the partition $\WW_0 = \{\{s_1, s_2, t_1, A, B\} \}$ (cf.~Fig.~\ref{fig:part0}). 
Observe that in the loop from line 9 to line 18 we can never find a vanishing representation of a nn-vanishing state, 
as no state may leave its equivalence class with some probability greater than zero.
Therefore we get $\TANGSTATES=\{s_1, s_2, t_1, A, B\}$.

Now we have to find a splitter with respect to $(\TANGSTATES, \WW_0)$. Suppose that we check the sets $S(\cdot, b)/\WW_0$.
Here we  
%identify $s_1 \equiv s_2 \equiv t_1 \equiv A \equiv B$
%\ms{Warum schreibt man das ``here we identify'' auf? Die Notation $\equiv$
%wird nur an dieser Stelle verwendet -> weglassen.}
% and 
% \js{Habe es weggelassen.}
see that:
$$S(x,b)/\WW_0=    \begin{cases}
             \includegraphics[width=3cm]{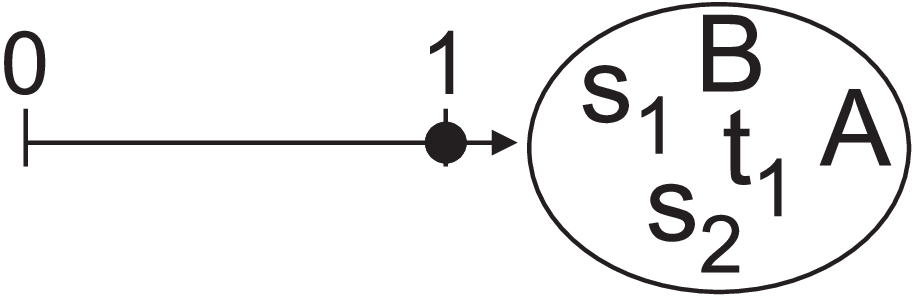} & \text{for }x\in \{B, s_1, s_2, t_1 \} \\
             \emptyset                              & \text{otherwise}
             \end{cases}$$
So we have found a splitter.
Refining according to $(\{s_1, s_2, t_1, A, B\},b,\WW_0)$ leads to $\WW_1 = \{\{s_1, s_2, t_1, B\}, \{A\}\}$ (cf.~Fig.~\ref{fig:part1}).

%\subsubsection{Second round}
{\bf Second round:}
We first have to detect the nn-vanishing states with respect to the current partition. We calculate 
$S(x,\tau)$ for every state, verify if it is possible to reach another equivalence class and see whether one single $\tau$ transition
suffices. The values of $S(x,\tau)$ are given in Tab.~\ref{tab:sxtau}.
\begin{table}
\begin{tabular}{|c|c|c|c|c|c|c|c|} \cline{1-2} \cline{4-5} \cline{7-8}
$x$ & $S(x,\tau)/\WW_1$ & \hspace{0.5cm} & $x$ & $S(x,\tau)/\WW_1$ & \hspace{0.5cm} & $x$ & $S(x,\tau)/\WW_1$ \\ \cline{1-2} \cline{4-5} \cline{7-8} \cline{1-2} \cline{4-5} \cline{7-8}
 & \multirow{4}{*}{\includegraphics[width=3cm]{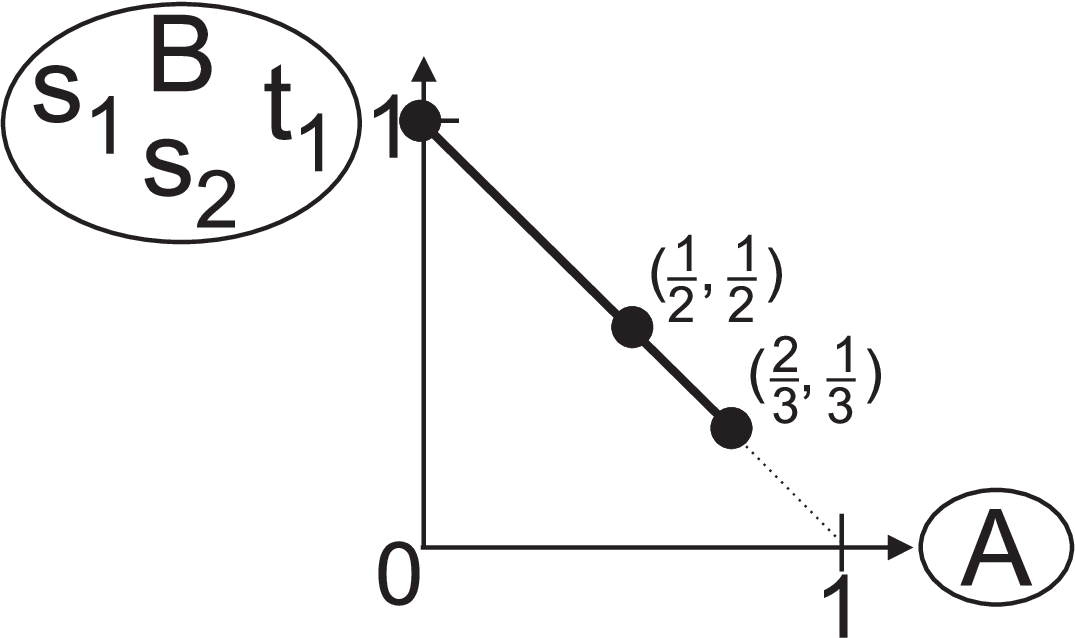}} &  &  & \multirow{4}{*}{\includegraphics[width=3cm]{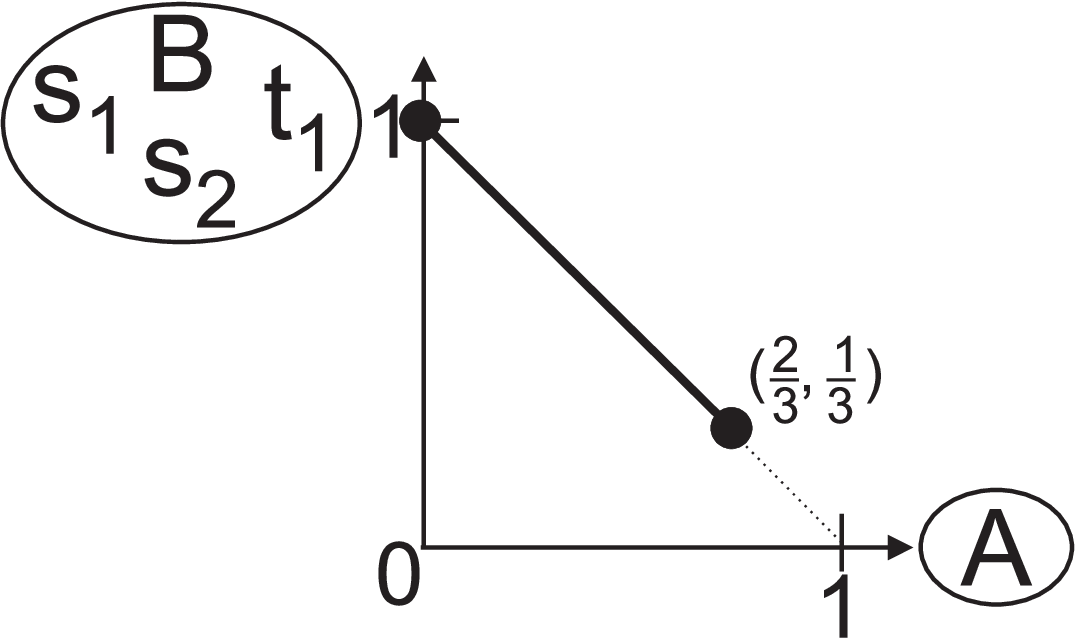}} &  &  & \multirow{4}{*}{\includegraphics[width=3cm]{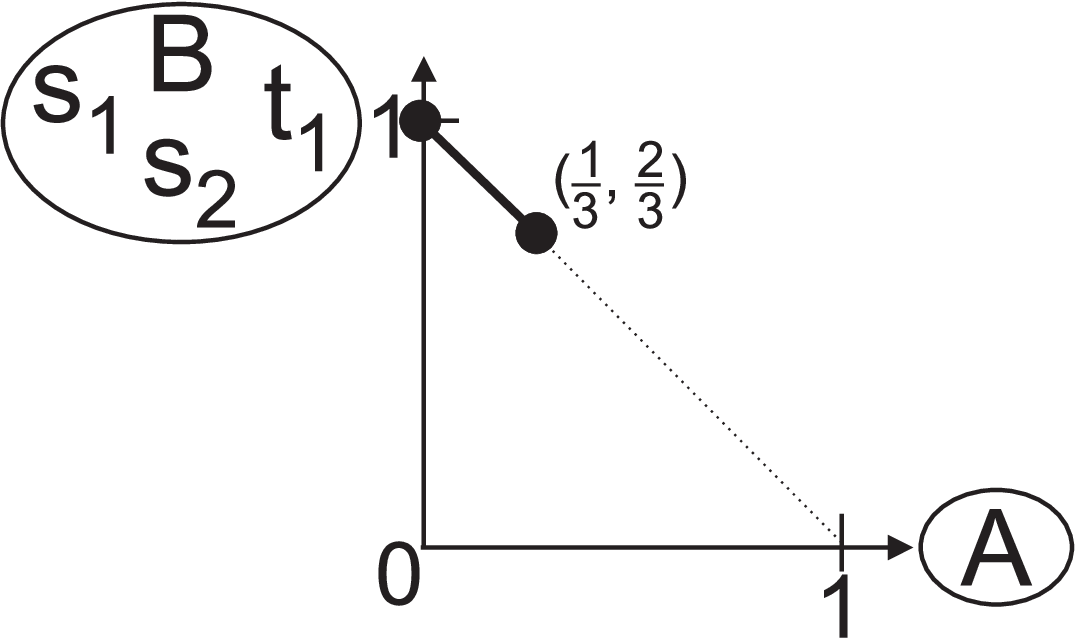}} \\
\qquad &  &  &  &  &  &  &  \\ 
$s_1$ &  &  & $t_1$ &  &  & $s_2$ &  \\ 
\qquad &  &  &  &  &  &  &  \\ 
\qquad &  &  &  &  &  &  &  \\ \cline{1-2} \cline{4-5} \cline{7-8} 
 & \multirow{4}{*}{\includegraphics[width=3cm]{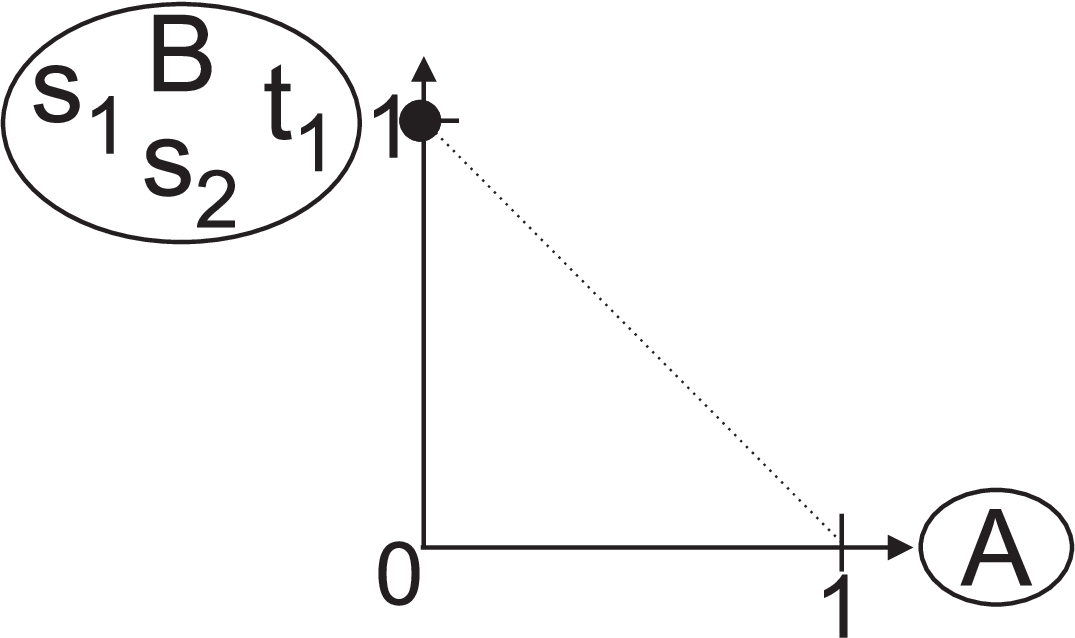}} & &  & \multirow{4}{*}{\includegraphics[width=3cm]{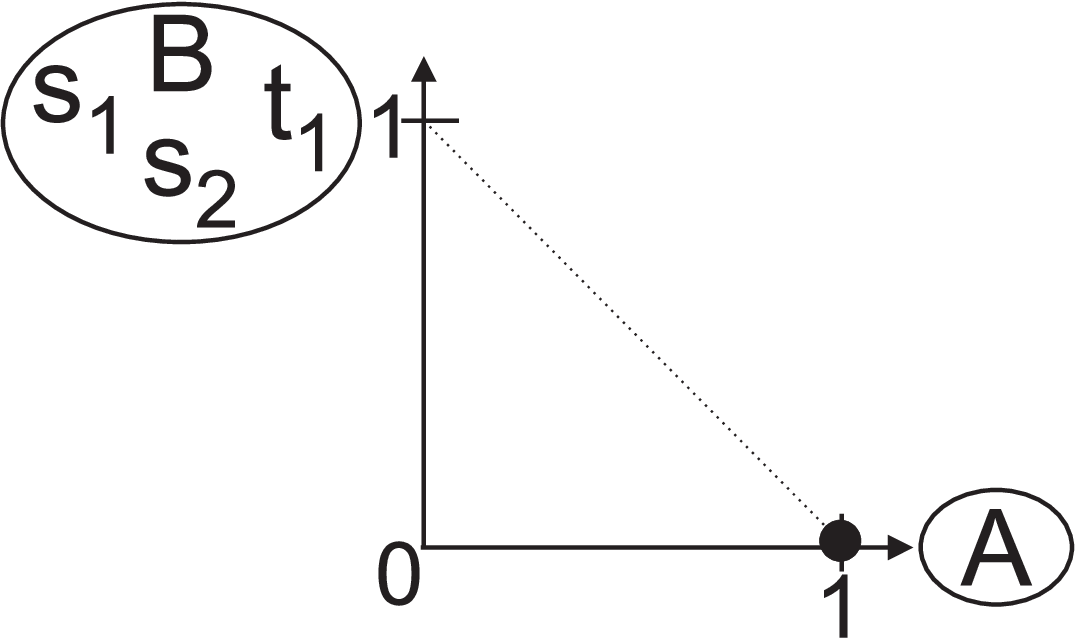}}   \\ 
\qquad &  &  &  &     \\
$B$ &  &  & $A$ &     \\
\qquad &  &  &  &      \\
\qquad &  &  &  &     \\ \cline{1-2} \cline{4-5}
\end{tabular}
\caption{$S(x,\tau)$}
\label{tab:sxtau}
\end{table}
%$$S(x,\tau)/\WW_1=\begin{cases}
%             \includegraphics[width=3cm]{secondset} & \text{for }x=s_1 \\
%             \includegraphics[width=3cm]{thirddset} & \text{for }x=t_1 \\
%             \includegraphics[width=3cm]{fourthset} & \text{for }x=s_2 \\
%             \includegraphics[width=3cm]{fifthset} & \text{for }x=B \\
%             \includegraphics[width=3cm]{sixthset} & \text{for }x=A \\
%             \end{cases}$$
Firstly notice that both $A$ and $B$ cannot be nn-vanishing, as they have no possibility of leaving their equivalence classes.
Notice also, that even if $s_2$ is trivially vanishing, as it has only one single emanating $\tau$ transition, 
we cannot detect it as 
nn-vanishing (the only vanishing representation that leaves the class would be $P_{(s_2,\frac{1}{3}\De_A\oplus \frac{2}{3}\De_B)}$, but 
$S(s_2,b)/\WW_1\neq S_{\frac{1}{3}\De_A\oplus \frac{2}{3}\De_B}(s_2,b)/\WW_1$).
Regarding $s_1$ we see that we cannot omit transition $s_1 \stackrel{\tau}{\rightarrow}{\frac{1}{2}\De_A\oplus \frac{1}{2}\De_{s_2}}$, as 
$S_{(\De_{s_2})}(s_1,\tau)/\WW_1=S(B,\tau)/\WW_1 \neq S(s_1,\tau)$. 
But notice also that $s_1 \stackrel{\tau}{\rightarrow} \De_{s_2}$ cannot be omitted, as $S(s_1,b)/\WW_1=S(B,\tau)/\WW_1$,
but $S_{(\frac{1}{2}\De_A\oplus \frac{1}{2}\De_{s_2})}(s_1,b)/\WW_1=\emptyset$. So we see that $s_1$ cannot be nn-vanishing.
With the same argument we see that also $t_1$ cannot be nn-vanishing. 
Therefore we get $\TANGSTATES=\{s_1, s_2, t_1, A, B\}$.

Now we look for splitters with respect to $(\WW_1, \TANGSTATES)$.
Looking at $C=\{s_1,s_2,t_1,B\}$ we see in routine FindWeakSplit that we can use a splitter $(C,\tau,\WW_1)$ and get the partition
$\WW_2 = \{\{s_1, t_1\}, \{s_2\}, \{A\}, \{B\}\}$ (cf.~Fig.~\ref{fig:part3},
note that $S(s_1,\tau)/\WW_1=S(t_1,\tau)/\WW_1$, as $(\frac{1}{2},\frac{1}{2})$ is \emph{not} a generator of the convex set).

%\subsubsection{Third round}
{\bf Third round:} We first have to detect nn-vanishing states. 
It is clear that $s_2$ must be nn-vanishing as it can leave its class and only has a single
outgoing $\tau$ transition. With the same arguments as above we see that both $s_1$ and $t_1$ must be nn-tangible.
So we get $\TANGSTATES=\{s_1, t_1, A, B\}$.

Now we again can look for splitters, but have to consider the restriction to $\TANGSTATES$.
%To show how the restriction should work 
%Further 
Notice that with coordinates $[s_1]=[t_1]$, $[s_2]$, $[A]$, $[B]$ we have 
%\begin{eqnarray*}
$$S(s_1,\tau)/\WW_2 =CHull(\left(\begin{array}{c}1\\0\\0\\0\end{array}\right), \left(\begin{array}{c}0\\1\\0\\0\end{array}\right), \left(\begin{array}{c}0\\0\\0\\1\end{array}\right), \left(\begin{array}{c}0\\\frac{1}{2}\\\frac{1}{2}\\0\end{array}\right), \left(\begin{array}{c}0\\0\\ \frac{2}{3}\\ \frac{1}{3}\end{array}\right)).$$
%\end{eqnarray*}
We want to calculate the restriction $S(s_1,\tau)|_{\TANGSTATES}/\WW_2$.
Let us for the moment ignore the vertex $[s_1]\in S(s_1,\tau)/\WW_2$. Then we get the picture in Fig.~\ref{fig:van} for $S(s_1,\tau)|_{s_1=0}/\WW_2$.
\begin{figure}
  \centering
  \subfloat[$S(s_1,\tau)|_{s_1=0}/\WW_2$]{\label{fig:van}\includegraphics[width=3.5cm]{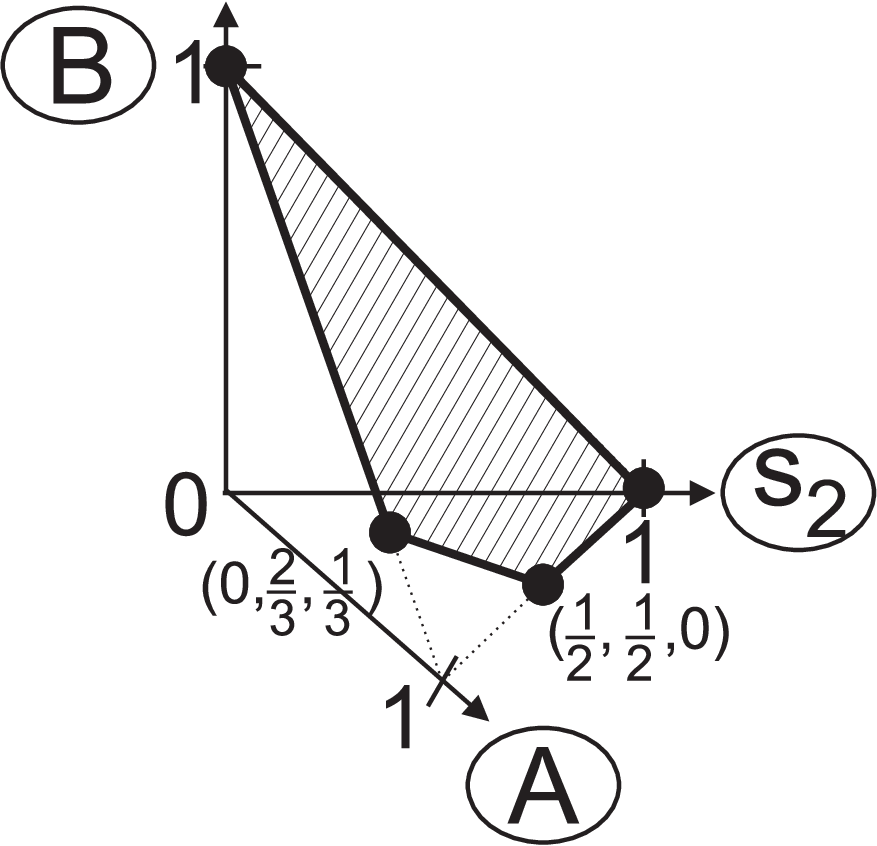}} \qquad
  \subfloat[$S(s_1,\tau)|_{s_1,s_2=0}/\WW_2$]{\label{fig:van3}\includegraphics[width=3.5cm]{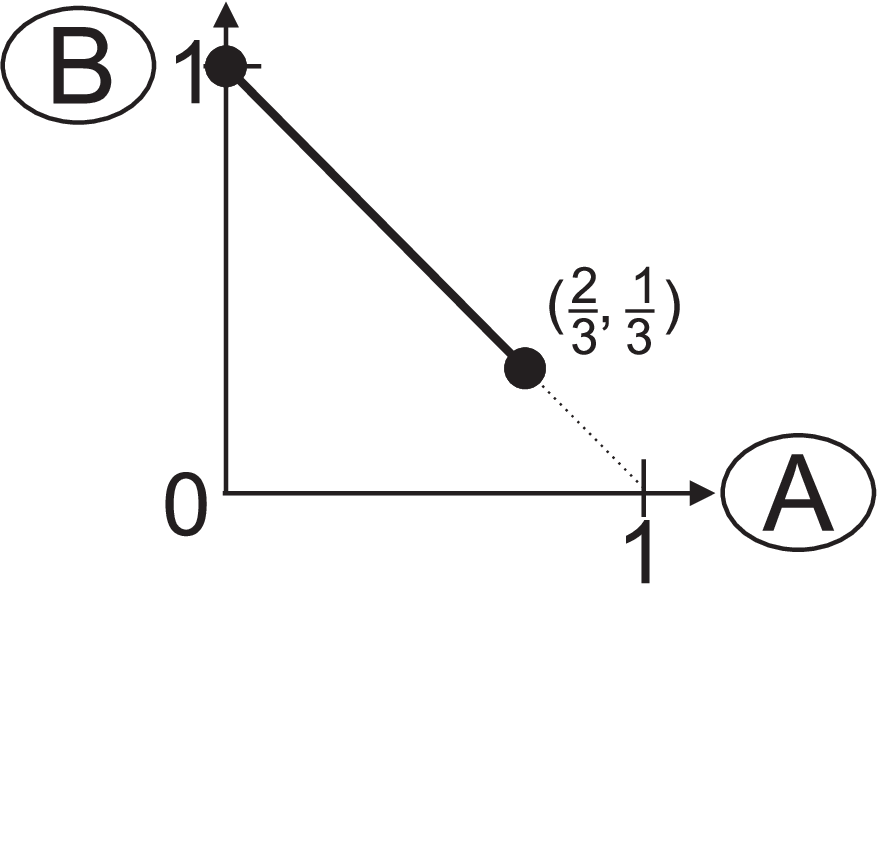}} \qquad 
  \subfloat[$S(s_1,\tau)|_{\TANGSTATES}/\WW_2$]{\label{fig:van2}\includegraphics[width=3.5cm]{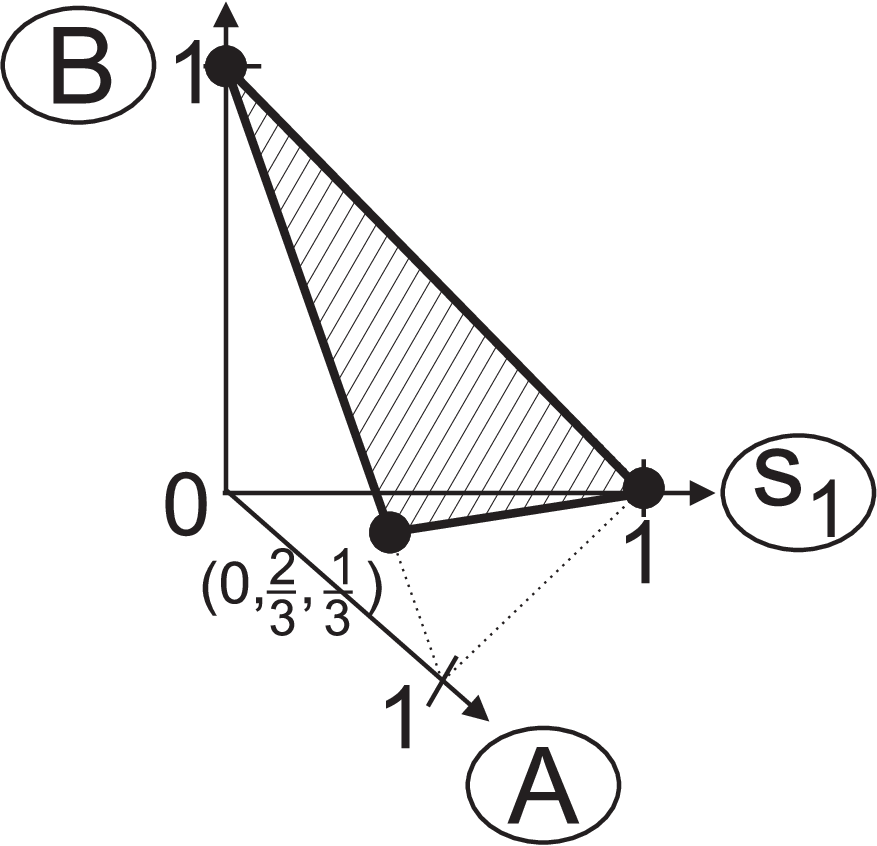}} %\qquad              
  \caption{Convex sets}
  \label{fig:partitionsbla2}
\end{figure}
We see that the restriction of this set to $\TANGSTATES$ gives only the line from $(0,0,1)$ to $(0,\frac{2}{3},\frac{1}{3})$ (cf.~Fig.~\ref{fig:van3}), therefore we conclude
that $S(s_1,\tau)|_{\TANGSTATES}/\WW_2$ is the set given in Fig.~\ref{fig:van2}. We get the same set for $S(t_1,\tau)|_{\TANGSTATES}/\WW_2$
(here, no nn-vanishing state has to be ignored).
Looking at all other sets $S(\cdot, \cdot)$ we find no other splitter, so $\WW_2$ cannot be refined.

With the partition $\WW_2$ and the set of stable states $\TANGSTATES$ we have reached our fixed point, the algorithm
terminates and we see that $s_1$ and $t_1$ are still in the same partition, so they are weakly bisimilar.

\begin{remark}[Optimisations]
A few optimisations can be performed:
\begin{itemize}
  \item In every round states without other $\tau$ transitions than the loop will always be detected as stable, so this set can be separated as a 
         preprocessing step.
  \item All states with only one single outgoing $\tau$ transitions and without non-$\tau$ transitions can safely be eliminated in advance
         (cf.~Lemma \ref{lemma:1}).
\end{itemize}
\end{remark}

\subsection{Complexity considerations}
\label{subsec:Complexity}
Our algorithm needs to compute the sets $S(s,a)$ and $S_{\nu}(s,a)$ for all $s\in S$, $a\in Act$ and all distributions $\nu$ that can be reached from 
$s$ by a $\tau$ transition driven by a Dirac determinate scheduler. The sets have to be considered with respect to certain partitions.
According to Sec.~7 in \cite{segala:02} the problem of finding one of the above sets is already exponential. Further the Dirac determinate schedulers needed
to find nn-vanishing states are also exponentially many, as pointed out in Example 1 of \cite{segala:02}.
The restriction operation to nn-tangible states is negligible, as the generating points of the above sets, that have non-zero probabilities for nn-tangible states,
can simply be omitted to describe the restricted sets.
%
%Even when we would try to apply the result of \cite{turrini:12} for deciding weak MA bisimilarity, we end up at an exponential algorithm, as 
%a priori it is not clear wich states are nn-vanishing (and therefore the vanishing representation is not known), so for nn-vanishing candidates 
%\emph{all} emanating $\tau$ transitions still would have to be considered in order to get automata where the only nn-vanishing state could be the initial
%one. That means the set of transitions would be exponential in contrast to the original automaton. So still the exponential bound applies.

\section{Relation to Deng-Hennessy bisimulation}
\label{sec:deng-hennessy}
Recently an alternative
distribution-based 
bisimulation $\approx_{bis}$ for Markov Automata has been defined \cite{deng-hennessy:2011}.
One key property of $\mu\approx_{bis}\gamma$ is that whenever $\mu \stackrel{a}{\Rightarrow}_C \oplus_{i\in I}p_i \mu_i$ then also
$\gamma \stackrel{a}{\Rightarrow}_C \oplus_{i\in I}p_i \gamma_i$ where $\mu_i\approx_{bis}\gamma_i$ for all $i\in I$ and vice versa.
This assumption can be directly fed into the proof of our Thm.~\ref{th:main} (instead of having to use Lemma 11 from \cite{avacsreport}).
The reduction to Dirac determinate schedulers works similarly as for the 
%Hermanns et al.\ bisimulation,
bisimulation $\approx$,
as we (as well as \cite{tacas:13}) only use 
standard 
%DTMC 
arguments for PA
%\ms{passt das noch (wir reden doch jetzt nicht mehr ueber DTMCs)?}
%\js{Habe mal versucht, das gerade zu biegen.}
 which also apply to the Deng-Hennessy setting.
Therefore we conclude that our approach is also capable of deciding $\approx_{bis}$.

\section{Related work}
\label{sec:saarbruecken_small}
Recently, in \cite{qest:13} an alternative approach has been presented to solve the weak bisimulation problem for MA.
We now sum up the analogies and differences, omitting the proofs. 
In the approach of \cite{qest:13}, MEC contractedness plays a crucial role:
\begin{definition}[Maximal End Components, Definitions 6 and 7 in \cite{qest:13}]
Given a PA $P=(S,Act,T,s_0)$, a \emph{maximal
end component} (mec) is a maximal set $C\subseteq S$ such that for each $s, t \in C: s \stackrel{\tau}{\Rightarrow}_C \Delta_t$ and $t \stackrel{\tau}{\Rightarrow}_C \Delta_s$.
A PA $P=(S,Act,T,s_0)$ is called \emph{mec-contracted}, if for each pair of states $(s,t)\in S\times S$ it holds that 
$(s \stackrel{\tau}{\Rightarrow}_C \Delta_t$ and $t \stackrel{\tau}{\Rightarrow}_C \Delta_s) \Rightarrow s=t$.
\end{definition}

\begin{definition}[Behaviourally pivotal state \cite{qest:13}]
We call a state s behaviourally pivotal, if $s\stackrel{\tau}{\rightarrow} \mu$ implies that $s$ and $\mu$ are
not observation equivalent, i.e. $\Delta_s\not \approx \mu$.
It is not behaviourally pivotal if there exists (at least) one transition $s\stackrel{\tau}{\rightarrow} \mu$
such that $\Delta_s\approx \mu$.
%%Nebensatz aus QEST - der hilft aber auch nicht viel:
%%,i.e., $\mu$ is not able to perform $\mu\stackrel{\tau}{\Rightarrow}_c \rho$ such that $s$ and $\rho$ are observation equivalent
%\footnote{We interpret this sentence as that there exists $t\in Supp(\mu)$ such that $\Delta_s\not \approx_\Delta t$}. 
%\js{At least the example below the definition (in both Saarbruecken papers) suggests that it is meant in the way we interpret it. This definition is then also incomplete, as obviously transient properties are not considered}
%\todo{Observation equivalence means state-based relations? No, }
%%%%%%%%%%%%%%%%%%%%%%%%%%%%%%%%%%%%%%%%%%%%%%%%%%%%%%%%%%%%%%%%%%
%% Das steht in dem Papier, das zu CONCUR eingereicht wurde
%%In other words, the behaviour of $s$ is not preserved by any internal
%%successor distribution.
%%%%%%%%%%%%%%%%%%%%%%%%%%%%%%%%%%%%%%%%%%%%%%%%%%%%%%%%%%%%%%%%%%
%i.e., $\mu$ is not able to perform $\mu\stackrel{\tau}{\Rightarrow}_C \rho$ such that
%$s$ and $\rho$are observation equivalent.
\end{definition}

\begin{lemma}[Relating vanishing states to the definitions of \cite{qest:13}]
On MEC-contracted PA, ``vanishing'' corresponds to ``not behaviourally pivotal'' and
``tangible'' corresponds to ``behaviourally pivotal''.
%\ms{Problem: Begriff ``behaviourally pivotal'' ist in diesem Paper nicht eingefuehrt.}
%\js{Habe ich rausgenommen auf Deinen Wunsch hin -- ist nun wieder drin. Siehe Definition oben.}
\end{lemma}

Note that MEC-contractedness is crucial for this coincidence. Omitting this precondition, the definitions are different:
Even if ``not behaviourally pivotal'' is defined for arbitrary PA, the definition only makes sense for MEC-contracted PA.
Look at the PA given in Fig.~\ref{fig:beh_pivotal_example}. 
Of course $s$ and $s'$ are in the same class with respect to $\approx_\Delta$.
Therefore both $s$ and $s'$ are not behaviourally pivotal, but it does not make sense to think
about ignoring both of them.
In the context of vanishing states we see that $s$ is tangible while $s'$ is trivially vanishing.
The state $s'$ can be safely ignored, that is: eliminated.
\begin{figure}
  \centering
  \includegraphics[width=4cm]{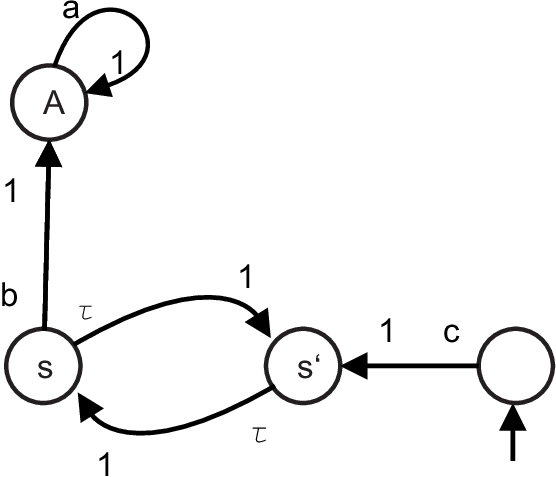}\\
  \caption{Not behaviourally pivotal vs.~vanishing states}
  \label{fig:beh_pivotal_example}
\end{figure}
Note that for MEC-contracted PA this problem doesn't arise.
Our approach has a finer granularity: The set of vanishing states is split into the nn-vanishing and na\"ively vanishing states.
We show in our approach that classes of nn-tangible states (where also na\"ively vanishing states belong to) cannot 
``vanish'', whereas classes of nn-vanishing 
states can ``vanish'' without losing weak bisimilarity.
Lacking this fundamental difference makes the approach of \cite{qest:13} unnecessarily complicated.

Looking at preserving transitions defined in \cite{qest:13}, the picture is similar.
\begin{definition}[Preserving Transitions (adapted from \cite{qest:13})] 
\label{def:preserving}
Let $B$ be an equivalence relation on $S$. A set $P$ of $\tau$-transitions in $T$ is called
preserving with respect to $B$ if for all $(s,\tau,\gamma)\in P$ it holds that
whenever $s\stackrel{a}{\Rightarrow}_C \mu$
%\ms{hier stand faelschlicherweise eine schwache komb. Transition}
%\js{Bemerkung verstehe ich nicht. Es steht ja jetzt eine schwache komb Trans dort.}
then there exist $\mu'$, $\gamma'$
such that $\mu\stackrel{\tau|_P}{\Rightarrow}_C \mu'$ and $\gamma \stackrel{a}{\Rightarrow}_C \gamma'$ and $\mu' \equiv_B \gamma'$.
%\ms{hier $\equiv_R$ in $\equiv_B$ abgeaendert}
%\js{OK. Danke}
Here $|_P$ means that we only use transitions from the set $P$.
\end{definition}

\begin{lemma}[Relating vanishing representations to the definitions of \cite{qest:13}]
On a MEC-contracted PA, let $s$ be a vanishing (i.e.~not behaviourally pivotal) state.
Then a ``vanishing representation'' $s\stackrel{\tau}{\rightarrow}{\mu}$ corresponds to a ``preserving transition''.
\end{lemma}
Note that it is important that we use \emph{strong} transitions, as preserving transitions are defined 
%in this way
as a subset of the set of transitions $T$
%\js{in this way durch die letzten 7 Woerter ersetzt.}
(no weak transitions allowed).
%\ms{ist das jetzt ein Widerspruch zu der Aenderung, die ich eben in Def 20 gemacht habe? Oder was ist damit gemeint?}
%\js{Der Witz is ja, dass vanishing representations aus einem viel groesseren Pool von Transitionen (naemlich die schwach kombinierten)
%schoepfen koennen. Da die preserving Transitions eine Teilmenge der Transitionsmenge sein muessen folgt, dass es 
%fuer vanishing representations nur starke Transitionen sein koennen.}
It can be shown that every vanishing state has such a 
vanishing representation\footnote{When searching for nn-vanishing states, it is not enough to consider only these strong transitions, as example \ref{ex:nn-van_rep} shows.}.
Similar to the definition of not behaviourally pivotal states, also the definition of preserving transitions only makes sense for MEC-contracted PA:
The restriction to the set $P$ of preserving transitions is not required for the $\tau$ 
parts of the weak $a$ transition from $\gamma$ to $\gamma'$ in Definition \ref{def:preserving}. 
%\ms{Was ist mit $\gamma$ und $\gamma'$ gemeint?}
%\js{Das sind die Bezeichnungen aus der Definition, die ich auf Deinen Wunsch hin rausgeworfen habe. Habe die Def. nun oben wieder angefuegt.}
Therefore this definition would render the set of all $\tau$ transitions in 
Fig.~\ref{fig:mec_contractedness} as ``preserving'', but still the $a$ transition 
from $s$ can clearly not be left out, as from $\gamma=\frac{1}{2}\Delta_C\oplus \frac{1}{2}\Delta_D$
still the $a$ transition from $s$ must be used in order to mimick the 
transition to $\frac{1}{2}\Delta_A\oplus \frac{1}{2}\Delta_B$.
\begin{figure}
  \centering
  \includegraphics[width=5cm]{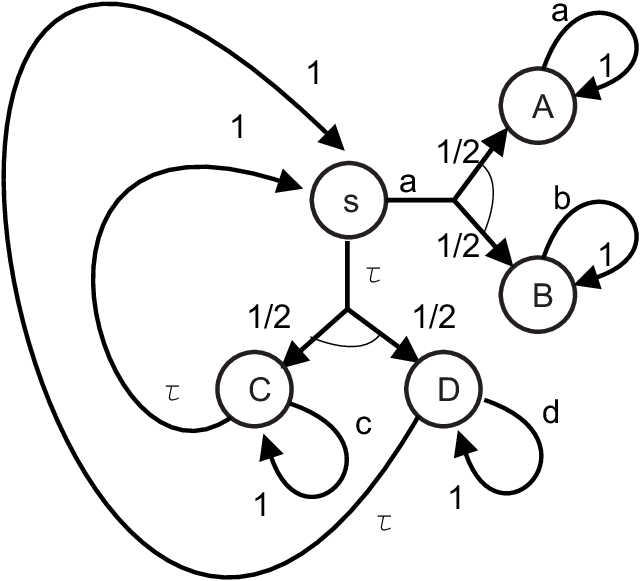}\\
  \caption{A not contracted example}
  \label{fig:mec_contractedness}
\end{figure}
In the context of vanishing states it is clear that there is no vanishing representation for $s$, as then clearly the $a$ transition would get lost.

%In contrast the algorithm in \cite{qest:13}, our approach does not preassume MEC-contracted MAs, it is a general approach.
We collect the following important differences and coincidences from our approach to the approach of \cite{qest:13}:
\begin{compactitem}
%  \item \js{Evtl.~weglassen, da nicht wissenschaftlich, sondern spekulativ!} The results of \cite{qest:13} probably\footnote{This claim cannot be verified as not a single proof has been published in \cite{qest:13}} 
%        rely on the results of \cite{avacsreport}. As we show in \ref{sec:saarbruecken_continuous}, the concept of ``continuous'' states
%        is not suitable to prove Lemma 16 in \cite{avacsreport}, which affects also Thm.~2 in \cite{lics:10,avacsreport}. 
%        Therefore it remains unclear if the work of \cite{qest:13} can be
%        considered as complete.
  \item The results of \cite{qest:13} only apply to MEC-contracted PA/MA, %while our results do not need this restriction. 
        while our approach does not preassume MEC-contracted MAs, it is a general approach.
        % A detailed explanation is
        % given in \ref{appendix:saarbruecken}.
  \item The concept of \emph{preserving} transitions --
which remains rather unspecific
% \js{besser als nebulous?}
%\ms{warum ``nebulous''. Ich meine, die Def. in QEST13 zu verstehen.}\js{Nun ja, der tiefere Sinn (naemlich, dass eine Transition reichen muss), erschliesst sich in 
%dem ganzen QEST Papier nicht. Man hat halt mal etwas definiert...}
 in \cite{qest:13}  and has to be tackled by a brute force attack over all
        possible subsets 
        %\js{please verify this statement!}
        %\js{verified}
        -- is nicely explained by our concept of \emph{vanishing representations} consisting of strong transitions.
        Especially we have shown that it is enough to consider those sets of preserving transitions where each not behaviourally pivotal state has only 
        one emanating preserving transition (this is a direct consequence of our Lemma \ref{lem:dd-is-enough}).
  \item The definitions of \cite{qest:13} only characterise ``vanishing'' states, but do not distinguish between nn-vanishing and na\"ively vanishing states.
        Therefore that work is lacking the main result similar to our Thm.~\ref{thm:nn-def-reformulation} (nn-vanishing states are the missing part when switching
        from state-based to distribution-based bisimulations) and Thm.~\ref{th:main} (after elimination of nn-vanishing states, a weak bisimulation will be
        a na\"ive weak bisimulation).
  \item With our theory we can easily explain the so-called ``pitfalls'' described in \cite{qest:13}, Example 6 and 7. We show that these are no pitfalls at all in the
        context of nn-vanishing states in \ref{appendix:examples}.
  \item Regarding the complexity, even if not explicitly mentioned in \cite{qest:13} (but as a consequence of the broken ``strong challenger characterisation''), 
        all Dirac determinate schedulers have to be considered 
        for deciding weak bisimilarity between two states. In other words this means that also there the sets $S(s,a)$ are constructed.
        Therefore, the approach of \cite{qest:13} lies in the same complexity class as our approach.
\end{compactitem}

\section{Conclusion}
\label{sec:concl}
We have shown that weak and na\"ive weak bisimulation for MA are closely related by an appropriate formulation of elimination and that
the two notations coincide, when no non-na\"ively vanishing states are present.
We have presented an algorithm for deciding weak MA bisimilarity that, as a by-product, finds non-na\"ively vanishing states and their corresponding 
vanishing representations.
This can also be used to define normal forms for MA. Even with the magnificent results of \cite{turrini:12} it remains an open question whether weak MA bisimulation
can be decided in polynomial time.
%A nice result is that by elimination of vanishing states
%weak bisimulation is reduced to the na\"ive weak case. 

\vspace{3ex}
\noindent {\bf Acknowledgements:}
Cordial thanks to Andrea Turrini for giving a beautiful and more readable reformulation of our original
definition of nn-vanishing states \cite{schuster:13} and some interesting discussions on Markov Automata and bisimulations.
%We also would like to thank the anonymous reviewers of our CONCUR'12 submission \cite{schuster:13} that 
%gave valuable hints to improve the readability of the paper.
We would also like to thank
%Cordial thanks to the
the anonymous reviewers of Information and Computation
who indicated problems in the proof of the main Theorem, which finally uncovered a problem in Lemma 16 of \cite{avacsreport}
and lead to our new proofs of the main Theorems that are independent of \cite{avacsreport}.

Deutsche Forschungsgemeinschaft
(DFG) supported this work under grant SI 710/7-1,
and we also acknowledge support by the DFG/NWO Bilateral Research Programme ROCKS.

%% main text
%%\label{}

%% The Appendices part is started with the command \appendix;
%% appendix sections are then done as normal sections
%% \appendix

%% \section{}
%% \label{}

%% References
%%
%% Following citation commands can be used in the body text:
%% Usage of \cite is as follows:
%%   \cite{key}         ==>>  [#]
%%   \cite[chap. 2]{key} ==>> [#, chap. 2]
%%

%% References with bibTeX database:

%\section*{References}

\bibliographystyle{elsarticle-num}
\bibliography{../local}
%\bibliography{local}

\appendix
\section{``Continuous'' vs.~``nn-vanishing'' states}
%\todo{check: Theorem 1 (headline) or Theorem 2 (below)!}
\label{sec:saarbruecken_continuous}

Thm.~2 in \cite{lics:10,avacsreport} relies on Lemma 16 of
\cite{avacsreport}. 
%\js{30.7. Shall we cite this Lemma? Ich meine: 1:1 den Wortlaut des Lemmas abdrucken. Dann wissen die Leute gleich, wovon man redet.}
There, the concept of ``continuous'' states is introduced and used for 
proving a key property of weak bisimulation.
However, this appendix points out a counterexample, thus
\cite[Lemma~16]{avacsreport} and therefore Thm.~2 in \cite{avacsreport,lics:10}
have to be considered as yet unproven.
But in this appendix we also show that with the notion of
nn-vanishing states it is possible to prove 
\cite[Lemma~16]{avacsreport}, thus that the lemma 
 and Thm.~2 in \cite{avacsreport,lics:10} are now known to be indeed correct.

\begin{definition}[Continuous state \cite{avacsreport}]
\label{def:continuous}
$P = (S, Act, T, \emptyset, s_0)$ be a PA.
A state $s\in S$ that has a
transition $s\rightarrow \nu$ where $\Delta_s\approx \nu$
%\ms{habe hier das $\Delta$ eingebaut}
%\js{OK}
, but $\exists t\in Supp(\nu)$ such that $s \not \approx_\Delta t$ 
is called a \emph{continuous} state.
\end{definition}

In order to compare the concepts of ``continuous'' and ``nn-vanishing'' states,
it is convenient to have an alternative characterisation 
of nn-vanishing states.
By Thm.~\ref{thm:nn-def-reformulation} and Lemma \ref{lem:dd-is-enough} we see that we could alternatively define nn-vanishing states
in the following way:

\begin{definition}[nn-vanishing state -- alternative definition to Definition \ref{def:vanishing}]
\label{def:nn-van-alternative}
$P = (S, Act, T, \emptyset, s_0)$ be a PA.
A state $s\in S$ that has a (non-combined!) weak
transition $s\Rightarrow \nu$ where $\Delta_s\approx \nu$
but $\exists t\in Supp(\nu)$ such that $s \not \approx_\Delta t$ 
is called a \emph{nn-vanishing} state.
\end{definition}
%\ms{nochmal ueberlegen: Diese alternative Def fordert ja eine Obermenge
%der Bedingungen von unserem Th.1. Warum machen wir dann diese Def? Wir könnten ja
%zum Vergleich mit [4] einfach direkt Th.1 heranziehen. Dann haetten wir gar keine
%Transition statt einer schwachen Transition, waeren also fuer den Vergleich 
%noch allgemeiner.} 
%\js{Freilich kann man gleich auf Theorem 1 verweisen. Aber ich wollte mich schon auf den Transitions-Standpunkt stellen. Ich finde, dann sind die 
%Beispiele einleuchtender, oder?}

So we see that the set of continuous 
states is in general smaller than the set of nn-vanishing states 
(strong transition $s\stackrel{\tau}{\rightarrow}\nu$ vs.~weak transition $s\stackrel{\tau}{\Rightarrow}\nu$). 
With this knowledge, we can give a simple example that renders the proof of \cite[Lemma~16]{avacsreport}
wrong.

\begin{example}[Counterexample: Weak transitions not considered]
The proof of \cite[Lemma~16]{avacsreport} consists of two steps. The first step constructs canonical transitions that resolve ``continuous'' states 
$\mu\Rightarrow \mu^\ast$ and $\gamma \Rightarrow \gamma^\ast$ (where $\mu^\ast \approx \mu$ and $\gamma^\ast \approx \gamma$)
using Dirac determinate schedulers.
In the second step it is shown that every further transition $\mu^\ast \Rightarrow_C \mu'$ (where $\mu'\approx \mu^\ast$) and 
analogously $\gamma^\ast\Rightarrow \gamma'$ (where $\gamma'\approx \gamma^\ast$)
do not change the equivalence classes.
Assume we are given the automata in Fig.~\ref{fig:cont_van_1}. The states $s$, $s'$ and $t$ are nn-vanishing. 
%By the construction in the proof of Lemma 16 of \cite{avacsreport} we would get
Assume that $\mu=\Delta_s$, $\gamma=\Delta_t$. According to \cite{avacsreport} we see that 
$t$ and $s'$ are continuous while $s$ is not. By the 
construction from \cite{avacsreport} we would get then $\mu^\ast=\mu$ (no state in $\mu$ can change its
equivalence class with a \emph{strong} transition) and $\gamma^\ast=\frac{1}{2}\Delta_A\oplus \frac{1}{2}\Delta_B$.
But now it is trivially wrong that $\mu^\ast$ and $\gamma^\ast$ coincide on classes, as $s$ is clearly nn-vanishing while $A$ and $B$ are not. Therefore
the proof of \cite[Lemma~16]{avacsreport} is incorrect.

\begin{figure}
  \centering
  \subfloat[$s$ and $s'$ nn-vanishing]{\label{fig:E_triv_van2}\includegraphics[height=3cm]{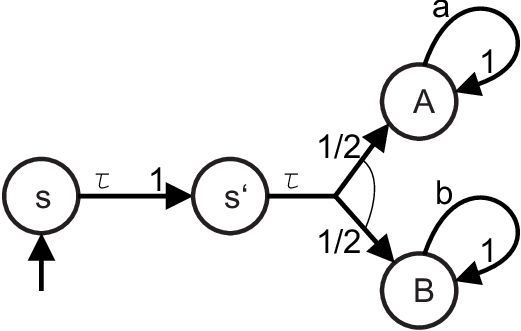}} \qquad
  \subfloat[$t$ nn-vanishing]{\label{fig:E_van2}\includegraphics[height=3cm]{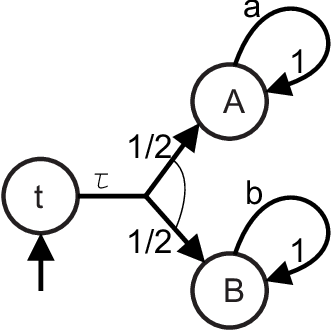}} \\
  \caption{Examples of vanishing states}
  \label{fig:cont_van_1}
\end{figure}
\end{example}

%\ms{also ist der Beweis von Lemma 16 im Avacsreport falsch,
%das Lemma 16 selbst ist aber durchaus richtig.
%Daher folgenden Para umformuliert:}
%Lemma 16 still remains correct in the nn-vanishing context:
Still, \cite[Lemma~16]{avacsreport} remains correct, and its correctness can be proven
in the nn-vanishing context:
We can substitute every nn-vanishing state 
by its canonical vanishing representation consisting of nn-tangible
states, and the resulting distributions are na\"ively weakly bisimilar
(with respect to the lifting of $\approx_{\text{na\"ive}}$ to distributions).
%\ms{Klammer zur Erklaerung eingefuegt}
%\js{OK. ist nun klarer.}
 This is, with the help of our notion of nn-vanishing states,
now proven by Lemma \ref{lemma:non-vanishing} and Thm.~\ref{th:main}.
%\js{letzte 2 Woerter hinzugefuegt. Wir brauchen das Naive-bisim Theorem, um zu sehen, dass die Verteilungen auf Klassen uebereinstimmen!}
The fact that non-combined transitions can be used in this lemma is now proven by our Lemma \ref{lem:dd-is-enough}.
%\js{Diesen ganzen Satz eingefuegt.}

%\ms{Warum bringen wir diesen Appendix?
%Ist er wichtig?
%Man sollte bedenken, dass der Avacsreport nur ein internes,
%eigentlich unveroeffentlichtes Paper ist. Andererseits, Theorem 2
%in [3] (LICS'10) stand damit bisher auf wackligen Beinen, die wir jetzt
%gefestigt haben.} 
%\js{Mir ist er wichtig. Es ist schlechter Stil, wichtige Beweise in einem TR zu verstecken -- die dann nicht mal stimmen!}
%\js{Willst Du nochmal ueber Sinn und Unsinn dieses Appendix diskutieren? Ich wuerde ihn gerne drin lassen. Dieses Problem war ja die Triebfeder von allem, was neu entstanden ist.}
%\ms{Einverstanden, weitere Diskussion hier nicht noetig.}

%\input{comparison_to_saarbruecken}

%\input{qest_examples}
\section{Examples from \cite{qest:13}}
\label{appendix:examples}
%\js{this is informal so far, but I think it's worth taking a look}
%\todo{maybe integrate this in the previous appendix section.}

The following two examples from \cite{qest:13} are called ``pitfalls'' there.
We show why -- in the context of the nn-vanishing state concept -- these pitfalls are no pitfalls at all.

\subsection{Strong challenger characterisation}
\begin{figure}
  \centering
  \includegraphics[width=7.5cm]{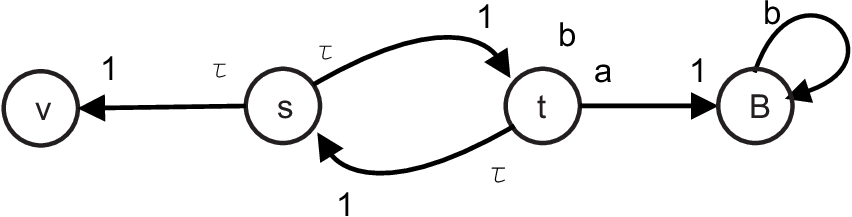}\\
  \caption{Example 6 from \cite{qest:13} -- $s$ and $t$ in one class}
  \label{fig:naively_but_not_mentioned}
\end{figure}
In example 6 of \cite{qest:13} (cf.~Fig.~\ref{fig:naively_but_not_mentioned}) 
it is clear that the preserving-approach fails as long as $v$ remains absorbing:
States $s$ and $t$ belong to one class with respect to $\approx_\Delta$. 
It can be easily verified that both states cannot be vanishing, as no vanishing representation can be found.
As both states are tangible, they actually don't need a special treatment in our algorithm.
Note that ``preserving transitions'' in our understanding are only necessary for nn-vanishing states
and \emph{not} for all vanishing states in order to solve the decision problem. 

\subsection{Brute force attack for ``preserving'' transitions}
%\ms{Habe hier eine weitere Referenz auf's Bild eingefuegt, damit man gleich weiss, worum es geht}\js{OK}
The problem of example 7 of \cite{qest:13},
as exemplified by Fig.~\ref{fig:bulls_eye}, doesn't hit the bull's eye.
The basic question is not ``which transitions can be omitted?'' (or alternatively ``which transitions are preserving?''). 
The first question must rather be
``are states $s$ and $t$ nn-vanishing or not?''.
%\ms{Achtung, habe im vorigen Satz s and t eingefuegt. Pastt doch, oder? Vorher stand da nur the states.}\js{Ja, passt. Allerdings ist auch D nn-vanishing, das wuerde ich hier aber nicht ansprechen.}
If they are nn-tangible, not any transition may be omitted.
If they are nn-vanishing, Lemma~\ref{lem:dd-is-enough} justifies that for both states the \emph{same} transition must be omitted 
(as long as the successor distributions are \emph{not} bisimilar), 
as Fig.~\ref{fig:bulls_eye}
shows (assume that the ``triangle'' (``pentagon'') distribution from \cite{qest:13} corresponds to 
state C (D) in our example. Obviously the automaton is MEC-contracted. Clearly 
states $A$, $B$ and $C$ are not weakly bisimilar. Assume that $p\in (0,1)$. Then D is trivially nn-vanishing.
Further it is clear that the $\tau$ transitions from $s$ and $t$ to $C$ can be omitted, as they may be weakly mimicked.
But then (and only then) we have vanishing representations of states $s$ and $t$.
So we conclude that the statement in \cite{qest:13} that ``Then, clearly, none of the transitions is
preserving'' is in general wrong. It rather depends on the context whether $s$ or $t$ are nn-vanishing or not.
\begin{figure}
  \centering
  \includegraphics[width=4.5cm]{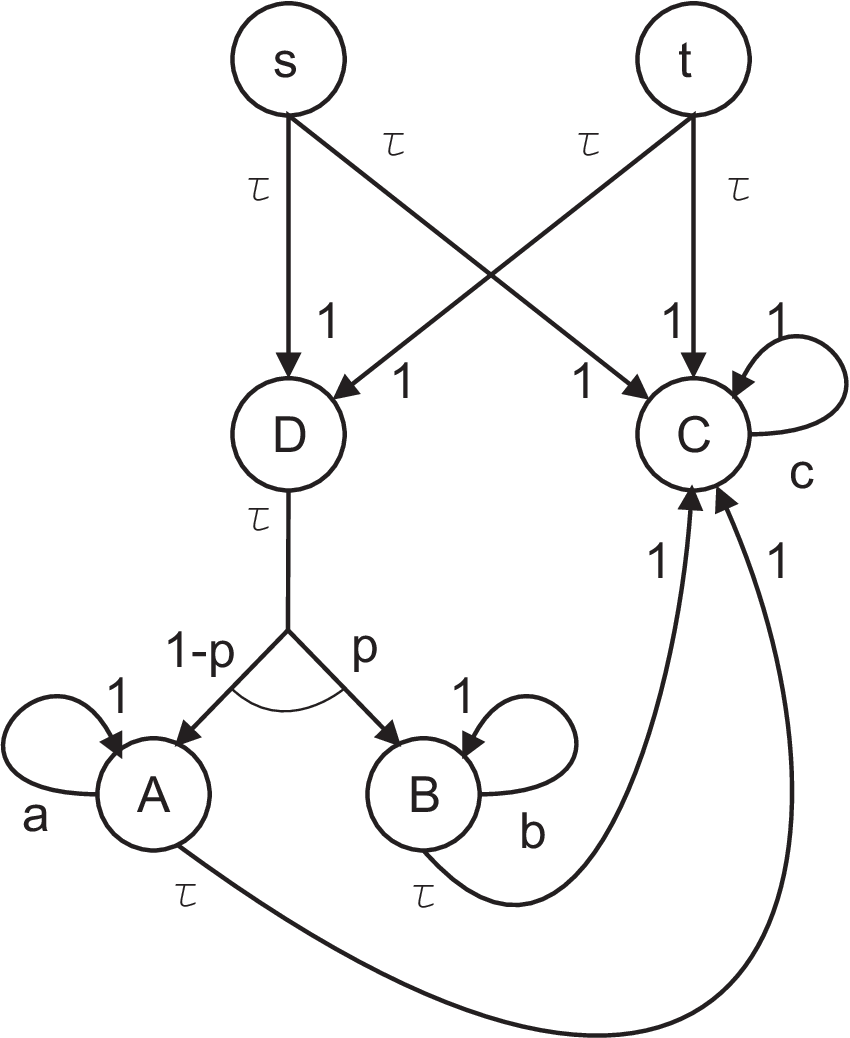}\\
  \caption{An instance of Example 7 from \cite{qest:13} -- nn-vanishing case
}
  \label{fig:bulls_eye}
\end{figure}

%% Authors are advised to submit their bibtex database files. They are
%% requested to list a bibtex style file in the manuscript if they do
%% not want to use elsarticle-num.bst.

%% References without bibTeX database:

% \begin{thebibliography}{00}

%% \bibitem must have the following form:
%%   \bibitem{key}...
%%

% \bibitem{}

% \end{thebibliography}

\end{document}